\ifx\synctex\undefined\else\synctex=1\fi
\documentclass{llncs}
\usepackage{tikz}
\usepackage{listings}
\lstset{language=Java}
\lstset{numbers=left, numberstyle=\tiny, stepnumber=3, numbersep=5pt}
\lstset{identifierstyle=\ttfamily}
\lstset{framexleftmargin=5mm, frame=shadowbox, rulesepcolor=\color{blue}}
\usepackage{todonotes}
\newcommand{\igw}[1]{}{}
\newcommand{\igwin}[1]{}{}
\newcommand{\anca}[2]{}{}

\NeedsTeXFormat{LaTeX2e}

\RequirePackage{amsmath}
\RequirePackage{amsfonts}
\RequirePackage{amssymb}
\RequirePackage{theorem}
\RequirePackage{ifthen}
\ProvidesPackage{logic9}[1995/5/12 My customisation. 2007/6/25 Version
without theorems. 2009/11/10 New version with ams as default]

\DeclareOption{data}{%
\pagestyle{myheadings}
\markright{\dzisiaj, godzina\ \czas\hfill  \ }
}

 \newcommand{\Nat}{\ensuremath{\mathbb{N}}}


\newcommand{\es}{\emptyset}

\newcommand{\incl}{\subseteq}
\newcommand{\fleq}{\preccurlyeq}


\newcommand{\sq}[1]{[#1]}
\newcommand{\di}[1]{\langle #1 \rangle}
\newcommand{\sqq}[1]{\sq{\cdot }}
\newcommand{\ddi}[1]{\di{\cdot }}
\newcommand{\set}[1]{\{#1\}}

\renewcommand\bar[1]{\overline{#1}}

\newcommand{\act}[1]{\stackrel{#1}{\longrightarrow}}


\renewcommand{\a}{\alpha}
\renewcommand{\b}{\beta}
\renewcommand{\d}{\delta}
\newcommand{\e}{\varepsilon}

\newcommand{\g}{\gamma}
\renewcommand{\l}{\lambda}

\renewcommand{\t}{\tau}

\newcommand{\D}{\Delta}

\newcommand{\G}{\Gamma}

\renewcommand{\S}{\Sigma}


\newcommand{\Aa}{\mathcal{A}}

\newcommand{\Cc}{\mathcal{C}}

\newcommand{\Nn}{\mathcal{N}}

\newcommand{\Pp}{\mathcal{P}}

\newcommand{\Ss}{\mathcal{S}}



\newcommand{\PSPACE}{\text{\sc Pspace}}
\newcommand{\PTIME}{\text{\sc Ptime}}
\newcommand{\NP}{\text{\sc NP}}

\newcommand{\EXPTIME}{\text{\sc Exptime}}


\newcommand{\struct}[1]{\langle #1 \rangle}

\def\sqr#1#2{\vbox
 {\hrule height#2
  \mbox{\vrule width#2 height#1 \kern#1 \vrule width#2}%
  \hrule height#2}}

\newcommand{\wh}[1]{\widehat{#1}}

\newcommand\sem[1]{{[\![ #1 ]\!]}}





%
{\end{list}}

{\end{list}}

{\end{list}}

\ProcessOptions

\newcommand{\ignore}[1]{}
\newcommand{\ai}{\mathit{i}}
\newcommand{\ao}{\mathit{o}}
\newcommand{\ar}{\mathit{r}}
\newcommand{\aw}{\mathit{w}}
\newcommand{\abr}{\bar{\mathit{r}}}
\newcommand{\abw}{\bar{\mathit{w}}}
\newcommand{\agr}{\wh{\iota}}
\newcommand{\agw}{\wh{\mathit{o}}}

\newcommand{\platt}{\mathsf{flat}}

\newcommand{\sig}{\mathsf{sig}}

\newcommand{\CDsystem}{$(C,D)$-system}
\newcommand{\CDsystems}{$(C,D)$-systems}

\newcommand{\init}{\mathsf{init}}

\DeclareMathAlphabet{\kw}{\encodingdefault}{\sfdefault}{bx}{n}

\renewcommand{\max}{\textsf{max}}
\newcommand{\spawn}{{\sf spawn}}

\newcommand{\dminus}{\mathbin{\dot{\smash-}}}
\newcommand{\sact}[1]{\stackrel{#1}{\Longrightarrow}}
\newcommand{\lact}[1]{\stackrel{#1}{\dashrightarrow}_L}
\newcommand{\lzact}[1]{\stackrel{#1}{\dashrightarrow}}
\newcommand{\lpact}[1]{\stackrel{#1}{\dashrightarrow}_{L'}}
\newcommand{\lppact}[1]{\stackrel{#1}{\dashrightarrow}_{L''}}

\newcommand{\lIact}[1]{\stackrel{#1}{\dashrightarrow}_{L_i}}
\newcommand{\liact}[1]{\stackrel{#1}{\dashrightarrow}_{L_1}}
\newcommand{\liiact}[1]{\stackrel{#1}{\dashrightarrow}_{L_2}}

\newcommand{\ext}{\mathit{ext}}
\newcommand{\leqtree}{\sqsubseteq}
\newcommand{\sset}{\textsf{set}}
\DeclareMathOperator{\core}{\textsf{core}}

\newcommand{\pref}{\mathsf{pref}}

\newcommand{\mtrees}{\text{\textit{M-trees}}}
\newcommand{\strees}{\text{\textit{S-trees}}}
\newcommand{\Sext}{\Sigma_{\mathit{ext}}}
\newcommand{\Ssp}{\Sigma_{\mathit{sp}}}

\newcommand{\dist}{\mathsf{filter}}
\newcommand{\lift}{\mathsf{lift}}

\newcommand{\Ext}{\mathit{Ext}}

\newcommand{\ssp}{\mathit{sp}}

\newcommand{\DEXPTIME}{\text{\sc DExptime}}

\newcommand{\Consistent}{\mathit{Consistent}}
\newcommand{\lloop}{\!\circlearrowleft\!}
\newcommand{\out}{\mathit{out}}



\begin{document}
 \title{Reachability for dynamic parametric processes}
 \author{Anca Muscholl\inst{1} \and Helmut Seidl\inst{2} \and Igor Walukiewicz\inst{3}}
 \institute{LaBRI, Univ.~Bordeaux and TUM-IAS \and
 	Fakult\"at f\"ur Informatik,
 	TU M\"unchen \and LaBRI, CNRS, Univ.~Bordeaux
 	}
\maketitle

\begin{abstract}
In a dynamic parametric process every subprocess may spawn
arbitrarily many, identical child processes, that may communicate either over
global variables, or over local variables that are shared with their parent.
We show that reachability for dynamic parametric processes is decidable under mild assumptions.
These assumptions are e.g.~met if individual processes are realized by
pushdown systems, or even higher-order pushdown systems.
We also
provide algorithms for subclasses of pushdown dynamic parametric
processes, with complexity ranging between NP and DEXPTIME.
\end{abstract}

\section{Introduction}\label{s:intro}
Programming languages such as {\sf Java, Erlang, Scala} offer the
possibility to generate recursively new threads (or processes, actors,\dots). Threads 
may  exchange data through globally accessible data structures,
e.g.~via static attributes of  classes like in {\sf Java,
  Scala}. In addition,
newly created threads may locally communicate with their parent threads, 
in {\sf Java}, e.g., via the corresponding thread objects, or via
messages like in {\sf Erlang}.


Various attempts have been made to analyze systems with recursion and dynamic creation of threads that may or may not
exchange data.
A single thread executing a possibly recursive program operating on finitely many local data, can
conveniently be modeled by a \emph{pushdown system}. Intuitively, the pushdown formalizes the call stack of the
program while the finite set of states allows to formalize the current program state together with the current
values of the local variables. For such systems reachability of a bad state or a
regular set of bad configurations is decidable \cite{DBLP:conf/cav/Walukiewicz96,DBLP:conf/concur/BouajjaniEM97}.
The situation becomes more intricate if multiple threads are allowed. 
Already for two pushdown threads reachability is undecidable if communication via a 2-bit global is allowed.
In absence of global variables, reachability becomes undecidable already for two pushdown threads if a rendez-vous primitive is
available \cite{DBLP:journals/toplas/Ramalingam00}. A similar result
holds if finitely many locks are allowed 
\cite{DBLP:conf/cav/KahlonIG05}.
Interestingly, decidability is retained if locking is performed in a disciplined way. 
This is, e.g., the case for nested
\cite{DBLP:conf/cav/KahlonIG05} and contextual locking \cite{DBLP:conf/tacas/ChadhaMV12}.
These decidability results have been extended to 
dynamic pushdown networks as introduced by Bouajjani et al. 
\cite{BouajjaniMOT05}.
This model combines pushdown threads with dynamic thread creation by means of a 
\textsf{spawn} operation, while it ignores any exchange of data between threads.
Indeed, reachability of dedicated states or even regular sets of configurations stays decidable
in this model, if finitely many global locks together with nested locking 
\cite{DBLP:conf/sas/LammichM08,DBLP:conf/cav/LammichMW09} or
contextual locking \cite{DBLP:conf/sas/LammichMSW13} are allowed.
Such regular sets allow, e.g., to describe
undesirable situations such as concurrent execution of conflicting operations.
%
%
%

Here, we follow another line of research where models of
multi-threading are sought which allow exchange of data via shared
variables while still being decidable.  The general idea goes back to
Kahlon, who observed that various verification problems become
decidable for multi-pushdown systems that are \emph{parametric}
\cite{DBLP:conf/lics/Kahlon08}, i.e., systems consisting of an
arbitrary number of indistinguishable pushdown threads.  Later, Hague
extended this result by showing that an extra designated leader thread
can be added without sacrificing decidability
\cite{DBLP:conf/fsttcs/Hague11}. All threads communicate here over a
shared, bounded register \emph{without} locking.
It is crucial for decidability that only
one thread has an identity, and that the operations on
the shared variable do not allow to elect a second leader.
Later, Esparza et al.~clarified the complexity of deciding reachability in that model 
\cite{DBLP:journals/jacm/EsparzaGM16}. La Torre et al. generalized these results to 
hierarchically nested models~\cite{DBLP:conf/concur/TorreMW15}.
Still, the question whether reachability is decidable for \emph{dynamically evolving} parametric pushdown processes, remained open. 

We show that reachability is decidable  for a very general class
of dynamic processes with parametric spawn. We require some
very basic properties from the class of transitions systems that
underlies the model, like e.g.~effective non-emptiness check.
\igw{changed description}In our model every sub-process can maintain e.g.~a pushdown store, or even a
higher-order pushdown store, and can communicate over global variables, as
well as via local variables with its sub-processes and with its parent.
As  in
\cite{DBLP:conf/fsttcs/Hague11,DBLP:journals/jacm/EsparzaGM16,DBLP:conf/concur/TorreMW15},
all variables have bounded domains and no locks are allowed.

%
Since the algorithm is rather expensive, we also present meaningful instances where reachability
can be decided by simpler means. As one such instance we consider the situation where communication
between sub-processes is through global variables only. We show that reachability for this model with pushdowns can effectively be reduced 
to reachability in the parametric model of Hague
\cite{DBLP:conf/fsttcs/Hague11,DBLP:journals/jacm/EsparzaGM16}, called
\CDsystems\ ---
giving us a precise characterization of the complexity as \PSPACE. 
As another instance, we consider a parametric variant of \emph{generalized futures} where spawned sub-processes may not only 
return a single result but create a stream of answers.
For that model, we obtain complexities between \NP\ and \DEXPTIME.
This opens the venue to apply e.g.~SAT-solving to check safety
properties of such programs.

\medskip
\noindent
\emph{Overview.} Section~2 provides basic definitions, and the
semantics of our model. In Section~3 we show a simpler semantics,
that is equivalent w.r.t.~reachability. Section~4 introduces some
prerequisites for Section~5, which is the core of the
proof of our main result. Section~6 considers the complexity for some
special instances of dynamic parametric pushdown processes.



\section{Basic definitions}\label{sec:basic-definitions}\label{s:prelim}

In this section we introduce our model of dynamic parametric processes. 
\igw{simplified}We refrain
from using some particular program syntax;  instead we use potentially
infinite state transition systems with actions on transitions. 
Actions may manipulate local or global variables, or
spawn  \emph{parametrically} some sub-processes: this means that an unspecified number of sub-processes is 
created --- all with the same designated initial state. 
Making the  spawn operation parametric is the main abstraction step that
allows us to obtain decidability results.

Before giving formal definitions we present two examples in order to
give an intuitive understanding of the kind of processes we are
interested in.

\begin{example}\label{e:simple}
A parametric system could, e.g., be defined by 
an explicitly given finite transition system:  
\begin{center}
\begin{tikzpicture}[scale=1.5, bend angle=30]
  \node[draw,circle] (i) at (0,0) {$q$};
  \node[draw,circle] (ii) at (1.2,0) {$q_1$};
  \node[draw,circle] (iii) at (2.4,0) {$q_2$};
  \node[draw,circle] (iv) at (3.6,0) {$q_3$};
  \node[draw,circle] (v) at (4.8,0) {$q_4$};
  \node[draw,circle] (vi) at (6,0) {$q_5$};
  \draw[->] (i) -- (ii) node[above,midway]{\small $\spawn(p)$};
  \draw[->] (ii)--(iii) node[above,midway]{\small $\aw(x,1)$};
  \draw[->] (iii)--(iv) node[above,midway]{\small $\ar(x,2)$};
  \draw[->] (iv)--(v) node[above,midway]{\small $\ar(x,3)$};
  \draw[->] (v)--(vi) node[above,midway]{\small $\aw(g_0,\#)$};
%
  \node[draw,circle] (p) at (0,-1) {$p$};
  \node[draw,circle] (pi) at (1.2,-1) {$p_1$};
  \node[draw,circle] (pii) at (2.4,-1) {$p_2$};
  \draw[->] (p) -- (pi) node[above,midway]{\small $\ai(x,1)$};
  \draw[->] (pi) to [out=20,in=160] node[above,midway]{\small $\ao(x,2)$} (pii);
    \draw[->] (pi) to [out=340,in=200] node[below,midway]{\small
      $\ao(x,3)$} (pii);
\draw[->] (pii) to [out=240,in=330] node[above,midway]{\small
  $\tau$} (p);
\end{tikzpicture}
\end{center}

In this example, the root starts in state $q$ by spawning a
number of sub-processes, each starting in state $p$. Then the root
writes the value 1 into the local variable $x$, and waits for some
child to change the value of $x$ first to $2$, and subsequently to $3$.
Only then, the root will write value $\#$ into the global
variable $g_0$.
Every child on the other hand, when starting execution at state $p$, waits for value $1$
in the variable $x$ of the parent and then chooses either to
write $2$ or $3$ into $x$, then returns to the initial state. The read/write operations of the children
are denoted as input/output operations $\ai(x,v), \ao(x,v)$, because
they act on the parent's local.
Note that at least two children are required to write $\#$.
\end{example}

\noindent
More interesting examples require more program states. 
Here, it is convenient to adopt a programming-like notation as in the next example.
\begin{example}\label{ex:prelim}
\begin{figure}[htb]
\begin{lstlisting}
root() {        
   spawn(p);  
   switch (x) {
   case 2:    write(#); 
   }
}

p() {
   switch (parent.x) {
   case 0 :   spawn(p);
              if (*) parent.x = 1 
              else switch (x) {
              case 1 : parent.x = 1; break;
              case 2 : break;
              }; break;
   case 1 :   spawn(p);
              if (*) parent.x = 0
              else switch (x) {
              case 1: parent.x = 2; break;
              case 2: parent.x = 2; break;
               }; 
   }
}
\end{lstlisting}
\caption{\label{f:example}A program defining a dynamic parametric process.}
\end{figure}
Consider the program from Figure~\ref{f:example}. The states of the system correspond to the lines in 
the listing, and $\textsf{if}( *)$ denotes non-deterministic choice.
There is a single global variable which is written to by the call \texttt{write(\#)}, 
and a single local variable $x$ per sub-process, with initial value 0.
The corresponding local of the parent is accessed via the keyword \textsf{parent}.

The question is whether the root can eventually write \texttt\#?
This would be the case if the value of the root's local
variable becomes \texttt{2}. 
This in turn may occur once the variable \texttt{x} of
some descendant is set to \texttt{1}. In order to achieve this,
cooperation of several sub-processes is needed.
Here is one possible execution.
\begin{enumerate}
\item The root spawns two sub-processes in state $p$, say $T_1$ and $T_2$. 
\item $T_1$ changes the value of the local variable of the root to \texttt{1}
  (line 11).
\item  $T_2$ then can take the \texttt{case 1} branch and
first  spawn $T_3$. 
\item $T_3$ takes the \texttt{case 0} branch, spawns
  a new process and changes the value of \texttt{parent.x} to \texttt{1}.
\item As the variable \texttt{parent.x} of $T_3$ is the local variable
  of $T_2$,
  the latter can now take the second branch of the nondeterministic choice and
  change \texttt{parent.x} to \texttt{2} (line 19) --- which is the
  local variable  of the root.
\end{enumerate}
\qed
\end{example}


In the following sections we present a formal definition of our
parametric model, state the reachability problem, and the main results.
This is done in three steps. 
In the first subsection, we introduce the syntax that will be given
in a form of a  transition system, as the one from the first example.
Next, we give the formal operational semantics that
captures the behavior described in the above examples. 
Finally, we formulate general requirements on a class of systems and
state the result saying that for every class satisfying these
requirements, the  reachability problem for the associated dynamic parametric
processes is decidable.

\subsection{Transition systems}
A \emph{dynamic parametric process} $\Ss$ is a transition
system over a dedicated set of action names.
One can think of it as a control flow graph of a program.
In this transition system the action names are uninterpreted. In
Section~\ref{s:multiset} we will define their semantics.
Such a transition system can be obtained by symbolically executing a
program, say, expanding while loops and  procedure calls. 
Another possibility is that the control flow of a program is given
by a pushdown automaton; in this case the transition system
will have configurations of the pushdown automaton as states.

The transition system is specified by a 
tuple $\Ss=\struct{Q,G,X,V,\Delta,q_\init,v_\init}$ consisting of:
\begin{itemize}
\item	a (possibly infinite) set $Q$ of states,
\item	finite sets $G$ and $X$ of global and local variables,
  respectively, and a finite set $V$ of values for variables; these
  are used to define the set of labels,
\item	an initial state $q_\init\in Q$, and  an initial value
  $v_\init\in V$ for variables,
\item	a set of rules $\Delta$ of the form $q\act{a} q'$, 
where the label $a$ is one of the following:
\begin{itemize}
\item $\tau$, that will be later interpreted as a silent action,
\item $\ar(x,v)$, $\aw(x,v)$, will be interpreted as a  read or a
  write of value $v\in V$ from or to 
  a local or global variable $x\in X\cup G$ of the process,
\item $\ai(x,v)$, $\ao(x,v)$, will be interpreted as a read or a write
  of value $v\in V$ to or from 
  a local variable $x\in X$ of the parent process,
\item $\spawn(q)$, will be interpreted as a spawn of an arbitrary
  number (possibly zero) of new sub-processes, all 
  starting in state $q\in Q$. We assume that the number of different
  $\spawn(q)$ operations appearing in $\D$ is finite.
\end{itemize}
\end{itemize}
Observe that the above definition ensures that the set of labels of
transitions is finite.

We are particularly interested in classes of systems when $Q$ is not finite.
This is the case when, for example, individual sub-processes execute
recursive procedures.
For that purpose, the transition system  $\Ss$ may be chosen as a
configuration graph of a \emph{pushdown
  system}. 
In this case the set $Q$ of states is $Q_l\cdot\Gamma^*$
where $Q_l$ is a finite set of  
control states, and $\Gamma$ is a finite set of pushdown symbols. 
The (infinite) transition relation $\Delta$ between states is specified
by a finite set of rewriting rules of the form 
$qv\act{a}q'w$ for suitable $q,q'\in Q_l,v\in\Gamma^*,w\in\Gamma^*$.
%

Instead of plain recursive programs, we could also allow
\emph{higher-order} recursive procedures, realized by higher-order
pushdown systems or even collapsible pushdown systems as considered,
e.g., in \cite{Ong06,HMOS08}.  Here, procedures may take other
procedures as arguments. 

\subsection{Multiset semantics}\label{s:multiset}

A dynamic parametric process is a transition system with labels of a
special form. 
As we have seen from the examples, such a transition system can be
provided either directly, or as the configuration graph of a machine, or as the
flow-graph of a program with procedure calls. 
In this subsection we provide the operational semantics of programs given
by such transition systems, where we interpret the operations on variables 
as expected, and the spawns as creation of sub-processes.
The latter operation will not create one
sub-process, but rather an arbitrary number of sub-processes.
There will be also a set of global variables to which every sub-process has
access by means of reads
and writes.

As a dynamic parametric process executes, sub-processes may change the
values of local and global variables and spawn new children.  The global state of
the entire process can be thus represented as a tree of sub-processes with the
initial process at the root.  Nodes at depth 1 are the sub-processes
spawned by the root; these children can also spawn sub-processes that
become nodes at depth $2$, etc, see e.g.,~Figure~\ref{fig:idea}(a).  
Every sub-process has a set of local variables, that can be read and
written by itself, as well as by its children.

A global state of a dynamic parametric process $\Ss$ has the form of a \emph{multiset
  configuration tree}, or \emph{m-tree} for short. An m-tree is
  defined recursively by 
\[
\begin{array}{lll}
t	&{::=}&	(q,\lambda,M)
\end{array}
\] where $q\in Q$ is a sub-process state, $\lambda:X\to V$ is a
valuation of (local) variables, and $M$
is a \emph{finite multiset} of m-trees.
We consider only m-trees of finite depth.
Another way to say this is to define
m-trees of depth at most $k$, for every $k\in\Nat$
\begin{eqnarray*}
  \mtrees_0	&=&	Q\times (X\to V)\times[]\\
  \mtrees_k	&=&	Q\times (X\to V)\times\mathcal{M}(\mtrees_{k-1})\qquad\mbox{for $k>0$}
\end{eqnarray*}
where for any $U$, $\mathcal{M}(U)$ is the set of all finite
multisubsets of $U$. 
Then the set of all m-trees is given by $\bigcup_{k\in\Nat} \mtrees_k$.

We use standard notation for multisets. A multiset $M$ over a universe $U$
is a mapping $M:U\to {\mathbb N}_0$. It is finite
if $\sum_{t\in U}M(t) <\infty$. A finite multiset $M$ may also be represented by 
$M= [n_1\cdot t_1,\ldots,n_k\cdot t_k]$ if $M(t_i) = n_i$ for $i=1,\ldots,k$ and $M(t)=0$ otherwise.
In particular, the empty multiset is denoted by $[]$.
For convenience we may omit multiplicities $n_i = 1$.
We say that $t\in M$ whenever $M(t)\geq 1$, and
$M\subseteq M'$ whenever $M(t)\leq M'(t)$ for all $t\in M'$. Finally,
$M+M'$ 
is the mapping with $(M+M')(t) = M(t)+M'(t)$ 
for all $t\in U$.
%
For convenience, we also allow the short-cut
$[n_1\cdot t_1,\ldots,n_k\cdot t_k]$ for 
$[n_1\cdot t_1]+\ldots+[n_k\cdot t_k]$, i.e., we allow also multiple occurrences of the same
tree in the list. Thus, e.g., $[3\cdot t_1,5\cdot t_2, 1\cdot t_1] = [4\cdot t_1,5\cdot t_2]$.

The \emph{semantics} of a dynamic parametric process $\Ss$ is a
transition system denoted $\sem{\Ss}$. The states of $\sem{S}$ are m-trees,
and the set of possible edge labels is: 
\begin{eqnarray*}
\S&=&  \{\tau\}\cup \set{\spawn}\times Q \, \cup\\
&&\set{\ai(x,v),\ao(x,v),\ar(y,v),\aw(y,v),\abr(y,v),\abw(y,v): \\
&&\qquad x \in X, \, y\in X\cup G,v\in V} \,.
\end{eqnarray*}
Notice that we have two new kinds of labels $\abr(y,v)$ and $\abw(y,v)$. These
represent the actions of child sub-processes on global variables $y \in G$,
or on the local variables $x \in X$
shared with the parent.

Throughout the paper we will use the notation\label{def:Sext}
\begin{equation*}
  \Sext	= \set{\ai(x,v),\ao(x,v),\ar(g,v),\aw(g,v): 
	x\in X, g\in G, v\in V}
\end{equation*}
\noindent
for the set of so-called \emph{external actions}. They are called
external because they concern either the global variables, or the
local variables of the parent of the sub-process. Words in $\Sext^*$
will describe the \emph{external behaviors} of a sub-process, i.e., the
interactions with the external world.

The initial state is given by $t_\init=(q_\init,\lambda_\init,[])$, 
where $\lambda_\init$ maps all locals to the initial value $v_\init$.
A transition between two states of $\sem{\Ss}$ (m-trees) $t_1\sact{a}_\Ss t_2$ is 
defined by induction on the depth of m-trees.
We will omit the subscript $\Ss$ for better readability.
The definition is given in Figure~\ref{f:semantics}.

\begin{figure*}[t]
\textbf{External transitions:}
  \[
  \begin{array}{lllll}
    (q_1,\lambda,M)&\sact{a}&(q_2,\lambda,M)	&
    \text{  if $q_1\act{a}q_2$}& \text{for $a\in\Sext$}\\
    (q,\lambda,M_1)&\sact{\abr(g,v)}&(q,\lambda,M_2)	&
    \text{  if } M_1\sact{\ar(g,v)} M_2&  \text{for $g\in G$}	\\
    (q,\lambda,M_1)&\sact{\abw(g,v)}&(q,\lambda,M_2)	&
    \text{  if } M_1\sact{\aw(g,v)} M_2&  \text{for $g\in G$}	\\
  \end{array}
  \]
\medskip

 \textbf{Internal transitions:}
 \[
   \begin{array}{llll}
     (q_1,\lambda,M)&\sact{\tau}&(q_2,\lambda,M)    		&\text{if $q_1\act{\tau}q_2$}\\
     (q_1,\lambda,M_1)&\sact{\spawn(p)}& (q_2,\lambda, M_2)    	
 			&\text{if $q_1\act{\spawn(p)}q_2$ and $M_2 = M_1 + [n\cdot(p,\l_\init,[])]$}\;
 				\text{for some}\;m\geq 0 \\
     (q_1,\lambda,M)&\sact{\aw(x,v)}&(q_2,\l',M)    
 			&\text{if $q_1\act{\aw(x,v)}q_2$ and $\lambda'=\lambda[v/x]$}\\
     (q_1,\lambda,M)&\sact{\ar(x,v)}&(q_2,\l,M)    
 			&\text{if $q_1\act{\ar(x,v)}q_2$ and $v=\lambda(x)$}\\
     (q,\lambda,M_1)&\sact{\abr(x,v)}&(q,\lambda,M_2)    
 			&\text{if $M_1\sact{\ai(x,v)}M_2$ and $v=\lambda(x)$}\\
     (q,\lambda,M_1)&\sact{\abw(x,v)}&(q,\lambda',M_2)    
 			&\text{if $M_1\sact{\ao(x,v)}M_2$ and $\lambda'=\lambda[v/x]$}\\
   \end{array}
 \]
 \medskip

Here, we say that
\[
  \begin{array}{lll}
    M_1& \sact{a}& M_2 \quad \text{for } a \in \Sext 
  \end{array}
\]
  if there is a multi-subset $M_1 = M'+ [n_1\cdot t_1,\ldots,n_r\cdot t_r]$ 
  (where the $t_i$ need not necessarily be distinct) and 
  executions $t_i\sact{\a_i a} t'_i$ for $i=1,\ldots,r$ for sequences
  $\a_i \in (\S \setminus \Sext) ^* $ 
  and 
  $M_2=M'+[n_1\cdot t'_1,\ldots,n_r\cdot t'_r]$.    
\caption{\label{f:semantics} Multiset semantics of dynamic
  parametric processes.}
\end{figure*}

\emph{External transitions} (cf.~Figure~\ref{f:semantics})
describe operations on external variables, be they
global or local. 
If the actions come from child sub-processes then for technical convenience
we add a bar to them.
Thanks to adding a bar, a label determines the rule that has been used
for the transition (This is important in Prop.~\ref{p:single-step}).
The values of global variables are 
not part of the program state. Accordingly, these operations therefore
can be considered as unconstrained input/output actions.

\emph{Internal} transitions may silently change the current state, spawn new sub-processes or update or read the topmost local variables 
of the process. The expression $\lambda[v/x]$ denotes the function $\lambda':X\to V$ defined by 
$\lambda'(x') = \lambda(x')$ for $x'\neq x$ and $\lambda'(x) = v$.
In the case of \textsf{spawn}, 
the initial state of the new sub-processes is given by the argument,
while the fresh local variables are initialized with the default value.
In the last two cases (cf.~Figure~\ref{f:semantics}) the external actions $\ai(x,v),\ao(x,v)$ of the child sub-processes
get relabeled as the corresponding internal actions
$\abr(x,v),\abw(x,v)$ on the local variables of the parent.

We write  $t_1\sact{\alpha} t_2$ for a sequence of transitions
complying with the sequence $\alpha$ of action labels. 
We have chosen the option to allow several child sub-processes to move in one step. 
While this makes the definition slightly more complicated, it simplifies
some arguments later.
Observe that the semantics makes the actions (labels) at the
top level explicit, while the actions of child sub-processes are  explicit only
if they refer to globals or affect the local variables of the parent.

\ignore{
\begin{remark}
  It is not obvious how to simulate multiple local variables 
  by a single variable. Observe that it would
  not be correct to simply take one variable whose value is a tuple of
  values. 
  Such a coding would not allow to implement a write operation where
  only one component is changed. 
  For this we would need something like test-and-set operation, but
  if we allowed this operation all the questions we consider
  in this paper would become undecidable.
  A solution is to keep the value of multiple local variables in a
  state of the sub-process and to use register to pass read/write commands
  as well as their acknowledgments from sub-processes. 
  Unfortunately, even though this protocol is intuitively easy, the
  proof of its correctness is not. 
  Since we do not need this result here, we do not present this
  reduction.
  The same remark applies to global variables.
\end{remark}
}

\subsection{Problem statement and main result}
In this section we define the reachability problem 
and state our main theorem: it says that the reachability problem is
decidable for dynamic parametric processes built upon an
\emph{admissible} class of systems\igw{changed}.  
The notion of admissible class will be 
introduced later in this section.
Before we do so, we introduce a
\emph{consistency} requirement for runs of parametric processes.  
In our semantics we have chosen not to constrain the operations on
global variables. Their values are not stored in the overall state. 
At some moment, though, we must require that sequences of read/write
actions on some global variable $y \in G$
 can indeed be realized via reading from and
writing to $y$.

\begin{definition}[Consistency]\label{def:consistency} Let $y\in G$ be
  a global variable. A sequence  $\a \in \Sext^*$ is
  \emph{$y$-consistent} if in the projection of $\a$ on operations on
  $y$, every read action $\ar(y,v)$ or $\abr(y,v)$ which is not the
  first operation on $y$ in $\a$ is immediately preceded either by
  $\ar(y,v),\abr(y,v)$ or by $\aw(y,v)$ or $\abw(y,v)$.  The first
  operation on $y$ in
  $\a$ can be either $\ar(y,v_\init),\abr(y,v_\init)$ or
  $\aw(y,v),\abw(y,v)$ for some $v$.
 
A sequence $\a$ is \emph{consistent} if it is $y$-consistent for every
variable $y\in G$. 
Let $\Consistent$ be the set of all consistent sequences.  
As we assume both $G$ and $V$ to be finite, this is a regular language.
\end{definition}

\noindent
Our goal is to decide reachability for dynamic
parametric processes.

\begin{definition}[Consistent run, reachability] A \emph{run}
of a dynamic parametric process $\Ss$ is a path in $\sem{S}$ starting in the initial
state, i.e., a sequence $\a$ such that $t_\init\sact{\alpha}_\Ss t$
holds.  If $\a$ is consistent, it is called a \emph{consistent run}.

The \emph{reachability problem} is to decide if for a given $\Ss$,  there is  a
consistent run of $\sem{\Ss}$ containing an external write or an
output action 
of some  distinguished value $\#$. 
\end{definition}

Our definition of reachability talks about a particular value of some variable,
and not about a particular state of the process. 
This choice is common, e.g., reaching a bad state may be
simulated by writing a 
particular value, that is only possible from bad states.
The definition admits not only external writes but also output actions because
we will also consider processes without external writes.

We cannot expect the reachability problem to be decidable 
without any restriction on  $\Ss$.
\igw{changed} Instead of considering a particular class of dynamic parametric
processes, like those  build upon pushdown systems, we will formulate mild
conditions on a class of 
such systems that turn out to be sufficient for deciding the
reachability problem.  
These conditions will be satisfied by the class of
pushdown systems, that is our primary motivation.
Still we prefer this more abstract approach for two reasons.
First, it simplifies notations.
Second, it makes our results applicable to other cases as, for example,
configuration graphs of higher-order pushdown systems with collapse.

In order to formulate our conditions, we require the notion of \emph{automata},
with possibly infinitely many states. 
An \emph{automaton} is a tuple:
\begin{equation*}
  \Aa=\struct{Q,\S,\D\incl Q\times\S\times Q, F\incl Q}
\end{equation*}
where $Q$ is a set of states, $\S$ is a finite alphabet, $\D$ is a transition
relation, and $F$ is a set of accepting states. Observe that we do not
single out an initial state.
Apart from the alphabet, all other components may be infinite sets.

We now define what it means for a class of automata to have
sufficiently good decidability and closure properties. 

\begin{definition}[Admissible class of automata]\label{def:admissible}
  We call a class $\Cc$ of automata \emph{admissible} if it has the
  following properties: 
  \begin{itemize}
    \item \emph{Constructively decidable emptiness check}:
      For every automaton $\Aa$ from $\Cc$ and every state $q$
      of $\Aa$, it is decidable if $\Aa$ has some path from $q$ to an
      accepting state, and if the answer is positive then the sequence of labels of
      one such path can be computed.
      \item \emph{Alphabet extension}: There is an effective
construction that given an automaton $\Aa$ from $\Cc$, and an
alphabet $\G$ disjoint from the alphabet of $\Aa$, produces the
automaton $\Aa\lloop\G$ that is obtained from $\Aa$ by adding a
self-loop on every state of $\Aa$ on every letter from $\G$. Moreover,
$\Aa\lloop\G$ also belongs to $\Cc$.

\item \emph{Synchronized product with
    finite-state systems}: There is an algorithm that from a given
  automaton $\Aa$ from $\Cc$ and a finite-state automaton
  $\Aa'$ over the same alphabet, constructs the \emph{synchronous
    product} $\Aa\times \Aa'$, that belongs to $\Cc$, too. The states of the product are pairs of
  states of $\Aa$ and $\Aa'$; there is a transition on some letter from
  such a pair if there is one from both states in the pair.
  A pair of states $(q,q')$ is accepting in the synchronous product iff 
  $q$ is an accepting state of $\Aa$ and $q'$ is an accepting state of $\Aa'$.
   \end{itemize}
\end{definition}

\noindent
There are many examples of admissible classes of automata.
The simplest is the class of finite automata\igw{changed}. 
Other examples are
(configuration graphs of)  pushdown automata, higher-order pushdown
automata with collapse, VASS with action labels, communicating
automata, etc. 


Given a dynamic parametric process $\Ss$, we obtain an automaton
$\Aa_\Ss$ by\igw{changed}\ 
declaring all states
final. That is, given the transition system
$\Ss=\struct{Q,G,X,V,\Delta,q_\init,v_\init}$ we set $\Aa_\Ss=\struct{Q,\S_{G,X,V},\Delta,Q}$, where $\S_{G,X,V}$ is
the alphabet of actions appearing in $\Delta$. The automaton $\Aa_\Ss$
is referred to as the \emph{associated automaton} of $\Ss$.
%
The  main result of this paper is:

\begin{theorem}\label{thm:main}
  Let $\Cc$ be an admissible class of automata. The reachability problem
  for  dynamic parametric processes with associated automata
  in $\Cc$,
  is decidable.
\end{theorem}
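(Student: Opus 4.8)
The plan is to reduce the reachability problem, in several stages, to a single non-emptiness query for an automaton built \emph{inside} the class $\Cc$, invoking precisely the three closure properties of Definition~\ref{def:admissible}. The first stage is to replace the multiset semantics of Section~\ref{s:multiset} by a simpler, set-based one (this is carried out in Section~3) and to show it is equivalent for reachability: since a $\spawn$ creates an arbitrary number of \emph{identical} children, the multiplicity of a subtree inside a child-multiset plays no role --- any witnessing copy may be duplicated --- so a node's children may as well be recorded by the \emph{set} of external behaviours, words over the (finite) action alphabet, that the subtrees hanging below it can perform, arbitrarily interleaved. The bar-convention in $\sem{\Ss}$, under which a label determines the rule generating it, is what lets one later reconstruct an actual run from an abstract witness.

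The core of the argument is a bottom-up characterisation of what a subtree can do, as seen by its parent. For a state $q$ let $L_q$ be the set of consistent external behaviours realisable by a subtree rooted at $q$. Only finitely many states are ever roots of spawned subtrees --- namely $q_\init$ together with the finitely many spawn targets appearing in $\D$ --- so only finitely many $L_q$ are relevant, and they form the least solution of a mutually recursive system of language equations: a word lies in $L_q$ exactly when it is read off a path of the associated automaton $\Aa_\Ss$ from $q$ by keeping the node's own external actions, checking (finite-state) consistency of the node's local variables along the way, and replacing every $\spawn(p)$ by an interleaving of arbitrarily many words drawn from $L_p$, with the child actions $\ai(x,v),\ao(x,v)$ reinterpreted as (barred) accesses to the node's now-local variables and all other child actions passed through. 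I would prove this system faithfully captures the runs of $\sem{\Ss}$, so that the problem becomes: does $L_{q_\init}$, intersected with $\Consistent$, contain a word featuring an external write or output of the distinguished value $\#$?

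To solve the system effectively I would realise each $L_q$ by an automaton of $\Cc$, starting from $\Aa_\Ss$: the alphabet-extension construction $\Aa\lloop\G$ provides the self-loops that let globals and irrelevant child actions pass through a node unchanged; the synchronised product with a finite automaton both enforces local-variable consistency at the node and splices in, from an already-constructed automaton for the $L_p$'s, an unbounded interleaved pool of child behaviours; iterating over the finite set of relevant states reaches a fixed point. One final finite-state product then restricts $L_{q_\init}$ to $\Consistent$ and to words mentioning a $\#$-action, and the constructive emptiness check decides non-emptiness of the result --- and, being constructive, returns a concrete witnessing run, which proves the theorem.

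The main obstacle is the splicing step: the children's joint contribution is an \emph{unbounded shuffle} of words from $L_p$, and shuffle does not preserve regularity, so one cannot naively take a shuffle closure and remain in $\Cc$. The way out is that everything a node exchanges with its children flows through its \emph{bounded} local variables and the bounded set of globals, so only a bounded amount of interface data is needed at each step; converting this observation into a finite-state gadget that faithfully simulates an arbitrary pool of children over such a bounded shared interface --- together with the supporting machinery (the set-tree constructions and flattening/lifting maps, and the argument that the finitely many equations stabilise after finitely many rounds so the iteration terminates) --- is the technical heart of the paper, developed in Sections~4 and~5.
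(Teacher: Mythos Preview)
Your high-level architecture --- set semantics, level-by-level summarisation of child behaviours, a final product with a finite automaton for $\Consistent$ and for the $\#$-action --- matches the paper's. But the technical heart is missing, and the gap is precisely where you yourself flag ``the main obstacle'': you correctly see that an unbounded shuffle of child words is not regular, and you claim the bounded interface saves you, but as stated it does not.

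Concretely, two steps are unjustified. First, the ``finite-state gadget that faithfully simulates an arbitrary pool of children'': a child's contribution is a word in $L_p\subseteq\Sext^*$, and although each letter lives in a finite alphabet, the set of \emph{prefixes of $L_p$} that the parent must track (to know which next child action is legal) is in general infinite. The admissible-class product is only with \emph{finite-state} automata, so you may not take a product with an $\Cc$-automaton (pushdown, say) for $L_p$ or its shuffle closure. Second, termination: having ``finitely many equations'' (one per spawn target) does not make the approximants $L_p^{(0)}\subseteq L_p^{(1)}\subseteq\cdots$ stabilise; these are growing \emph{languages}, and nothing in $\Cc$ gives an ascending-chain condition.

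The idea the paper supplies, and which your plan lacks, is a well-quasi-order on $\Sext^*$: compare two words by first matching their \emph{signatures} (the subsequence of first occurrences of each action --- there are only finitely many signatures) and then comparing the blocks between consecutive signature letters by Higman's subword order. The minimal elements of any $L$ under this order, $\core(L)$, form a \emph{finite} set, and a ``lift'' lemma shows that replacing the hypothesis $L$ by $\core(L)$ leaves the parent's behaviours unchanged. Now the hypothesis is finite, its prefix set is finite, and the parent-under-hypothesis really is $\Aa_\Ss$ producted with a bona fide finite automaton; and the depth-$k$ iteration terminates because the cores $\core(\Ext_k)$ are finite and increasing, hence eventually constant. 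This wqo/core/lift triad is the actual content of Sections~4--5 and is what turns your ``bounded interface'' intuition into a proof.
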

\noindent
\igw{changed}As a corollary, we obtain that the reachability problem is decidable
for \emph{pushdown dynamic parametric  processes}, that is where each
sub-process is a pushdown automaton. Indeed, in this case $\Cc$ is the
class of pushdown automata.
Similarly, we get decidability for dynamic parametric processes with
subprocesses being higher-order pushdown automata with collapse, and
the other classes listed above.


\section{Set semantics}\label{s:set}

The first step towards deciding reachability for dynamic parametric processes is to simplify the semantics. 
The idea of using a set semantics instead of a multiset semantics has
 already been suggested in
~\cite{DBLP:conf/lics/Kahlon08,DBLP:conf/cav/Durand-Gasselin15,DBLP:conf/concur/TorreMW15,DBLP:journals/jacm/EsparzaGM16}.
We adapt it  to our model, and show that the resulting set semantics
is equivalent to the multiset semantics --- at least as far as the reachability problem
is concerned. 
We conclude this section with several useful properties of runs of our
systems that are easy to deduce from the set semantics.

\emph{Set configuration trees} or \emph{s-trees} for short, are of the form
\begin{equation*}
  s::=(q,\lambda,S)
\end{equation*}
where $q\in Q$, $\lambda:X\to V$, and $S$ is a finite set of s-trees. 
As in the case of m-trees, we consider only \emph{finite} s-trees. In
particular, this means that s-trees necessarily have finite depth. Configuration trees of depth $0$
are those where $S$ is empty. 
The set $\strees_k$\label{def:strees} of s-trees of depth $k\geq 0$ 
is defined in a similar way as the set $\mtrees_k$
of multiset configuration trees of depth $k$.

With a given dynamic parametric process $\Ss$, the set semantics associates  a transition
system $\sem{\Ss}_s$ with s-trees as states. 
Its transitions have the same labels as in the case of multiset
semantics. Moreover, we will use the same notation as for multiset
transitions. It should be clear which semantics we are referring to,
as we use $t$ for m-trees and $s$ for s-trees.

As expected, the initial s-tree is $s_\init=(q_\init,\lambda_\init,\es)$.

The transitions are defined as in the multiset case but 
for multiset actions that become set actions:
  \begin{equation*}
    S\sact{\spawn(p)}S\cup\set{(p,\l_\init,\emptyset)}\qquad\text{and}\qquad
    S_1 \sact{a}S_2  \text{\quad if } a \in \Sext 
  \end{equation*}
for $S_2 = S_1\cup B$ where for each $s_2\in B$ there is some $s_1\in S_1$ so that
  $s_1\sact{\a a} s_2$ for some sequence $\a \in (\S\setminus
  \Sext)^*$. 
\medskip

The reachability problem for dynamic parametric processes under
the set semantics asks, like in the multiset case, whether there is
some consistent run of $\sem{\Ss}_s$ that contains an external write or an
output of a special value $\#$.

\begin{proposition}
   The reachability problems of dynamic parametric processes under
   the multiset and the set semantics, respectively, are equivalent.
\end{proposition}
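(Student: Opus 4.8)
The plan is to prove equivalence by showing both directions: every consistent run of the multiset semantics can be matched by a consistent run of the set semantics reaching the same kind of target action, and vice versa. The key observation is that an s-tree is essentially a "collapsed" version of an m-tree, obtained by forgetting multiplicities; conversely an m-tree can be obtained from an s-tree by duplicating subtrees. Since \textsf{spawn} in the parametric model creates an \emph{arbitrary} number of children, multiplicities carry no real information — whatever can be achieved with $n$ copies of a subtree can be achieved with any larger number, and a single copy in the set world can always be "split" into as many copies as a multiset run needs.

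First I would make precise a relation $s \cleq t$ between an s-tree $s$ and an m-tree $t$, defined by induction on depth: $(q,\lambda,S) \cleq (q,\lambda,M)$ iff every element of $S$ is related to some element of $M$ and every element of $M$ is related to some element of $S$ (i.e.\ $S$ and $M$ have the same "support" up to the relation, modulo multiplicity). For the easy direction — set run simulates multiset run — I would show by induction on the length of the run that if $t_\init \sact{\alpha} t$ in the multiset semantics, then $s_\init \sact{\alpha} s$ in the set semantics for some $s \cleq t$. A multiset \textsf{spawn} adding $n$ copies is matched by a set \textsf{spawn} that (if $n\geq 1$) adds the single new s-tree, or (if $n=0$) is matched by... actually here one must be slightly careful, since the set semantics' spawn rule always adds the child: but an idle child that performs no action is harmless, so a $0$-spawn in the multiset world is matched by a $1$-spawn in the set world whose child stays put. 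External multiset moves $M_1 \sact{a} M_2$ that move several copies $n_i \cdot t_i$ are matched by moving the corresponding single s-trees in $S$. The consistency of $\alpha$ is a property of the label sequence alone, so it transfers verbatim.

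For the converse — multiset run simulates set run — I would prove by induction that if $s_\init \sact{\alpha} s$ in the set semantics then for every "multiplicity assignment" we want, and in particular for \emph{some} m-tree $t$ with $s \cleq t$, we have $t_\init \sact{\alpha} t$. The point where set semantics is genuinely more permissive is the clause $S_2 = S_1 \cup B$ where several new elements $s_2 \in B$ may all derive from the \emph{same} $s_1 \in S_1$ via different derivations $s_1 \sact{\alpha a} s_2$: in the multiset world, a tree $t_1$ occurring with some multiplicity can likewise be split, sending different copies along different derivations, but only if $t_1$ occurs with high enough multiplicity. This is where I would use the freedom of parametric spawn: I would strengthen the induction hypothesis so that the simulating m-tree has \emph{sufficiently many} copies of every subtree — concretely, one can replay the whole run from the start, and whenever a \textsf{spawn} fires, spawn as many copies as will ever be needed later (the run is finite, so a finite bound suffices). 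Equivalently, prove: for every $N$, if $s_\init\sact{\alpha}s$ then $t_\init\sact{\alpha}t$ for some $t$ with $s\cleq t$ in which every node's multiset of children contains at least $N$ copies of each distinct subtree. The induction then goes through because each derivation step only consumes finitely many copies.

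The main obstacle is exactly this bookkeeping of multiplicities in the converse direction: getting the strengthened induction hypothesis right so that one branch of a \textsf{spawn}-then-split in the set semantics can be realized by pre-spawning enough copies in the multiset semantics, while keeping the statement inductive over run length (the bound $N$ may have to be chosen in advance as a function of the full set run, or one threads it through as "enough for the remaining suffix"). Once the simulation relation $\cleq$ and the "enough copies" invariant are set up, both directions are routine inductions on run length with a case analysis following Figure~\ref{f:semantics}, and in both directions the target condition (a consistent run containing an external write or output of $\#$) is preserved because it depends only on the sequence of labels, which is identical on both sides.
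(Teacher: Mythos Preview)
Your approach is essentially the paper's own. The paper maps m-trees to s-trees via a function $\sset$, and for the m-to-s direction maintains the invariant $\sset(t)\leqtree s$ where $\leqtree$ is a \emph{one-sided} preorder on s-trees (Lemma~\ref{lemma:m-to-s}); for the s-to-m direction it introduces exactly your ``enough copies'' idea under the name \emph{$n$-thick}, and shows each set step can be simulated from an $(m\cdot n)$-thick m-tree to an $n$-thick one for a suitable per-step factor $m$ (Lemma~\ref{lemma:s-to-m}). Since $t_\init$ has an empty multiset it is vacuously $m$-thick for every $m$, so one can absorb the product of all factors along a finite run.

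One small technical slip: your relation $\cleq$ is stated as two-sided (same support on both sides), but the very case you flag---a multiset $0$-spawn matched by a set $1$-spawn---breaks that invariant, since the s-tree acquires a child with no counterpart in the m-tree. Saying ``the idle child is harmless'' is the right intuition, but it does not rescue the induction hypothesis as written. The fix is exactly what the paper does: make the relation one-sided, requiring only that every element on the m-tree side be matched by something on the s-tree side, not conversely. With that adjustment your induction goes through verbatim.
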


We proceed to show that the set and the 
multiset semantics are equivalent in the context of
reachability.

On s-trees and sets of s-trees, we define inductively the preorder $\sqsubseteq$ by
\begin{itemize}
\item	$s\sqsubseteq s$;
\item	if $S\sqsubseteq S'$ then $(q,\lambda,S)\sqsubseteq (q,\lambda,S')$;
\item	if for all $s\in S$ there is some $s'\in S'$ with $s\sqsubseteq s'$, then $S\sqsubseteq S'$.
\end{itemize}
The relation $\sqsubseteq$ is reflexive and transitive, but not necessarily anti-symmetric. Thus, it defines
an equivalence relation on s-trees.
 
%


%
Every m-tree determines an s-tree by changing multisets to sets:
\begin{equation*}
  \sset((q,\lambda,M))=(q,\lambda,\set{\sset(t) : t\in M})
\end{equation*}

\noindent
The next two lemmas state a correspondence between multiset and set semantics.
\begin{lemma}\label{lemma:m-to-s}
  For all m-trees $t_1$, $t_2$, multisets $M_1$, $M_2$, s-tree $s_1$,
  and set of s-trees $S_1$:
  \begin{itemize}
  \item If $t_1\sact{a} t_2$ and $\sset(t_1)\leqtree s_1$ then $s_1\sact{a}
    s_2$ for some $s_2$ with $\sset(t_2)\leqtree s_2$.
  \item If $M_1\sact{a} M_2$ and $\sset(M_1)\leqtree S_1$ then $S_1\sact{a}
    S_2$ for some $S_2$ with $\sset(M_2)\leqtree S_2$.
  \end{itemize}
\end{lemma}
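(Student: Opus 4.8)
\textbf{Proof plan for Lemma~\ref{lemma:m-to-s}.}
The plan is to prove both items simultaneously by induction on the depth of the m-trees (equivalently, on the depth of the derivation of the transition $t_1\sact{a}t_2$ or $M_1\sact{a}M_2$ in Figure~\ref{f:semantics}). The two items are mutually dependent: an external or internal transition of an m-tree $(q,\l,M_1)$ that affects the children invokes a multiset transition $M_1\sact{b}M_2$ for $b\in\Sext$, which in turn is defined through executions $t_i\sact{\a_i b}t_i'$ of the component trees; hence the first item at depth $k$ needs the second item at depth $k{-}1$, and the second item at depth $k{-}1$ needs the first item at depth $k{-}1$ (applied along the sequences $\a_i b$). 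So the induction must be set up to carry both claims together.

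First I would record the two monotonicity facts I will use repeatedly: if $\sset(t_1)\leqtree s_1$ then $s_1$ has the form $(q,\l,S_1)$ with $\sset(M_1)\leqtree S_1$, where $t_1=(q,\l,M_1)$; and if $\sset(M_1)\leqtree S_1$, then for every $t\in M_1$ there is $s\in S_1$ with $\sset(t)\leqtree s$ (both are immediate from the definition of $\leqtree$ and of $\sset$). Then I would do the case analysis on the rule used for $t_1\sact{a}t_2$. For the purely top-level rules — $a\in\Sext$ via $q_1\act{a}q_2$, the $\tau$ rule, the $\spawn(p)$ rule, and the $\aw(x,v)$/$\ar(x,v)$ rules — the state component, the local valuation, or (for spawn) the newly added leaf $(p,\l_\init,\es)$ are handled directly: the same rule fires from $s_1=(q_1,\l,S_1)$, producing $s_2$ with the same new state/valuation and with child-set $S_1$, resp.\ $S_1\cup\set{(p,\l_\init,\es)}$, and $\sset(t_2)\leqtree s_2$ follows from $\sset(M_1)\leqtree S_1$ together with reflexivity for the added leaf. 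The interesting cases are $a\in\set{\abr(g,v),\abw(g,v)}$ (and the two internal rules $\abr(x,v)$, $\abw(x,v)$ that relabel children's $\ai,\ao$ actions), all of which reduce to the second item: here $M_1\sact{b}M_2$ for $b\in\set{\ar(g,v),\aw(g,v),\ai(x,v),\ao(x,v)}$, and I invoke the second item (at the same depth) to get $S_1\sact{b}S_2$ with $\sset(M_2)\leqtree S_2$, then rebuild the s-tree transition by the corresponding set-semantics rule, checking that the side conditions on $\l$ (namely $v=\l(x)$ for reads, $\l'=\l[v/x]$ for writes) are literally the same as in the m-tree rule.

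For the second item, unfold the definition of $M_1\sact{b}M_2$: there is $M_1=M'+[n_1\cdot t_1,\dots,n_r\cdot t_r]$ with executions $t_i\sact{\a_i b}t_i'$ and $M_2=M'+[n_1\cdot t_1',\dots,n_r\cdot t_r']$. Given $\sset(M_1)\leqtree S_1$, for each $i$ pick $s_i\in S_1$ with $\sset(t_i)\leqtree s_i$; apply the first item repeatedly along the word $\a_i b$ (this is where I need to iterate the first item over single steps, so I would phrase the first item's consequence as closed under finite sequences, or just iterate) to obtain $s_i\sact{\a_i b}s_i'$ with $\sset(t_i')\leqtree s_i'$, where each intermediate letter of $\a_i$ lies in $\S\setminus\Sext$ exactly as required by the set-semantics rule. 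Then set $S_2=S_1\cup\set{s_i' : i=1,\dots,r}$; the set-semantics rule for $b\in\Sext$ gives $S_1\sact{b}S_2$, and $\sset(M_2)\leqtree S_2$ because every tree of $M_2$ is either in $M'$ (hence its $\sset$-image is $\leqtree$ some element of $S_1\subseteq S_2$) or equals some $t_i'$ (hence $\sset(t_i')\leqtree s_i'\in S_2$). The one subtlety worth stating explicitly — and the step I expect to be the main obstacle in getting the bookkeeping right — is that in the set semantics the "moved" trees are \emph{added} to $S_1$ rather than replacing anything, so different $t_i,t_j$ that happen to be equal as m-trees, or that collapse to the same s-tree, cause no trouble: we are proving only a simulation up to $\leqtree$, and $S_1$ never shrinks, which is precisely why $\leqtree$ (not equality) is the right invariant and why the union form of the rule is harmless. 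Everything else is routine unwinding of the two figures.
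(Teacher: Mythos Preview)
Your proposal is correct and follows essentially the same approach as the paper's proof: the paper only spells out the multiset case, picking for each $t_i\in M_1$ a witness $s_i\in S_1$ with $\sset(t_i)\leqtree s_i$, invoking the induction hypothesis along $\a_i a$ to obtain $s_i\sact{\a_i a}s'_i$ with $\sset(t'_i)\leqtree s'_i$, and setting $S_2=S_1\cup\{s'_1,\dots,s'_r\}$. Your write-up is in fact more explicit than the paper's (which omits the routine tree-level cases and leaves the induction variable implicit); the one place to be careful is that ``depth of the m-tree'' and ``depth of the derivation'' are not literally equivalent since a $\spawn$ inside some $\a_i$ can bump the tree depth from $0$ to $1$, so phrase the induction on derivation height as you already suggest parenthetically.
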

\begin{proof}
We will show only the most involved case of multiset
transitions. 
Suppose $M_1\sact{a} M_2$. 
Then we have by definition some subset $B= [n_1\cdot t_1,\ldots,n_r\cdot t_r]$ of $M_1$
where for $i=1,\ldots,r$, 
$t_i\sact{\alpha_i a} t'_{i}$ holds for a sequence $\alpha_i
\in(\S\setminus\Sext)^*$, and $M_2 = M_1\dminus B+[n_1\cdot t'_1,\ldots,n_r\cdot t'_r]$.
  Since $\sset(M_1)\leqtree S_1$, we have for all $i$,
some  $s_i\in S_1$ with $\sset(t_{i})\leqtree s_{i}$.
  Then by induction assumption $s_{i}\sact{\alpha_i a} s'_{i}$ with
  $\sset(t'_{i})\leqtree s'_{i}$.
  Taking  $S_2=S_1\cup\set{s'_{1},\ldots,s'_r}$ we obtain
  $\sset(M_2)\leqtree S_2$ and $S_1\sact{a} S_2$.
\qed
\end{proof}

\noindent
For the next lemma we introduce the auxiliary notions of 
\emph{$n$-thick} multisets and  \emph{$n$-thick} m-trees for $n\in
\Nat$. They are defined by mutual recursion. 
A multiset is $n$-thick if every element in $M$ is $n$-thick and
appears with the multiplicity at least $n$ (note that $M=[]$ is
$n$-thick for every $n$). 
An m-tree $t=(q,\lambda,M)$ is $n$-thick if $M$ is $n$-thick.

\begin{lemma}\label{lemma:s-to-m}
  \begin{itemize}
  \item If $s_1\sact{a} s_2$ for s-trees $s_1,s_2$, then there is 
    some factor $m\geq 1$ so that for every $n\geq 1$ and 
  $(m\cdot n)$-thick $t_1$ with $\sset(t_1)=s_1$, $t_1\sact{a} t_2$ holds for some $n$-thick  $t_2$ 
  with $\sset(t_2) = s_2$.
  \item If $S_1\sact{a} S_2$ for sets of s-trees $S_1,S_2$, 
  then there is some factor $m\geq 1$ so that for every $n\geq 1$ and
  $(m\cdot n)$-thick $M_1$ with $\sset(M_1)=S_1$, $M_1\sact{a} M_2$ holds for some $n$-thick
  multiset $M_2$ with $\sset(M_2)=S_2$.
  \end{itemize}
\end{lemma}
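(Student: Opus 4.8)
The plan is to prove both items simultaneously by induction on the depth of the s-trees (resp.\ sets of s-trees) involved, since the two statements refer to each other through the definition of the transition relations in the set semantics. The central idea is that each ``set-level'' step $S_1\sact{a}S_2$ or $s_1\sact{a}s_2$ involves only finitely many s-trees that actually move, each via a witnessing sequence $\alpha_i a$; to lift this to the multiset semantics we need enough copies of each moving m-tree so that, after moving some copies, copies of the original are still present (this is exactly why the set semantics collapses things the multiset semantics does not). The factor $m$ will be chosen as a product of the factors supplied by the induction hypothesis for the sub-steps, times a bound on how many distinct trees move and with what multiplicity in the step under consideration.

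First I would treat the base-case style transitions at the top level where $a$ acts on the root: $\tau$, $\aw(x,v)$, $\ar(x,v)$, and $\spawn(p)$. For $\spawn(p)$ with $s_1=(q_1,\lambda,S)\sact{\spawn(p)}(q_2,\lambda,S\cup\{(p,\lambda_\init,\emptyset)\})=s_2$: given an $(m\cdot n)$-thick $t_1=(q_1,\lambda,M)$ with $\sset(t_1)=s_1$, I would fire the multiset spawn rule adding $n$ fresh copies of $(p,\lambda_\init,[])$, i.e.\ $M_2=M+[n\cdot(p,\lambda_\init,[])]$; then $t_2=(q_2,\lambda,M_2)$ is $n$-thick (the new leaf appears with multiplicity $\ge n$ and is $n$-thick vacuously, and all old elements were already $(m\cdot n)$-thick hence $n$-thick) and $\sset(t_2)=s_2$. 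Here $m=1$ suffices. The cases $\tau,\aw(x,v),\ar(x,v)$ are immediate with $m=1$, since the multiset $M$ is untouched and thickness is inherited. The external top-level transitions $a\in\Sext$ acting at the root ($q_1\act{a}q_2$) are likewise trivial with $m=1$.

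Next I would handle the genuinely recursive cases, where the action comes from inside: the clauses for $\abr(g,v),\abw(g,v)$ (for $g\in G$), $\abr(x,v),\abw(x,v)$ (relabelling $\ai,\ao$ of children), and the auxiliary definition of $M_1\sact{a}M_2$ for $a\in\Sext$. Consider $S_1\sact{a}S_2$ with $a\in\Sext$: by definition $S_2=S_1\cup B$ and for each $s'\in B$ there is $s\in S_1$ with $s\sact{\alpha a}s'$ for some $\alpha\in(\S\setminus\Sext)^*$. The key point is that $B$ is finite, say $|B|\le r$, and each of the (at most $r$) witnessing executions $s\sact{\alpha a}s'$ is a finite concatenation of single-step transitions, each governed by the induction hypothesis (they occur at strictly smaller depth, or are themselves top-level cases already handled) yielding a factor; let $m$ be the product of all these factors over all steps of all $r$ witnesses. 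Given an $(m\cdot n)$-thick $M_1$ with $\sset(M_1)=S_1$: for each $s$ appearing as a source, $M_1$ contains some $(m\cdot n)$-thick m-tree $t$ with $\sset(t)=s$, occurring with multiplicity $\ge m\cdot n\ge n+1$ (assuming $m\cdot n>n$, which we may arrange by taking $m\ge 2$). Peel off one copy of $t$ for each of the at most $r$ moves and run the chain of induction hypotheses along $\alpha a$: since each intermediate tree on the chain is thick enough (the running factor decreases multiplicatively exactly as the product was set up), we get $t\sact{\alpha a}t'$ with $t'$ $n$-thick and $\sset(t')=s'$; collect $M_2 = M_1 + \sum [\, t'_j\,]$ over the moved witnesses (keeping the untouched $\ge n$ copies of each $t$). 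Then $M_2$ is $n$-thick and $\sset(M_2)=S_2$. The single-step clauses $\abr(g,v)$ etc.\ are then immediate corollaries: e.g.\ $(q,\lambda,S_1)\sact{\abr(g,v)}(q,\lambda,S_2)$ when $S_1\sact{\ar(g,v)}S_2$ reduces directly to the set-of-trees statement just proved, with the same factor $m$, and $(q,\lambda,M_1)$ inherits $n$-thickness from $M_1$ being $n$-thick.

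\textbf{Main obstacle.} The delicate bookkeeping is the choice and propagation of the factor $m$ along a \emph{sequence} $\alpha a$ of transitions rather than a single one: the induction hypothesis gives a factor per single step, and one must verify that starting $(m\cdot n)$-thick with $m$ the product of the per-step factors indeed leaves each intermediate m-tree thick enough to apply the next step and still retain a surviving copy for the final multiset $M_2$. I would make this precise by an inner induction on the length of $\alpha a$, carrying the invariant ``current m-tree is $(m'\cdot n)$-thick where $m'$ is the product of the factors of the remaining steps.'' A secondary nuisance is ensuring $m\cdot n > n$ so that a copy of each source tree survives after the move; this is handled by always taking $m\ge 2$ (or, more carefully, $m$ at least one more than the largest multiplicity $n_i$ of a moving tree in the step, though since we are free to pad $m$ upward this causes no real difficulty). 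Everything else is routine unwinding of the definition of $\sset$ and of $n$-thickness.
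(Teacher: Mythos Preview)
Your overall architecture---simultaneous induction on depth, case analysis on the transition rule, and explicit composition of factors along the witnessing sequences $\alpha_i a$---matches the paper's. You are in fact more careful than the paper about why a single factor exists for a whole sequence $s_i\sact{\alpha_i a}s'_i$: the paper just writes ``with corresponding factor $m_i$'' and leaves the composition implicit.

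There is, however, a real gap in your set-transition case. You peel off \emph{one} copy of each source $t$ and set $M_2=M_1+\sum[t'_j]$. First, this formula does not remove the moved copies, so it is not a valid multiset transition in the sense of Figure~\ref{f:semantics} (there the moved $t_i$'s are \emph{replaced}, not duplicated). Second---and this is the substantive error---even after correcting to $M_2=M_1\dminus[\,t_j\text{'s}\,]+[\,t'_j\text{'s}\,]$, each new $t'_j$ appears in $M_2$ with multiplicity~$1$. But $n$-thickness of a multiset requires every element to occur with multiplicity at least $n$; having each $t'_j$ itself be $n$-thick (which your induction hypothesis does deliver) is not enough. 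So whenever $s'_j\in S_2\setminus S_1$ and $n\geq 2$, your $M_2$ fails to be $n$-thick.

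The repair, which is exactly what the paper does, is to move $n$ copies of each $t_i$: set
\[
M_2=M_1\dminus[n\cdot t_1,\ldots,n\cdot t_r]+[n\cdot t'_1,\ldots,n\cdot t'_r].
\]
Then each $t'_i$ has multiplicity $\geq n$ automatically. For the sources you now need $mn-rn\geq n$ in the worst case where all $r$ moves share one source, i.e.\ $m\geq r+1$; your ``$m\geq 2$'' does not suffice. The paper takes $m=\max(r+1,m_1,\ldots,m_r)$ with $m_i$ the composed factor for the $i$-th witnessing sequence; your product-of-all-step-factors, once supplemented by the $r+1$ lower bound, would also work but is larger than necessary.
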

\begin{proof}
  We consider only set transitions. 
  If $a=\spawn(p)$, then $S_2=S_1\cup\{(p,\l_\init,\emptyset)\}$ and we can choose $m$ as $1$.
  Now assume that $S_2= S_1\cup \set{s'_1,\ldots,s'_{m'}}$ 
  where for each $i$, there is some $s_i\in S_1$ with
  $s_{i}\sact{\alpha_i a} s'_{i}$ for some sequence $\alpha_i \in (\S\setminus\Sext)^*$ of actions with corresponding factor $m_i$.
  Then define $m$ as the maximum of $m'+1$ and $m_i,i=1,\ldots,m'$.
  Consider some $M_1$ with $\sset(M_1)=S_1$ which is $(m\cdot n)$-thick. 
  Thus, for each $i$, there is some $t_i$ with $\sset(t_i)=s_i$ with $M_1(t_1)\geq mn\geq m_in$.
  The induction hypothesis gives us some $t'_i$ with $\sset(t'_i)=s'_i$ which is $n$-thick so that
  $t_{i}\sact{\alpha_i a} t'_{i}$. 
  We define $M_2= M_1\dminus[n\cdot t_1,\ldots,n\cdot t_{m'}]+[n\cdot t'_1,\ldots,n\cdot t'_{m'}]$.
  Then $M_2$ is $n$-thick and $M_1\sact{a} M_2$.
\qed
\end{proof}

\begin{corollary}\label{c:equivalence}
  The reachability problems for the multiset and set semantics are equivalent.
\end{corollary}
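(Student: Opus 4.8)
The plan is to derive Corollary~\ref{c:equivalence} directly from Lemmas~\ref{lemma:m-to-s} and~\ref{lemma:s-to-m}, by lifting the single-step correspondences to runs and then transferring the reachability condition between the two semantics. First I would fix notation: a run in either semantics is a sequence of labelled transitions from the respective initial state $t_\init=(q_\init,\lambda_\init,[])$ (resp.\ $s_\init=(q_\init,\lambda_\init,\es)$), and note the basic fact $\sset(t_\init)=s_\init$. I would also record that the reachability witness---the occurrence of an external write $\aw(g,\#)$ or an output $\ao(x,\#)$ of the distinguished value, together with $y$-consistency of the label sequence for every $g\in G$---is a property of the label sequence $\a$ alone, hence is unaffected by applying $\sset$ or by passing between $\leqtree$-related trees.

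Next I would prove the multiset-to-set direction. Given a consistent run $t_\init\sact{\a}t$ of $\sem{\Ss}$, I would induct on the length of $\a$, using the first bullet of Lemma~\ref{lemma:m-to-s} at each step: from $\sset(t_i)\leqtree s_i$ and $t_i\sact{a_{i+1}}t_{i+1}$ we get $s_i\sact{a_{i+1}}s_{i+1}$ with $\sset(t_{i+1})\leqtree s_{i+1}$. Starting from $\sset(t_\init)=s_\init\leqtree s_\init$, this yields a run $s_\init\sact{\a}s$ of $\sem{\Ss}_s$ carrying exactly the same label sequence $\a$; since $\a$ is unchanged it is still consistent and still contains the $\#$-witness. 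So a positive instance of multiset-reachability gives a positive instance of set-reachability.

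For the converse I would use Lemma~\ref{lemma:s-to-m}, being careful about the thickness bookkeeping, which I expect to be the one mildly delicate point. Given a consistent run $s_\init\sact{\a}s$ with $|\a|=N$, I would process the steps \emph{from the last to the first}: each step $s_i\sact{a_{i+1}}s_{i+1}$ supplies a factor $m_{i+1}\ge 1$, and I set $n_N=1$, $n_i=m_{i+1}\cdot n_{i+1}$, so that a run of $n_0$-thick m-trees can be built. Concretely, starting from any $n_0$-thick $t_\init$ with $\sset(t_\init)=s_\init$ (for instance $t_\init$ itself, which is vacuously $n_0$-thick since its multiset is empty), Lemma~\ref{lemma:s-to-m} gives $t_\init\sact{a_1}t_1$ with $t_1$ being $n_1$-thick and $\sset(t_1)=s_1$; iterating produces $t_\init\sact{\a}t$ in $\sem{\Ss}$ with the same label sequence $\a$, hence consistent and containing the $\#$-witness. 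Thus set-reachability implies multiset-reachability.

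Combining the two implications gives the equivalence asserted by the Corollary. The only subtlety worth spelling out in the final write-up is the order of induction in the set-to-multiset direction---one must choose the thickness budget by a backward pass over the run so that the factor demanded at step $i+1$ is already available at step $i$---whereas the forward direction is a routine induction on run length using the monotonicity of $\leqtree$ under transitions recorded in Lemma~\ref{lemma:m-to-s}. Everything else is immediate from the observation that both the consistency predicate and the reachability condition depend only on the sequence of labels, which is preserved verbatim by both lemmas.
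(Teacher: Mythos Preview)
Your proposal is correct and follows essentially the same approach as the paper: both directions are obtained by iterating Lemmas~\ref{lemma:m-to-s} and~\ref{lemma:s-to-m} along the run, using that $t_\init$ is vacuously $n$-thick for every $n$. The paper's proof is terser and leaves the backward choice of thickness factors implicit, whereas you spell it out; your observation that consistency and the $\#$-witness depend only on the label sequence is exactly the point that makes the transfer immediate.
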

\begin{proof}
Lemma~\ref{lemma:m-to-s} implies that if $t_\init\sact{\a} t$ for some
sequence $\a$ and some $t$ then $s_\init\sact{\a} s$ for some $s$.

For the opposite direction take an execution $s_\init\sact{\a} s$ for
some $s$.
By definition, $t_\init$ is $m$-thick for every $m\geq 1$. 
Then Lemma~\ref{lemma:s-to-m} gives us an execution $t_\init\sact{\a} t$.
\qed
\end{proof}


\section{External sequences and signatures}

In this section we define some useful languages describing
the behavior of dynamic parametric processes.
Since our constructions and proofs will proceed by induction on the depth
of s-trees, we will be particularly interested in sequences of
external actions of subtrees of processes, \anca{changed}
and in signatures of such sequences, as defined below.
Recall the definition of the alphabet of external actions $\Sext$ 
(see page~\pageref{def:Sext}). 
Other actions of interest are the spawns occurring in $\Ss$: 
\[  \Ssp = \set{\spawn(p) : \text{$\spawn(p)$ is a label of a
  transition in $\Ss$}}\]
Recall that according to our definitions, $\Ssp$ is finite. 

For a sequence of actions $\a$, let $\ext(\a)$ be the subsequence of
external actions in $\a$, with additional renaming of $\abw$, and
$\abr$ actions to actions without a bar, if they refer to global
variables $g \in G$:
\[
\ext(a)\;=\;
\left\{
	\begin{array}{lll}
	\ar(g,v)	&\text{if}
			&a = \ar(g,v)\text{ or } a= \abr(g,v)	\\
	\aw(g,v)	&\text{if}
			&a = \aw(g,v)\text{ or } a= \abw(g,v)	\\
	a		&\text{if}
			&a = \ai(x,v)\text{ or } a= \ao(x,v)	\\
	\epsilon	&\multicolumn{2}{l}{\text{otherwise}}
	\end{array}\right .  
\]


%
Let $\sact{\alpha}_k$ stand for the restriction of $\sact{\a}$ to
s-trees of depth at most $k$ (the trees of depth $0$ have only the root).
This allows to define a family of languages of \emph{external
  behaviors} of trees of processes of height $k$. This family will be the main object of our study. 
\[
  \Ext_k=\set{\spawn(p)\,\ext(\alpha) : 
	(p,\l_\init,\emptyset)\sact{\alpha}_k s~\text{for some $s$, $\spawn(p)\in\Ssp$}}
\]

\ignore{Next we formulate two convenient properties of the set
semantics. These properties do not hold in the multiset
semantics (but some analogous version would hold if we worked with sufficiently
thick multisets). The first property says that in the set semantics we
can replicate (external) actions of sets:

\begin{lemma}\label{label:insert-out}
  Suppose $S\sact{\alpha a\beta} S'$ where $a$ is some action. Then
  for every decomposition $\beta=\beta_1\beta_2$ there is a
  computation $S\sact{\alpha a\beta_1a\beta_2} S'$. Similarly for
  $\sact{}_k$.
\end{lemma}
\begin{proof}
  Take
  \begin{equation*}
    S\sact{\alpha } S_1\sact{a}S_2\sact{\beta_1}  S_3\sact{\beta_2}  S'
  \end{equation*}
  We have $s_1\in S_1$, and $s_1\sact{\g a} s_2$  for some $\g
  \in(\S\setminus \Sext)^*$, 
  $S_2=S_1\cup\set{s_2}$. 
  Then $s_1\in S_3$, so $S_3\sact{a} S'_3$ with 
  $S'_3=S_3\cup\set{s_2}$. 
  But $S'_3=S_3$ since $s_2\in S_3$.
\end{proof}

\begin{lemma}\label{lemma:remove-spawn}
  Suppose $S \sact{\a\, \spawn(p)\, \b_1\, \spawn(p)\, \b_2} S'$ then
  $S\sact{\a\,\spawn(p)\,\b_1\b_2} S'$. Similarly for
  $\sact{}_k$.
\end{lemma}
\begin{proof}
  Consider $S \sact{\a} S_1 \sact{\spawn(p)} S_1\cup\set{(p,\l_\init,\es)}
  \sact{\b_1} S_2\sact{\spawn(p)} S_2\cup\set{(p,\l_\init,\es)}$.
  But by the definition of the set semantics, $(p,\l_\init,\es)\in S_2$, so the
  second $\spawn$ does not change the configuration, and can be eliminated.
\end{proof}
}

\noindent

The following definitions introduce abstraction and concretization operations
on (sets of) sequences of external actions.  The abstraction operation extracts
from a sequence its \emph{signature}, that is, the subsequence of 
first occurrences of external actions:

\begin{definition}[Signature, canonical decomposition]\label{def:signatures}
  The signature of a word $\a\in\Sext^*$, denoted $\sig(\a)$, is the
  subsequence of first 
  appearances of actions in $\a$.

  For a word $\a$ with signature $\sig(\a)=b_0b_1 \cdots b_k$, the
  \emph{(canonical)  decomposition} is $\a=b_0\a_1b_1\a_2 b_2\cdots  \a_kb_k\a_{k+1}$,
  where $b_i$ does not appear in $\a_1\cdots\a_i$, for all $i$.

For words $\b \in \Ssp \cdot \Sext^*$ the signature is defined by
$\sig(\spawn(p)\a)=\spawn(p) \cdot \sig(\a)$.
\end{definition}
The above definition implies that $\a_1$ consists solely of
repetitions of $b_0$. In Example~\ref{ex:prelim} the
signatures of the executions at level 1 are $\spawn(p) \ai(x,0)
\ao(x,1)$, $\spawn(p) \ai(x,1)
\ao(x,2)$, and $\spawn(p) \ai(x,1)
\ao(x,0)$. 
Observe that all signatures
at level 1 in this example are prefixes of the above
signatures.

While the signature operation removes actions from a sequence, the
concretization operation $\lift$ inserts them in all possible ways. 

\begin{definition}[lift]\label{df:lift}
  Let $\a\in\S_\ext^*$ be a word with signature $b_0,\dots,b_n$ and canonical decomposition
  $\a=b_0\a_1 b_1 \a_2 b_2\cdots \a_k b_k \a_{k+1}$. 
  A \emph{lift} of $\a$ is any word $\b=b_0\b_1 b_1 \b_2
  b_2\cdots \b_k b_k \b_{k+1}$, where $\b_i$ is obtained from
  $\a_i$ by inserting some number of actions $b_0,\dots,b_{i-1}$, for
  $i=1,\dots,k+1$. We write $\lift(\a)$ for the set of all such words
  $\b$. For a set $L \subseteq \Sext^*$ we define 
  \begin{equation*}
    \lift(L)=\bigcup\set{\lift(\a) : \a\in L}
  \end{equation*}
We also define $\lift(\spawn(p)\cdot \a)$ as the set $\spawn(p) \cdot
\lift(\a)$, and similarly $\lift(L)$ for $L\incl \Ssp\cdot \S_\ext^*$.
\end{definition}
Observe that $\a\in\lift(\sig(\a))$. Another useful property is that if $\b \in\lift(\a)$ then $\a,\b$ 
agree in their signatures.


\section{Systems under  hypothesis}\label{s:hypo}
This section presents the proof of our main result, namely,
Theorem~\ref{thm:main} stating that the reachability
problem for dynamic parametric processes is decidable for an
admissible class of systems.
The corresponding algorithm will analyze a process tree level by level. 
The main tool is an abstraction of child sub-processes by their external
behaviors. We call it \emph{systems under
  hypothesis}. 

\begin{figure}[tbh]
  \centering
  \includegraphics[scale=.5]{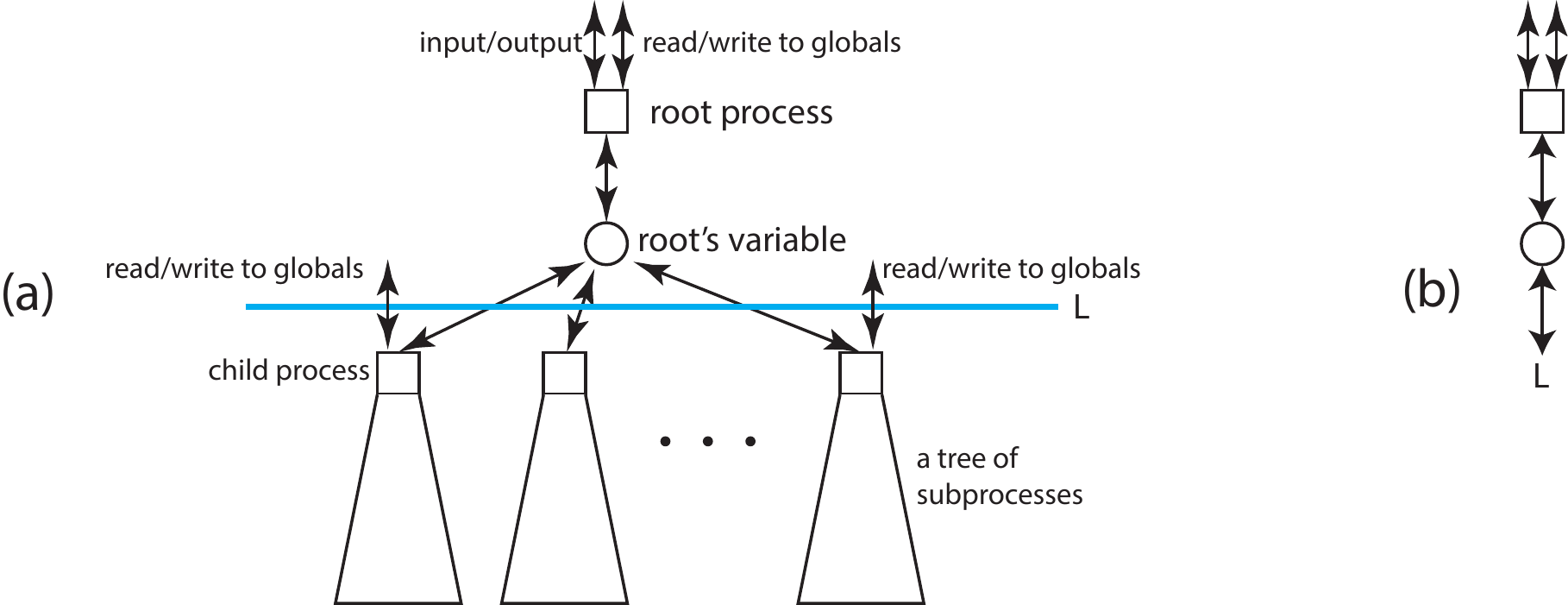}
  \caption{Reduction to a system under hypothesis.}
  \label{fig:idea}
\end{figure}
Let us briefly outline this idea.  A configuration of a
dynamic parametric process is a tree of sub-processes, Figure~\ref{fig:idea}(a).  The root 
performs (1) input/output external operations, (2) read/writes
to global variables, and (3) internal operations in form of
reads/writes to its local variables, that are also accessible to the
child sub-processes.  We are now interested in possible sequences of
operations on the global variables and the local variables of the root, that can be
done by the child sub-processes.  If somebody provided us with the set $L_p$ of all
such possible sequences, for child sub-processes starting at state $p$, for
all $p$, we could simplify our system as illustrated in
Figure~\ref{fig:idea}(b).  We would replace the set of all sub-trees
of the root by (a subset of) $L=\set{\spawn(p)\b:\b\in \pref(L_p), \spawn(p)\in\Ssp}$
summarizing the possible behaviors of child sub-processes.


A set  $L\incl \Ssp\cdot\Sext^*$ is called a
\emph{hypothesis}, as it represents a guess about the possible
behaviors of child sub-processes. 

Let us now formalize the notion of  \emph{execution of the system under
hypothesis}. 
For that, we define a system $\Ss_L$ that cannot spawn  child sub-processes, but
instead may use the hypothesis $L$.
We will show that if $L$ correctly describes the behavior of
child sub-processes then the set of runs of $\Ss_L$ equals the set of runs of $\Ss$
with child sub-processes. 
This approach provides a way to compute the set of possible external behaviors 
of the original process tree level-wise: 
first for systems restricted to s-trees of height at most $1$, then $2$, \dots,
until a fixpoint is reached.

The configurations of $\Ss_L$ are of the form $(q,\l,B)$, where $\l$ is  as
before a valuation of local variables, and 
$B\subseteq\pref(L)$ is a set of sequences of external actions for sets of sub-processes.

The initial state is  $r_\init=(q_\init,\l_\init,\emptyset)$.
We will use $r$ to range over configurations of $\Ss_L$. 
Transitions between two states $r_1\lact{a} r_2$ are listed in
Figure~\ref{F:hyp-L}.
Notice that transitions on actions of child sub-processes are modified so
that now $L$ is used to test if an action of a simulated child
sub-process is possible. 

\begin{figure*}[hbt]
\textbf{External transitions under hypothesis:}
\[
\begin{array}{lclll}
    (q_1,\l,B)&\lact{a}&(q_2,\l,B)&
    \text{if}& q_1\act{a}q_2 \;\text{if $a \in\Sext$}\\
    (q,\l,B)&\lact{\abw(g,v)}&(q,\l,B \cup B'\cdot\set{\aw(g,v)})    &\text{if}& \es\neq B'\incl B,\;
                                                          B'\cdot\{\aw(g,v)\}\incl
                                                          \pref(L)\\
     (q,\l,B)&\lact{\abr(g,v)}&(q,\l,B \cup B'\cdot\set{\ar(g,v)})
                                                &\text{if}& \es\neq
                                                B'\incl B, \;
                                                          B'\cdot\{\ar(g,v)\}\incl\pref(L)                                                         
  \end{array}
\]

\textbf{Internal transitions under hypothesis:}
\[
  \begin{array}{lclll}
    (q_1,\l,B)&\lact{\tau}&(q_2,\l,B)    &\text{if}& q_1\act{\tau}q_2\\
    (q_1,\l,B)&\lact{\spawn(p)}&(q_2,\l,B\cup\{\spawn(p)\})   &\text{if}&
                                                q_1\act{\spawn(p)}q_2\;
                                                      \text{and}\;
                                                      \spawn(p)\in\pref(L)\\
    (q_1,\l,B)&\lact{\aw(x,v)}&(q_2,\l',B)    &\text{if}&
						q_1\act{\aw(x,v)}q_2\;
						      \text{and}\;\l'=\l[v/x]\\
    (q_1,\l,B)&\lact{\ar(x,v)}&(q_2,\l,B)    &\text{if}&
						q_1\act{\ar(x,v)}q_2\; \text{and}\;\l(x)=v\\
    (q,\l,B)&\lact{\abw(x,v)}&(q,\l',B \cup B'\cdot\set{\ao(x,v)})
    &\text{if}& \es\neq B'\incl B, \;
                                                          B'\cdot\{\ao(x,v)\}\incl
                                                          \pref(L)\\
&&&& \l'=\l[v/x] \\
     (q,\l,B)&\lact{\abr(x,v)}&(q,\l,B \cup B'\cdot\set{\ai(x,v)})
                                                &\text{if}& \es\neq
                                                B'\incl B, \;
                                                          B'\cdot\{\ai(x,v)\}\incl
                                                          \pref(L),\\
&&&& \l(x)=v
  \end{array}
\]
\caption{\label{F:hyp-L}Transitions under hypothesis ($g\in G$, $x\in X$).}
\end{figure*}

\noindent
We list below two properties of $\lact{}$. In order to state them in a
convenient way, we introduce a \emph{filtering} operation
$\dist$ on sequences. 
The point is that external actions of child sub-processes are changed to
$\abr$ and $\abw$, when they are exposed at the root of a configuration tree.
In the definition below we rename them back; additionally, we remove
irrelevant actions. So $\dist(\a)$ is obtained by the following renaming of $\alpha$:
\[
\dist: \qquad 
\begin{array}{lll}
  \abr(x,v) \to \ai(x,v)\,, & \qquad & \abr(g,v) \to \ar(g,v)\,, \\
\abw(x,v) \to \ao(x,v) \,,& \qquad & \abw(g,v) \to \aw(g,v)\,,\\
a \to a & \qquad & \text{if }a \in\Ssp\,,\\
a \to \epsilon & & \text{otherwise}
\end{array}
\]


\noindent
The next two lemmas follow directly from the
definition of $\lact{\a}$.

\begin{lemma}\label{lemma:u-u-in-L}
 If $(q,\l,\es)\lact{\alpha} (q',\l',B)$ then $B\incl\pref(L)$,
 and every $\b\in B$ is a scattered subword of $\dist(\alpha)$. 
\end{lemma}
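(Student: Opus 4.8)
The plan is a routine induction on the length of the execution $(q,\l,\es)\lact{\alpha}(q',\l',B)$, with a case analysis on the last transition according to Figure~\ref{F:hyp-L}. For the empty execution $\alpha=\epsilon$ and $B=\es$, and both assertions hold vacuously. For the inductive step I would write the execution as $(q,\l,\es)\lact{\alpha'}(q_1,\l_1,B_1)\lact{a}(q',\l',B)$ with $\alpha=\alpha' a$, apply the induction hypothesis to the prefix (so $B_1\incl\pref(L)$ and each $\b\in B_1$ is a scattered subword of $\dist(\alpha')$), and then examine each possible form of $a$.

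If $a$ is $\tau$, an action of $\Sext$, $\aw(x,v)$ or $\ar(x,v)$, then the rule leaves the third component unchanged, $B=B_1$, and moreover $\dist(a)=\epsilon$, so $\dist(\alpha)=\dist(\alpha')$ and both claims carry over unchanged. If $a=\spawn(p)$, then $B=B_1\cup\{\spawn(p)\}$, the side condition gives $\spawn(p)\in\pref(L)$, and $\dist(\alpha)=\dist(\alpha')\,\spawn(p)$; the elements of $B_1$ remain scattered subwords of this longer word, and $\spawn(p)$ is itself one. Finally, if $a$ is one of the four barred child actions $\abw(g,v),\abr(g,v),\abw(x,v),\abr(x,v)$, the corresponding rule has the shape $B=B_1\cup B'\cdot\{c\}$ with $\es\neq B'\incl B_1$ and, by the side condition, $B'\cdot\{c\}\incl\pref(L)$, where $c$ is exactly the un-barred letter obtained from $a$ by $\dist$ (e.g.\ $c=\ar(g,v)$ for $a=\abr(g,v)$), so $\dist(\alpha)=\dist(\alpha')\,c$. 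Then $B\incl\pref(L)$ follows from the induction hypothesis together with the side condition, and for any $\b=\b' c$ with $\b'\in B'\incl B_1$ the induction hypothesis makes $\b'$ a scattered subword of $\dist(\alpha')$, hence $\b' c$ is a scattered subword of $\dist(\alpha') c=\dist(\alpha)$.

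There is essentially no real obstacle: the only thing to watch is that in each barred case the letter appended to the sequences in $B$ coincides with the letter $\dist$ extracts from the transition label, which is precisely how $\dist$ was set up (renaming $\abr,\abw$ on globals back to $\ar,\aw$, and on the parent's locals back to $\ai,\ao$). A trivial auxiliary fact used throughout is that the $B$-component only grows along $\lact{}$, so the inclusion $B'\incl B_1\incl\pref(L)$ needed in the barred cases comes for free from the induction hypothesis.
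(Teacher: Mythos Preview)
Your proof is correct and is exactly the routine verification the paper has in mind: the paper itself does not spell out an argument but simply states that the lemma ``follows directly from the definition of $\lact{\a}$'', which amounts to the induction on the length of the run with the case distinction you give. All cases are handled correctly, including the key point that for each barred action the letter appended to sequences in $B$ coincides with $\dist(a)$.
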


\begin{lemma}\label{l:mono}
  If $L_1\incl L_2$ and $(p,\l,\es)\liact{\a} r$ then
  $(p,\l,\es)\liiact{\a} r$.
\end{lemma}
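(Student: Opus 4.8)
The plan is to prove the statement by induction on the number of one-step transitions in the run $(p,\l,\es)\liact{\a} r$. The inductive step reduces to a purely local claim: for all configurations $r_1, r_2$ and every label $a$, if $r_1\liact{a} r_2$ then $r_1\liiact{a} r_2$. Granting this, decompose $(p,\l,\es)\liact{\a} r$ as $r_0\liact{a_1}r_1\liact{a_2}\cdots\liact{a_n}r_n$ with $r_0 = (p,\l,\es)$ and $r_n = r$, apply the local claim to each step, and concatenate the resulting $L_2$-transitions; this yields $(p,\l,\es)\liiact{\a} r$ through exactly the same sequence of intermediate configurations. (That each $r_i$ is a legal configuration of $\Ss_{L_2}$, i.e.\ that its $B$-component is contained in $\pref(L_2)$, follows from Lemma~\ref{lemma:u-u-in-L} applied with $L_1$ together with the monotonicity $\pref(L_1)\incl\pref(L_2)$.)

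For the local claim I would inspect the transition rules of Figure~\ref{F:hyp-L} one by one, relying on two structural observations. First, the target configuration $r_2$ of every rule is determined by $r_1$, the label $a$, and the auxiliary choices the rule makes (the successor state $q_2$, the chosen subset $B'\incl B$ in the child-action rules, the updated valuation $\l[v/x]$ where applicable) --- it never depends on the hypothesis $L$. Hence the witnessing $L_2$-transition can reuse verbatim all the data of the given $L_1$-transition. Second, the hypothesis enters only through side conditions, and each of these is upward monotone in $L$: it is either $\spawn(p)\in\pref(L)$ (the $\spawn$ rule) or of the form $B'\cdot\set{c}\incl\pref(L)$ for a single external action $c\in\set{\aw(g,v),\ar(g,v),\ao(x,v),\ai(x,v)}$ (the $\abw$ and $\abr$ rules). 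Since $L_1\incl L_2$ implies $\pref(L_1)\incl\pref(L_2)$, any such side condition that holds for $L_1$ holds for $L_2$. The remaining side conditions ($\es\neq B'\incl B$, the state move $q_1\act{a}q_2$, the local tests $\l(x)=v$) do not mention $L$, and the rules for $\tau$, $\aw(x,v)$, $\ar(x,v)$ and for external $a\in\Sext$ carry no $L$-dependent condition at all. So $r_1\liact{a} r_2$ entails $r_1\liiact{a} r_2$ in every case.

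I do not expect any genuine obstacle here: the entire content of the lemma is the monotonicity of $\pref(\cdot)$ plus the fact that $L$ occurs only positively in Figure~\ref{F:hyp-L}. The one point worth spelling out explicitly is that enlarging the hypothesis changes neither the target configurations nor the $L$-independent side conditions of any rule, so the run can be replayed step by step under $L_2$ and lands on precisely the same configuration $r$.
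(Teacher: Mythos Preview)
Your argument is correct and is exactly the direct verification the paper has in mind: the paper does not spell out a proof of this lemma, stating only that it ``follows directly from the definition of $\lact{\a}$.'' Your step-by-step replay, together with the observation that the hypothesis $L$ appears in Figure~\ref{F:hyp-L} solely through the upward-monotone side conditions $\spawn(p)\in\pref(L)$ and $B'\cdot\{c\}\incl\pref(L)$, is precisely that direct verification.
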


\noindent
The next lemma states a basic property of the  relation $\lact{\a}$. If
we take for $L$ the set of all possible behaviors of child sub-processes with s-trees
of height at most $k$, then $\lact{\a}$ gives us all possible
behaviors of a system with s-trees of height at most $k+1$.
This corresponds exactly to the situation depicted in Figure~\ref{fig:idea}.

\begin{lemma}\label{lemma:u-two-systems}\label{l:hypo}
  Suppose $L=\Ext_k$. For every $p,q$, $\l$, and $\a$ we have:
  $(p,\l_\init,\es)\lact{\alpha} (q,\l,B)$ for some $B$ iff
  $(p,\l_\init,\es)\sact{\alpha}_{k+1} (q,\l,S)$ for some $S$.
\end{lemma}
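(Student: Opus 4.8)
The plan is to prove the two inclusions separately, each by induction on the length of the run $\alpha$, carrying along an invariant that puts a configuration $(q,\l,B)$ of $\Ss_L$ into correspondence with an s-tree $(q,\l,S)$ of depth at most $k+1$. The fact that makes the choice $L=\Ext_k$ work is that $\pref(\Ext_k)$ is exactly the set of traces $\spawn(p)\,\ext(\delta)$ for runs $(p,\l_\init,\es)\sact{\delta}_k s$ of depth-$\le k$ subprocesses started in a spawned state, together with the empty word; thus a word lies in $\pref(L)$ precisely when it is realisable as the externally visible behaviour of a single depth-$\le k$ child. A second, purely syntactic observation is that $\lact{}$ and $\sact{}_{k+1}$ coincide verbatim on the root's own moves — $\tau$, $\spawn(p)$, $\aw(x,v)$, $\ar(x,v)$, and the external actions $\ai(x,v),\ao(x,v),\ar(g,v),\aw(g,v)$ performed by the root's control — and that $\l$ is read or updated under exactly the same side conditions on both sides (a root write $\aw(x,v)$, or a child action surfaced as $\abr(x,v)$ or $\abw(x,v)$). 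So the whole content of the lemma concerns the child-related moves $\abr(g,v),\abw(g,v),\abr(x,v),\abw(x,v)$: $\Ss_L$ realises them abstractly, by appending one letter to a non-empty subset $B'\incl B$ that must remain inside $\pref(L)$, and $\sem{\Ss}_s$ realises them concretely, by letting some subtrees of $S$ each run internally and then perform a single external action and adding the results to $S$.

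For the direction from $\lact{}$ to $\sact{}_{k+1}$ I would use the invariant: $B\incl\pref(L)$, and there is a map assigning to each $\b=\spawn(p')\g\in B$ a subtree $s_\b\in S$ together with a run $(p',\l_\init,\es)\sact{\delta_\b}_k s_\b$ with $\ext(\delta_\b)=\g$, such that the external actions of $\delta_\b$ are surfaced at the root, in order, at exactly the steps of $\alpha$ at which the letters of $\g$ are appended along the branch of $B$ leading to $\b$. Since $\Ss_L$ only ever appends letters, the words of $B$ with the relation ``grew out of'' form a forest, and the key step is to read this forest off from the given $\Ss_L$-run before constructing the $\sem{\Ss}_s$-run: for each maximal word $\b$ of $B$ one fixes once and for all a depth-$\le k$ run of a fresh subprocess in state $p'$ whose $\ext$-trace is the whole chain of letters leading to $\b$ — such a run exists because that chain lies in $\pref(\Ext_k)$. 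One then replays $\alpha$, copying the root's own moves and, at each child step where $\Ss_L$ extends a subset $B'$ by a letter $c$, invoking in a single $\sem{\Ss}_s$-step — recall that one root step may invoke arbitrarily many subtrees — all the committed subtrees whose fixed run has its next external action scheduled here, each performing its next block of internal actions followed by the action surfacing $c$. The conditions on globals are unconstrained and those on $\l$ agree by the observation above, so we end in some $(q,\l,S)$.

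For the converse, the invariant is the mirror image: $B\incl\pref(L)$, and for every subtree $s\in S$ produced by a child step of the run (rather than by a bare $\spawn$) the externally visible trace that $s$ has shown so far lies in $B$ together with all of its prefixes created earlier. Replaying $\alpha$ in $\Ss_L$ is then immediate: root moves are copied, and a child step of $\sem{\Ss}_s$ extending subtrees $s_1,\dots,s_m$ via $s_i\sact{\sigma_i c'}s_i'$ and surfacing $c$ is matched by the corresponding $\Ss_L$ rule with $B'$ the set of recorded traces of $s_1,\dots,s_m$; the side condition $B'\cdot\{c\}\incl\pref(L)$ holds because each $s_i'$ is still reachable from its start state by a depth-$\le k$ run, and Lemma~\ref{lemma:u-u-in-L} keeps $B$ inside $\pref(L)$ throughout. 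The step I expect to be the main obstacle is making the forward invariant precise enough to be preserved: $\Ss_L$ discovers a child's trace one letter at a time, while a concrete depth-$\le k$ witness for that child must be committed as a whole, and the prefix of a witness for $\g c$ need not be the witness already used for $\g$. The ``reconstruct the forest of $B$, commit one witness per maximal word, then replay $\alpha$'' device resolves this, but verifying that the replay is a legal $\sem{\Ss}_s$-run — in particular that each subtree invoked at a given root step is poised to surface exactly the letter appended by $\Ss_L$ there, and that the depth-$k$ facts needed inside each child are precisely those supplied by $L=\Ext_k$ — is the technical heart; the bookkeeping for the root's control state and local variables is routine.
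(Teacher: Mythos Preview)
Your plan is essentially the paper's proof: both directions use induction on $|\alpha|$ with exactly the invariants you describe, and for the harder direction $\lact{}\to\sact{}_{k+1}$ the paper likewise pre-commits, for each $\b\in B_{n+1}$, a full depth-$\le k$ witness run and builds the $S_i$ from intermediate states of those witnesses. The one point worth noting, since you flag the scheduling bookkeeping as the main obstacle: rather than tracking for each maximal $\b$ the exact steps at which its letters were appended, the paper simply puts the intermediate state $s_{\b,j}$ into $S_i$ whenever $\spawn(p_\b)b_{\b,1}\cdots b_{\b,j-1}$ is a \emph{scattered subword} of $\dist(a_1\cdots a_{i-1})$. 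This dissolves the ``forest of $B$'' construction you propose, and the verification that $(p_i,\l_i,S_i)\sact{a_i}_{k+1}(p_{i+1},\l_{i+1},S_{i+1})$ becomes the one-line observation that if a length-$j$ prefix embeds into $\dist(a_1\cdots a_i)$ with last letter $\dist(a_i)$, then the length-$(j{-}1)$ prefix already embeds into $\dist(a_1\cdots a_{i-1})$. Your explicit scheduling would work too, but the scattered-subword criterion is the cleaner device and removes precisely the difficulty you anticipate.
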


\begin{proof}
  The statement may look almost tautological, but is not.
  We prove two directions:
  \begin{enumerate}
  \item		Whenever $(p,\l_\init,\es)\sact{\alpha}_{k+1} (q,\l,S)$,
		then there is some $B$ so that $(p,\l_\init,\es)\lact{\alpha} (q,\l,B)$ holds.
  \item		Whenever $(p,\l_\init,\es)\lact{\alpha} (q,\l,B)$, then there is some finite
		set $S$ of s-trees of depth at most $k$ so that 
		$(p,\l_\init,\es)\sact{\alpha}_{k+1} (q,\l,S)$.
  \end{enumerate}
  Statement 1 follows by induction on the length of $\alpha$ where the set $B$
  satisfies the invariant that for every $s\in S$,  there is some
  $\b=\spawn(p')\b'$  in $B$ such that $(p',\l_\init,\es)\sact{\a'}_{k} s$
  holds for some $\a'$, with $\ext(\a') =\b'$.

  The proof of statement 2 is more technical. Assume that $\alpha =
  a_1\cdots a_n$ and
  $(p_i,\l_i,B_i)\lact{a_i}(p_{i+1},\l_{i+1},B_{i+1})$ for
  $i=1,\ldots,n$, where $(p_1,\l_1,B_1)=(p,\l_\init,\es)$ and
  $(p_{n+1},\l_{n+1},B_{n+1}) = (q,\l,B)$.  We construct below a
  corresponding sequence of sets of $s$-trees $S_1,\ldots, S_{n+1}$
  with $(p_i,\l_i,S_i)\sact{a_i}_k(p_{i+1},\l_{i+1},S_{i+1})$,
  $i=1,\ldots,n$.

Since $B_{n+1}\incl\pref(\Ext_k)$, we have for every $\beta=\spawn(p_\b)\b'\in B_{n+1}$ 
  that $(p_\b,\l_\init,\es)\sact{\g_\b}_k(q_\b,\l,S_\b)$, for some $\g_\b$ with $\ext(\g_\b) = \b'$.
  This means for $\b' = b_{\b,1}\cdots b_{\b,n_\b}$ that $\g_\b$ has
  the form:
  $\g_\b = \g_{\b,1}b_{\b,1}\cdots\g_{\b,n_\b}b_{\b,n_\b}$ with $\ext(\g_{\b,j})=\e$ for all $j$.
  Moreover, there are s-trees $s_{\b,j}$ of depth at most $k$ so 
  that $s_{\b,j}\sact{\g_{\b,j}b_{\b,j}}_k s_{\b,j+1}$ for
  $j=1,\ldots,n_\b$ with $s_{\b,1}=(p_\b,\l_\init,\es)$ and
  $s_{\b,n_\b+1}=(q_\b,\l,S_\b)$.
  Then we define $S_i$ as the set of all $s_{\b,j}$ such that 
  $\spawn(p_\b)b_{\b,1}\cdots b_{\b,j-1}$ is a scattered subword of
  $\dist(a_1\cdots a_{i-1})$.

  It remains to prove that
  $(p_i,\l_i,S_i)\sact{a_i}_k(p_{i+1},\l_{i+1},S_{i+1})$ holds  for every $i=1,\ldots,n$.
  For that we perform a case distinction on action $a_i$. 
  If $a_i$ is of the form $\tau,\ar(g,v),\aw(g,v),\ai(x,v)$ or $\ao(x,v)$,
  then $B_{i+1} = B_i$, $S_{i+1}=S_i$ and the assertion holds.
  If $a_i$ is the action $\spawn(p')$, then $B_{i+1} = B_i\cup\set{\spawn(p')}$.
  Likewise, $S_{i+1} = S_i\cup\set{(p',\l_\init,\es)}$ in accordance with our claim.
  
  The most complicated case is when  $a_i$ is of one of the forms
  $\abr(y,v),\abw(y,v)$ for $y\in X\cup G$ and $v\in V$.  
  Then $B_{i+1}=B_i\cup B'\cdot\set{b}$ for $b=\dist(a_i)$ and some
  $B'\subseteq B_i$ with 
  $B'\cdot\set{b_i}\incl B_{n+1}\incl\pref(L)$. 
  Let $S$ be the set consisting of all  $s_{\b,j+1}$ such that
  $\spawn(p_\b)b_{\b,1}\cdots b_{\b,j}$ is a  scattered subword of
  $\dist(a_1\dots a_i)$ and $b_{\b,j}=b$.
  This set $S$ is not empty since $B'$ is not empty.
  We have $S_{i+1}=S_i\cup S$ by the definition of $S_i$ and $S_{i+1}$.
  Now for every $s_{\b,j+1}\in S$ we have by definition
  $s_{\b,j}\sact{\g_{\b,j}b_{\b,j}}_k s_{\b,j+1}$, and
  $b_{\b,j}=b$. Since $\spawn(p_\b)b_{\b,1}\cdots b_{\b,j-1}$ is a
  scattered subword of $\dist(a_1\cdots a_{i-1})$, we have 
  $s_{\b,j}\in S_i$. This shows that we  have indeed
  $(p_i,\l_i,S_i)\sact{a_i}_k(p_{i+1},\l_{i+1},S_{i}\cup S)$, with
  $S_i\cup S=S_{i+1}$.

\ignore{
  OLD:

  Thus, $S_i$ has as a subset the set $S$ of all s-trees
  $s_{\b,j-1}$ if $\b\in B_{n+1}$, $j$ is the length of $\b''$, 
  and $\b'a_i$ is a prefix of $\b$. Since $s_{\b,j-1}\sact{a_i}_k s_{\b,j}$ for all such $\b$,
  consider the set $S'$ consisting of all the resulting s-trees $s_{\b,j}$.
  Then every word in $S_{i+1}$ either is already contained in $S_i$, 
  or is contained in $S'$. Thus, $S_{i+1} = S_i\cup S'$.
  Since $S_i\sact{a_i}_k S_i\cup S'$ holds, the assertion of the claim follows.

  VERY OLD:

  so that for every $s\in S_i$, there is some $\b = \spawn(p')\b'\in B_i$ 
  with $(p,\l_\init,\es)\sact{\a'}_{k} s$ and $\ext(\a') =\b'$.
  Again we proceed by induction on $i$. 
  Assume that we have already constructed sets $S'_1,\ldots,S'_i$ with the given property.
  Note that the sub-processes in $S'_i$ need not necessarily admit an action $a_i$.
  We may, however, add appropriate sub-processes to $S'_i$ and accordingly, their predecessors
  to the sets $S'_j, j < i$, in order to enable $a_i$.
  This construction is realized by case distinction on the action $a_i$.
  
  Assume that $a_i=\spawn(p')$. Then $B_{i+1} = B_i\cup\set{\spawn(p)}$ and
  we define the sequence $S_1,\ldots,S_{i+1}$ as $S_j=S'_j$ for $j\leq i$ and 
  $S_{i+1}=S_i\cup\set{(p',\l_\init,\es)}$.

  Assume that $a_i$ is of one of the forms $\agr(g,v),\agw(g,v)$ for $g\in G,v\in V$
  or $\abr(x,v),\abw(x,v),x\in X,v\in V$.
  Then $B_{i+1} = B_i\cup B'\cdot\set{\dist(a_i)}$ with $\es\neq B'$ and $B'\cdot\set{a_i}\incl\Ext_k$.
  This means that for every $\b =\spawn(p_\b)b_{\b,1}\ldots b_{\b,n_\b}\in B'$,
  there exists some $s_\b\in\Ext_k$ so that
  $(p_\b,\l_\init)\sact{\a_{\b,1} b_{\b,1}}_k s_{\b,2}$,
  $s_{\b,j}\sact{\a_{\b,j} b_{\b,j}}_k s_{\b,j+1}$ for $j=2,\ldots,n_\b$ and
  $s_{\b,n_\b}\sact{\a_{\b,n_\b+1}}_{k} s_\b$ holds where
  $\ext(\a_{\b,j})=\e$ for all $j$.
  Moreover by lemma \ref{lemma:u-u-in-L}, there are indices $i_{\b,0} < i_{\b,1}<\ldots<i_{\b,n_\b}$ 
  so that $a_{i_{\b,0}}=\spawn(p_\b)$ and $\dist(a_{i_{\b,j}}) = b_{\b,j}$
  for $j=1,\ldots,n_\b$.
  Then we construct sets $S_1,\ldots,S_{i+1}$ by setting
  $S_j = S'_j\cup\set{s_{\b,j'}:\b\in B',i_{\b,j'}\leq j}$.

  For the remaining actions, $B_{i+1} = B_i$ so that we can set $S_j = S'_j$ for $j=1,\ldots,i$
  and $S_{i+1} = S_i$.
}
\qed
\end{proof}

\noindent
The question we will pursue now is whether in the lemma above, we may replace
$\Ext_k$ with some simpler set and still get all computations of the
system of height $k+1$. 
Of central importance here is the following lemma saying that the lift
operation (cf.~Definition~\ref{df:lift}) does not add new behaviours.

\begin{lemma}\label{lemma:lift}\label{l:lift}
Assume that $L\incl\Ssp\cdot\Sext^*$ and $L' = \lift(L)$. Then
$(p,\l_\init,\es)\lact{\a}(q,\l,B)$ for some $B\incl \pref(L)$ iff
$(p,\l_\init,\es)\lpact{\a}(q,\l,B')$ for some $B'\incl \pref(L')$.
\end{lemma}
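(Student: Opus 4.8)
The plan is to prove both implications by induction on the length of the computation $\a$, maintaining a correspondence between the reachable sets $B \incl \pref(L)$ and $B' \incl \pref(L')$. The key invariant I would carry along is: whenever $(p,\l_\init,\es)\lact{\a}(q,\l,B)$ and $(p,\l_\init,\es)\lpact{\a}(q,\l,B')$ are the ``matching'' runs built so far, then the two control-and-local components agree (same $q$, same $\l$), and $B$ and $B'$ are related so that each sequence in one set is ``explained'' by a sequence in the other. Concretely, for the direction $L \rightsquigarrow L'$: every $\b \in B$ has a lift $\b^\uparrow \in B'$ (i.e.\ $\b^\uparrow \in \lift(\b)$, hence $\b^\uparrow \in \pref(L')$ since $L'=\lift(L)$); and conversely for $L' \rightsquigarrow L$: every $\b' \in B'$ is a lift of some $\b \in B$ (and $\b \in \pref(L)$ because, by the last observation after Definition~\ref{df:lift}, $\b'$ and $\b$ share the same signature, and membership of a prefix in $\pref(\lift(L))$ is controlled by its signature). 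The crucial point making the induction go through is that the transitions under hypothesis only ever \emph{append} a single external letter $b$ to a chosen nonempty $B'\incl B$, and test $B'\cdot\{b\}\incl\pref(L)$; one therefore has to check that ``appending $b$'' is compatible with the lift relation.

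For the forward direction ($L'$ has at least the behaviours of $L$), since $L \incl L'$ by $\a\in\lift(\sig(\a))\incl\lift(L)$ whenever $\a\in L$ — more precisely $\pref(L)\incl\pref(L')$ — Lemma~\ref{l:mono} gives immediately that $(p,\l_\init,\es)\lact{\a}(q,\l,B)$ implies $(p,\l_\init,\es)\lpact{\a}(q,\l,B)$, and $B\incl\pref(L)\incl\pref(L')$. So one implication is essentially free from monotonicity; no induction is even needed there. The real content is the converse: a computation under $L'=\lift(L)$ must be matched by one under $L$. Here I would set up the induction on the length of $\a$, and in the inductive step do a case analysis on the last action $a$ exactly as in Figure~\ref{F:hyp-L}. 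The only interesting cases are $\abw(y,v)$ and $\abr(y,v)$ (for $y\in X\cup G$), where $B'$ grows to $B' \cup C'\cdot\{b\}$ with $\es\neq C'\incl B'$ and $C'\cdot\{b\}\incl\pref(L')$; for $\spawn(p')$ the set just gains $\{\spawn(p')\}$ and $\spawn(p')\in\pref(L')=\pref(\lift(L))$ forces $\spawn(p')\in\pref(L)$; the remaining internal/external cases leave the third component unchanged.

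The heart of the matter — and the step I expect to be the main obstacle — is showing that in the $\abw/\abr$ case the appended letter $b$ can be ``pulled back'' into a legal step under $L$. The point is that in the run under $L$ we have by the induction hypothesis a set $B$ whose every element is a lift of an element of $B$-side sequences and vice versa; we must extend a nonempty $C\incl B$ by $b$ with $C\cdot\{b\}\incl\pref(L)$. The subtlety is that $b$ might be a letter that has \emph{already occurred} in the sequences of $C'$ — i.e.\ $b$ need not be a ``new'' action extending a signature — so $C'\cdot\{b\}\incl\pref(\lift(L))$ only because $\lift$ reinserts earlier-seen letters into the ``gaps'' $\a_i$ of the canonical decomposition. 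I would handle this by observing that for any $w\in\pref(\lift(L))$, the signature $\sig(w)$ is a prefix of $\sig(u)$ for some $u\in L$, and $w$ is a lift of the prefix of $u$ having signature $\sig(w)$; so $wb \in \pref(\lift(L))$ iff either $b$ already appears in $w$ (and then $wb$ is again a lift of the same prefix of $u$, since $b$ is one of the $b_0,\dots,b_{i-1}$ allowed to be inserted into the last gap), or $b$ is the ``next'' signature letter, handled by the pull-back to $L$. Concretely I would extract, for the chosen $C'$, the underlying $C\incl B$ it corresponds to, append $b$ to each element, and verify $C\cdot\{b\}\incl\pref(L)$ using this signature analysis; then apply the matching transition of Figure~\ref{F:hyp-L} under $L$. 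Once this lemma on $\pref(\lift(L))$ is isolated and proved, the rest of the case analysis is bookkeeping. I would state that sublemma first — something like: \emph{$w\in\pref(\lift(L))$ iff $w$ is a lift of some $v\in\pref(L)$ with $\sig(v)=\sig(w)$} — prove it directly from Definition~\ref{df:lift}, and then run the two-directional induction on $|\a|$ on top of it.
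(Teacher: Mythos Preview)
Your plan is essentially the paper's proof. The forward implication is disposed of exactly as you say, by $L\subseteq\lift(L)=L'$ and Lemma~\ref{l:mono}; for the converse the paper also fixes an $L'$-run $(p_i,\l_i,B'_i)\lpact{a_i}(p_{i+1},\l_{i+1},B'_{i+1})$ and builds the companion sets $B_i=\{\b\in\pref(L):\lift(\b)\cap B'_i\neq\es\}$, then verifies each step by the same case analysis on $a_i$. Your proposed sublemma ``$w\in\pref(\lift(L))$ iff $w\in\lift(v)$ for some $v\in\pref(L)$ with $\sig(v)=\sig(w)$'' is correct and is precisely the fact underlying the paper's argument.

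One caution on your concrete handling of the $\abr/\abw$ case. Taking $C$ to be ``the set underlying $C'$'' and then checking $C\cdot\{b\}\subseteq\pref(L)$ fails in exactly the situation you yourself flagged: if $b$ already occurs in $\b'\in C'$ and $\b'\in\lift(\b)$ with $\b\in B_i$, then $\b b$ need \emph{not} lie in $\pref(L)$ (e.g.\ $L=\{\spawn(p)\,ab\}$, $\b=\spawn(p)\,a$, $b=a$). In that case $\b' b\in\lift(\b)$ already, so this $\b'$ contributes nothing new to $B_{i+1}$; to obtain a nonempty $C\subseteq B_i$ with $C\cdot\{b\}\subseteq\pref(L)$ realising the transition you must instead pick a shorter prefix of $\b$ ending just before an occurrence of $b$. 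That such an element lies in $B_i$ follows because the sets $B'_i$ (and hence $B_i$) are prefix-closed, which in turn is immediate from the shape of the rules in Figure~\ref{F:hyp-L}. So your case split is needed not only for the analysis but for the actual \emph{choice} of $C$; once you fold that in, the induction closes.
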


\begin{proof}
The left-to-right direction is obvious by monotonicity, since $L\incl L'$.

We focus on the right-to-left direction. The main idea is that the
first occurrences of actions in sequences from $B$ suffice to simulate
any sequence from $B'$.

Assume that $\alpha = a_1\cdots a_n$, and
$(p_i,\l_i,B'_i)\lpact{a_i}(p_{i+1},\l_{i+1},B'_{i+1})$ for $i=1,\ldots,n$
where $(p_1,\l_1,B'_1)=(p,\l_\init,\es)$ and $(p_{n+1},\l_{n+1},B'_{n+1})=(q,\l,B')$.
For $i=1,\ldots,n+1$, we define a subset $B_i\incl \pref(L)$ by 
$B_i = \set{\b\in \pref(L): \lift(\b)\cap B'_i\neq\es}$. 
By case distinction, we verify that indeed
$(p_i,\l_i,B_i)\lact{a_i}(p_{i+1},\l_{i+1},B_{i+1})$ holds for $i=1,\ldots,n$.
If $a_i$ is either $\tau,\ar(x,v),\aw(x,v)$ or in $\Sext$, then
$B'_{i+1} = B'_i$, $B_{i+1}=B_i$, and the assertion holds inductively.
If $a_i$ is the action $\spawn(p')$, then $B'_{i+1} =
B'_i\cup\set{\spawn(p')}$, $B_{i+1} = B_i\cup\set{\spawn(p')}$, and
the assertion holds, again by induction. 

It remains to consider the case where $a_i$ is 
$\abr(y,v)$ or $\abw(y,v)$, where $y\in X\cup G$, $v\in V$. Let
$b=\dist(a_i)$, and 
$B'_{i+1} = B'_i\cup B'\cdot\set{b}$ for some $\es\neq B'\incl B'_i$
with $B'\cdot\set{b}\incl\pref(L')$. We need to find some $B \subseteq
B_i$ such that $B_{i+1}=B_i \cup B \cdot \set{b}$.

Let $B=\set{\b \in B_i: \lift(\b) \cap B' \not=\es}$. We claim that
$B_i \cup B \cdot \set{b}=B_{i+1}$. To show $B_i
\cup B \cdot \set{b} \subseteq B_{i+1}$ we argue that for every $\b
\in B$, $\lift(\b b) \cap B'b \not=\es$. Conversely, let $\b'b \in
B'\set{b}$, and $\b \in B_{i+1}$ such that $\b'b \in \lift(\b)$.  We
need to show that $\b\in B_i \cup B \set{b}$. In particular, we know
that $\b \in
\pref(L)$. 
Either $\b$ is of the form $\b=\b_1 b$ and $\b_1$ has no
$b$. Thus, $\b' \in\lift(\b_1)$, so $\b_1 \in B$ and $\b \in B
\set{b}$. Or $\b=\b_1 b\b_2$ with $\b' \in \lift(\b)$. Since $\b'\in
B'_i$ we have in this case $\b \in B_i$ by definition.\qed

\end{proof}

\noindent
So Lemma \ref{l:hypo} says that child sub-processes can be abstracted
by their external behaviors. 
Lemmas \ref{l:mono} and \ref{l:lift} allow to abstract a set
$L$ of external behaviors 
by a subset $L_1 \subseteq L$,  as long as $L\incl\lift(L_1)$ holds. In the
following, we 
introduce a  
\emph{well-quasi-order} to characterize a smallest such subset, which we call \emph{core}.

\begin{definition}[Order, core]
  We define an order on $\Sext^*$ by $\a\fleq \b$ if
  $\b\in\lift(\a)$. This extends to an order on $\Ssp\cdot\Sext^*$:
  $\spawn(p) \a \fleq \spawn(q) \b$ if $p=q$ and $\a \fleq \b$.
  For a set $L\incl \Ssp\cdot\Sext^*$, we define $\core(L)$ as the set
  of \emph{minimal}   words in $L$ with respect to the relation $\fleq$.
\end{definition}

The following lemma states the most important property of the order $\fleq$:


\begin{lemma}\label{lemma:order}
  The relation $\fleq$ is a well-quasi-order on words with equal signature.
  Since the number of signatures is finite,
  the set $\core(L)$ is finite for every set $L\incl\Ssp\cdot\Sext^*$.
\end{lemma}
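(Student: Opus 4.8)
The plan is to show that, once the signature is fixed, $\fleq$ is nothing but a finite product of scattered-subword orders over finite alphabets, and then to conclude by Higman's Lemma together with the closure of well-quasi-orders under finite products and finite disjoint unions.

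First I would fix a signature $s=b_0b_1\cdots b_k$ and recall that, by the definition of the canonical decomposition (Definition~\ref{def:signatures}), a word $\a$ with $\sig(\a)=s$ factors uniquely as $\a=b_0\a_1 b_1\a_2\cdots b_k\a_{k+1}$ where each factor $\a_i$ uses only letters of the finite set $A_i:=\{b_0,\dots,b_{i-1}\}$ (indeed $b_m\notin\a_i$ for every $m\ge i$, and every letter of $\a$ occurs in $s$). Conversely, given any tuple $(\a_1,\dots,\a_{k+1})$ with $\a_i\in A_i^*$, the word $b_0\a_1b_1\cdots b_k\a_{k+1}$ has signature exactly $s$ and canonical decomposition exactly this tuple. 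Hence $\a\mapsto(\a_1,\dots,\a_{k+1})$ is a bijection between the words of signature $s$ and $A_1^*\times\cdots\times A_{k+1}^*$.

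Next I would match the orders. By the property noted after Definition~\ref{df:lift}, $\a\fleq\b$ forces $\sig(\a)=\sig(\b)$, so $\fleq$ only ever relates words of the same signature $s$. For such words, unfolding the definition of $\lift$ shows that $\a\fleq\b$ means $\b=b_0\b_1b_1\cdots b_k\b_{k+1}$ (which is then $\b$'s canonical decomposition) where $\b_i$ is obtained from $\a_i$ by inserting letters of $A_i$; since $\a_i$ already lies in $A_i^*$, this says precisely that $\b_i\in A_i^*$ and $\a_i$ is a scattered subword of $\b_i$. Thus, through the bijection above, $\fleq$ on the words of signature $s$ is exactly the componentwise scattered-subword order on $A_1^*\times\cdots\times A_{k+1}^*$. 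By Higman's Lemma each $(A_i^*,\preceq)$ is a well-quasi-order, and finite products of well-quasi-orders are well-quasi-orders, so $\fleq$ restricted to words of signature $s$ is a well-quasi-order. The same reasoning covers $\Ssp\cdot\Sext^*$, since $\spawn(p)\a\fleq\spawn(q)\b$ holds iff $p=q$ and $\a\fleq\b$.

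Finally, because $\Sext$ and $\Ssp$ are finite there are only finitely many signatures (each is a repetition-free word over $\Sext$, optionally prefixed by one of the finitely many $\spawn(p)$). Given $L\incl\Ssp\cdot\Sext^*$, split it by signature into finitely many classes $L_s$. In each $L_s$ the relation $\fleq$ is a well-quasi-order, and it is in fact a partial order (if $\a\fleq\b\fleq\a$ then $|\a|\le|\b|\le|\a|$, so no letters were inserted and $\a=\b$), hence $L_s$ has only finitely many $\fleq$-minimal elements. Since $\a\fleq\b$ implies $\sig(\a)=\sig(\b)$, $\core(L)$ is the union over these finitely many classes of their finite sets of minimal elements, hence finite. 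The one delicate point in all of this is the order-matching step: one must check that the alphabet of letters $\lift$ is allowed to insert into the $i$-th factor coincides with the alphabet of that factor, so that ``insertion'' and ``being a scattered subword'' really do agree; granted that, the rest is a routine invocation of Higman's Lemma and the product/union closure of well-quasi-orders.
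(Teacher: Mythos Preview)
Your proof is correct and follows essentially the same approach as the paper: both reduce $\fleq$ on words of a fixed signature to the componentwise scattered-subword order on the tuple of factors in the canonical decomposition, and then appeal to the fact that the subword order is a well-quasi-order. Your version is more explicit than the paper's (you name Higman's Lemma, spell out the bijection with the product $A_1^*\times\cdots\times A_{k+1}^*$, and argue antisymmetry to justify finiteness of the minimal elements), but the underlying argument is the same.
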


\begin{proof}
  We spell out what it means that $\a\fleq \b$, by expanding
  the definition of $\lift(\a)$. 
  First recall that if $\a\fleq\b$ then the two sequences have the
  same signatures.   
  Let $\a=\spawn(p)\a'$ and $\b=\spawn(p)\b'$,
  $\sig(\a')=\sig(\b')=b_1\cdots b_k$ for some $p$.
  Consider the canonical decompositions of $\a',\b'$:
   \begin{equation*}
     \a'=b_1\a'_1b_2\a'_2 \cdots  b_k\a_k\,,\qquad
     \b'=b_1\b'_1b_2\b'_2 \cdots  b_k\b_k\ .
   \end{equation*}
   We have $\a\fleq \b$ iff $\a'_i$ is a scattered subword of $\b'_i$,
   for every $i=1,\dots,k+1$. 
   Since being a scattered subword is a well-quasi-order relation, the
   lemma follows. \qed
\end{proof}

\noindent
Consider, e.g., the set $L=\Ext_1$ of all external behaviors of depth
1 
in Example~\ref{e:simple}. Then $\core(L)$ consists of the sequences:
\begin{equation*}
  \spawn(q)\,\aw(g_0,\#),\;\spawn(p)\,
  \ai(x,1)\ao(x,2)\ao(x,3), \;  \spawn(p)\, \ai(x,1) \ao(x,3)\ao(x,2)
\end{equation*}
together with all their prefixes (recall that $k$ in $\Ext_k$ refers to s-trees
of depth \emph{at most} $k$).

The development till now can be summarized by the following:
\begin{corollary}\label{cor:lift-core}
  For a set $L\incl \Ssp\cdot\Sext^*$, and $L'=\core(L)$:
  $(p,\l_\init,\es)\lact{\a}(q,\l,B)$ for some $B\incl L$ iff
  $(p,\l_\init,\es)\lpact{\a}(q,\l,B')$ for some $B'\incl L'$.
\end{corollary}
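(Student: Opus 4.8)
The plan is to derive Corollary~\ref{cor:lift-core} by composing the two abstraction steps already established, namely Lemma~\ref{l:lift} (lifting adds no behaviour) and Lemma~\ref{l:mono} (monotonicity of $\lact{}$ in the hypothesis), together with the finiteness/minimality content of Lemma~\ref{lemma:order}. First I would fix a set $L\incl\Ssp\cdot\Sext^*$ and put $L'=\core(L)$, i.e.\ the set of $\fleq$-minimal words of $L$. The key order-theoretic fact is that $L\incl\lift(L')$: by Lemma~\ref{lemma:order} the order $\fleq$ restricted to words of any fixed signature is a well-quasi-order, and since there are finitely many signatures, every $\b\in L$ lies above some minimal element $\b_0\in\core(L)=L'$, which by definition of $\fleq$ means $\b\in\lift(\b_0)\incl\lift(L')$. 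Combined with the trivial inclusion $L'\incl L$, this gives $L'\incl L\incl\lift(L')$.

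Next I would run the two directions. For the ``only if'' direction, suppose $(p,\l_\init,\es)\lact{\a}(q,\l,B)$ with $B\incl L$; here $\lact{}$ is the transition relation of $\Ss_L$. Since $L'\incl L$ would go the wrong way, instead I apply Lemma~\ref{l:lift} with the roles set so that $\lift(L')$ plays the part of the larger language: from $B\incl L\incl\lift(L')$ and the execution in $\Ss_L$, I need an execution in $\Ss_{\lift(L')}$, which is immediate by Lemma~\ref{l:mono} (monotonicity, since $L\incl\lift(L')$), giving $(p,\l_\init,\es)\act{\a}_{\lift(L')}(q,\l,B_1)$ with $B_1\incl\pref(\lift(L'))$; then the right-to-left implication of Lemma~\ref{l:lift} applied to $L'$ and $\lift(L')$ yields $(p,\l_\init,\es)\lpact{\a}(q,\l,B')$ with $B'\incl\pref(L')$. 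To land exactly on the statement as written I would note that one may weaken $\pref(L')$ to $L'$ only if the statement intends $B'\incl\pref(L')$; reading the other lemmas (all phrased with $\pref$), I would state the corollary's conclusion as $B'\incl\pref(L')$ and remark that the corollary as displayed elides the $\pref$, consistently with the surrounding usage. For the ``if'' direction, suppose $(p,\l_\init,\es)\lpact{\a}(q,\l,B')$ with $B'\incl\pref(L')$; since $L'\incl L$ gives $\pref(L')\incl\pref(L)$, Lemma~\ref{l:mono} directly yields $(p,\l_\init,\es)\lact{\a}(q,\l,B')$ with $B'\incl\pref(L)$, i.e.\ the desired conclusion with $B=B'$.

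So the proof is essentially bookkeeping: $L'\incl L\incl\lift(L')$ sandwiches the core between $L$ and its lift, and then Lemma~\ref{l:lift} handles the ``no new behaviour from lifting'' half while Lemma~\ref{l:mono} handles both shrinking steps. The one place requiring a small argument is the inclusion $L\incl\lift(L')$, which is exactly where Lemma~\ref{lemma:order} (well-quasi-order on words of equal signature, finitely many signatures) is used to guarantee that every word of $L$ sits above a minimal one; I would spell this out explicitly since it is the only nontrivial link. I expect the main obstacle — really the only subtlety — to be getting the direction of the applications of Lemma~\ref{l:lift} right: that lemma equates executions under $L$ and under $\lift(L)$, so to relate $L$ and $\core(L)$ one must apply it to the pair $(L',\lift(L'))$ rather than to $(L,\lift(L))$, and then use monotonicity to bridge between $L$ and $\lift(L')$. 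Once that is set up, the rest is routine.
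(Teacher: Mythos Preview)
Your proposal is correct and follows essentially the same route as the paper: use $L'\incl L$ plus monotonicity for one direction, and $L\incl\lift(L')$ plus Lemma~\ref{l:lift} (combined with monotonicity) for the other. One small remark: the inclusion $L\incl\lift(\core(L))$ does not actually need the full wqo content of Lemma~\ref{lemma:order}; it only needs well-foundedness of $\fleq$, which is immediate since $\a\fleq\b$ implies $|\a|\le|\b|$, so every word in $L$ sits above some $\fleq$-minimal word of $L$.
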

\begin{proof}
  Since $\core(L)\incl L$, the right-to-left implication follows
  by monotonicity. 
  For the other direction we observe that $L\incl \lift(\core(L))$, so
  we can use Lemma~\ref{lemma:lift} and monotonicity.
\qed
\end{proof}

Now we turn to the question of computing the  relation $\lact{\a}$ for a
\emph{finite} set $L$. For this we need our admissibility assumptions
from page~\pageref{def:admissible}. 

\begin{proposition}\label{p:single-step}
  Let $\Cc$ be an admissible class of automata, and let $\Ss$ be a
  transition system whose associated automaton is in $\Cc$.
  Suppose we have two sets $L,L'\incl\Ssp\cdot\Sext^*$  with $L\incl
  L'\incl\lift(L)$. Consider the set
  \begin{equation*}
    K=\set{\spawn(p)\ext(\alpha) :
      \spawn(p) \in\Ssp\text{ and }(p,\l_\init,\es)\lpact{\alpha}r', \text{for some $r'$}}
  \end{equation*}
  determined by $\Ss$ and $L'$.
  If $L$ is finite then we can compute the sets
  \begin{align*}
    \core(K)\quad\text{and}\quad\core(\set{\a \in K: \text{$\a$ consistent}})\ .
  \end{align*}
\end{proposition}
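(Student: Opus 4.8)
The plan is to construct, from the system $\Ss$ and the finite set $L'$, an automaton in the class $\Cc$ whose accepting paths enumerate exactly the sequences $\spawn(p)\ext(\alpha)$ appearing in $K$, and then extract finite cores using the well-quasi-order of Lemma~\ref{lemma:order}. First I would observe that since $L$ is finite and $L\incl L'\incl\lift(L)$, the set $L'$ may be infinite, but $\pref(L')=\pref(\lift(L))$ has a very simple structure: membership of a word in $\pref(\lift(\b))$ for a fixed $\b\in L$ is recognized by a finite automaton (it only has to track the canonical decomposition position in $\b$ together with which first-occurrence letters have been seen, to allow arbitrary insertions of earlier letters). Taking the union over the finitely many $\b\in L$ and over the finitely many $p$, we obtain a single finite-state automaton $\Aa_{L'}$ recognizing all the ``hypothesis tests'' used by the $\lpact{}$-transitions of Figure~\ref{F:hyp-L}.

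Next I would build the automaton for $K$ as a synchronized product. The configurations of $\Ss_{L'}$ have the shape $(q,\l,B)$ with $B\incl\pref(L')$; the difficulty is that $B$ is a set of words and a priori unbounded. The key simplification is that, by Lemma~\ref{lemma:u-u-in-L}, every $\b\in B$ is a scattered subword of $\dist(\alpha)$, and more usefully, to fire a transition $\abr(y,v)$ or $\abw(y,v)$ one only needs to know, for each ``thread state'' reachable in $L'$, whether some $\b\in B$ ending in that state (in the automaton $\Aa_{L'}$) is present — the actual identity of $\b$ does not matter. So $B$ can be abstracted by the finite set of states of $\Aa_{L'}$ it ``occupies''. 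This makes the reachable part of $\Ss_{L'}$'s control a finite automaton $\Aa'$ over the alphabet $\S$: states are triples (state of $\Ss$ — handled via $\Aa_\Ss$ —, valuation $\l:X\to V$, subset of states of $\Aa_{L'}$), and the transitions implement Figure~\ref{F:hyp-L}. The component carrying the $Q$-state of $\Ss$ lives in $\Cc$; everything else is finite-state. I would then use \emph{alphabet extension} to pad $\Aa_\Ss$ so it ignores the bookkeeping letters, form the \emph{synchronized product} $(\Aa_\Ss\lloop\cdots)\times\Aa'$, which by admissibility lies in $\Cc$, declare every state final, and finally relabel/project transitions via $\ext(\cdot)$ and prefix $\spawn(p)$, composing with a finite transducer. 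By the closure properties this yields an automaton $\Bb\in\Cc$ whose language (from the initial states $(p,\l_\init,\es)$) is exactly $K$; intersecting with the regular language $\Consistent$ (again a synchronized product with a finite automaton) yields an automaton in $\Cc$ for $\set{\a\in K:\a\text{ consistent}}$.

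Finally, to compute $\core(K)$: by Lemma~\ref{lemma:order} the order $\fleq$ is a well-quasi-order on words of equal signature and there are finitely many signatures, so $\core(K)$ is finite. To actually produce it, I would enumerate candidate words in order of length and, for each, use the \emph{constructively decidable emptiness check} of $\Cc$: a word $\b$ is in $K$ iff the automaton $\Bb$ intersected with the (finite) automaton recognizing exactly $\{\b\}$ is non-empty, and since $\Bb\in\Cc$ this product is in $\Cc$. One keeps a word $\b$ iff $\b\in K$ and no already-kept word is $\fleq\b$; the wqo guarantees termination (no infinite antichain), though one needs a termination argument — here I would use that once, for each of the finitely many signatures $\sigma$, we have found a finite set of $\fleq$-minimal elements of $K$ with signature $\sigma$, König-style reasoning on the scattered-subword order bounds the length of any further minimal element, so the search can stop. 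The same procedure applied to $\set{\a\in K:\a\text{ consistent}}$, using its automaton, gives the second core. \textbf{The main obstacle} I expect is precisely the termination of the core-extraction enumeration: the wqo gives finiteness of the answer but not an a priori computable bound on where to stop searching, so one must argue that the minimal elements for a fixed signature can be detected as ``complete'' — e.g.\ by the standard fact that a downward-closed complement of a finitely-generated upward-closed set is recognized once all generators are found, combined with deciding, via the emptiness check, whether $K$ contains any word incomparable to the current candidate core.
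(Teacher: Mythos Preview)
There are two genuine gaps.

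First, the claim $\pref(L')=\pref(\lift(L))$ is false: the hypothesis only says $L\incl L'\incl\lift(L)$, and $L'$ is otherwise completely arbitrary---it need not be regular or even decidable---so you cannot in general build a finite automaton $\Aa_{L'}$ recognizing $\pref(L')$. The paper's first move is precisely to eliminate $L'$: by Lemmas~\ref{l:mono} and~\ref{lemma:lift}, since $L\incl L'\incl\lift(L)$, the relations $\lact{}$ and $\lpact{}$ coincide, so one may work with the \emph{finite} set $L$ throughout. Then $B$ ranges over subsets of the finite set $\pref(L)$, and the finite automaton you want (states $(\l,B)$, transitions as in Figure~\ref{F:hyp-L} with the $q$-component stripped) is immediate---no abstraction of $B$ by automaton states is needed.

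Second, admissibility (Definition~\ref{def:admissible}) does \emph{not} include closure under projection or composition with finite transducers, so your step ``relabel/project transitions via $\ext(\cdot)$ \ldots\ this yields an automaton $\Bb\in\Cc$ whose language is exactly $K$'' is unjustified. The paper never projects: it keeps the product automaton $\Aa^K=\Aa'_\Ss\times\Aa_L$ whose paths are labelled by the \emph{unprojected} sequences $\alpha$, and pushes $\ext$ to the finite-automaton side of every test. To check whether a given $\b$ is in the $\ext$-image, it takes the product of $\Aa^K$ with the finite automaton for $\{\alpha':\ext(\alpha')=\b\}$; to exclude words above an already-found minimal $\b_i$, it takes the product with the finite automaton $\Nn_{\b_i}$ for $\{\alpha':\b_i\not\fleq\ext(\alpha')\}$. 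All these products stay in $\Cc$, and the iteration terminates when the accumulated product becomes empty. Your core-extraction idea (detect completion by testing whether $K$ still contains something not above the current candidates) is exactly this, but it only goes through once $\ext$ has been moved out of the $\Cc$-automaton and into the finite automata you intersect with.
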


The proof of the above proposition works by augmenting the transition 
system $\Ss$ by a finite-state component taking care of the
valuation of local variables and of prefixes of $L$ that were used in
the hypothesis. The admissibility of $\Cc$ is then used to compute the
core of the language of the automaton thus obtained.

\begin{proof} 
By Lemmas~\ref{l:mono} and~\ref{l:lift}, the relations $\lact{\a}$ and $\lpact{\a}$
are the same.
Since $L$ is finite, there are only finitely many sets $B\incl L$.
Moreover, the number of valuations $\l$ is finite by our initial
definitions. 
This allows to construct a finite automaton $\Aa$, whose states are pairs
$(\l,B)$ and transitions are as those of $\lact{}$ but without the
first component:
\[
\begin{array}{lclll}
    (\l,B)&\act{\abw(g,v)}&(\l,B \cup B'\cdot\set{\aw(g,v)})    &\text{if}& \es\neq B'\incl B,\;
                                                          B'\cdot\{\aw(g,v)\}\incl
                                                          \pref(L),\;\\
     (\l,B)&\act{\abr(g,v)}&(\l,B \cup B'\cdot\set{\ar(g,v)})
                                                &\text{if}& \es\neq
                                                B'\incl B, \;
                                                          B'\cdot\{\ar(g,v)\}\incl\pref(L)
  \end{array}
\]
\[
  \begin{array}{lclll}
    (\l,B)&\act{\spawn(p)}&(\l,B\cup\{\spawn(p)\})   &\text{if $\spawn(p)\in\pref(L)$}\\
    (\l,B)&\act{\aw(x,v)}&(\l[v/x],B)    \\
    (\l,B)&\act{\ar(x,v)}&(\l,B)    &\text{if $\l(x)=v$}\\
    (\l,B)&\act{\abw(x,v)}&(\l[v/x],B \cup B'\cdot\set{\ao(x,v)})
    &\text{if } \es\neq B'\incl B, \;B'\cdot\{\ao(x,v)\}\incl\pref(L)\\
     (\l,B)&\act{\abr(x,v)}&(\l,B \cup B'\cdot\set{\ai(x,v)})
                                                &\text{if } \l(x)=v,
                                                \, \es\neq
                                                B'\incl B, \;
                                                          B'\cdot\{\ai(x,v)\}\incl
                                                            \pref(L)\\
  (\l,B)&\act{a}&(\l,B)&\text{for all } a \in \Sext
  \end{array}
\]

Now consider the automaton $\Aa_\Ss$ associated to $\Ss$. 
This automaton belongs to our admissible class $\Cc$, so its alphabet
extension is also in $\Cc$:
\begin{equation*}
\Aa'_\Ss=\Aa_\Ss\lloop\set{\abr(y,v),\abw(y,v) :   y\in X\cup G, v\in
  V}\ .
\end{equation*}
Intuitively, we add to $\Aa_\Ss$ self-loops on actions that are in $\Aa_L$ but not
in $\Aa_\Ss$. 
Finally, consider the product $\Aa^K=\Aa'_\Ss\times \Aa_L$.
We have that for every pair of states $q,q'$, valuations $\l, \l'$, and
sets $B,B'$: $(q,\l,B)\lact{\a}
(q',\l,B')$  iff there is a path labeled $\a$ from
$(q,\l,B)$ to $(q',\l,B')$ in $\Aa^K$.\igw{Here actions $\abw(g,v)$, $\abr(g,v)$
  are important}
Then $\spawn(p)\a\in K$ iff there is path in $\Aa^K$ from
$(p,\l_\init,\es)$  labeled by some $\a'$ with
$\ext(\a')=\a$. 

The above paragraph says that in order to compute $\core(K)$ it is
enough to compute $\core(\ext(K_p))$ where $K_p$ is the set of labels of
runs of $\Aa^K$ from $(p,\l_\init,\es)$.
Since $\Aa^K$ belongs to our admissible class $\Cc$, we can use
the effective emptiness test. 
If the language of $\Aa^K$ is not empty then 
we can find a word $\a_1$ that is accepted from $(p,\l_\init,\es)$.
Next we look for $\b\in \core(\ext(K_p))$ with $\b\fleq\ext(\a_1)$.
To this end, for every $\b\fleq \ext(\a_1)$ we consider an automaton
$\Aa^\b$ accepting all the words $\a'$ such that $\ext(\a')=\b$.
We build the product of $\Aa^K$ with $\Aa^\b$ and check for emptiness.
Then we choose one minimal $\b_1$ for which this product is non-empty.

To find a next word from $\core(ext(K_p))$ we construct a finite automaton
$\Nn_{\b_1}$ accepting all words $\a'$ such that $\b_1\not\fleq
\ext(\a')$. 
Then we consider $\Aa^K\times \Nn_{\b_1}$ instead of
$\Aa^K$. If the language accepted by $\Aa^K\times \Nn_{\b_1}$ is not
empty then we get a word $\a_2$ in the language. 
We apply the above procedure  to $\a_2$, iterating through all words $\b\fleq \ext(\a_2)$ and checking if 
the language of $\Aa^K\times \Aa^\b_{\b_1}$ is empty; here 
$\Aa^\b_{\b_1}$ is a finite automaton accepting all words $\a'$ such
that $\ext(\a')=\b$ and $\a'\in\Nn_{\b_1}$. 
We choose one minimal $\b_2$ for which such a product is non-empty.
For the following iteration we  construct
$\Nn_{\b_2}$ accepting all words $\a'$ such that $\b_2\not\fleq
\ext(\a')$.
So $\Nn_{\b_1}\times\Nn_{\b_2}$ accepts all words $\a'$ that 
$\b_1\not\fleq \ext(\a')$ and $\b_2\not\fleq \ext(\a')$.
We continue this way, finding words $\b_1,\dots,\b_k\in
\core(\ext(K_p))$ till $\Aa^K\times
\Nn_{\b_1}\times\dots\times\Nn_{\b_k}$ is empty.  
At that point we know that $\set{\b_1,\dots,\b_k}=\core(\ext(K_p))$.

This procedure works also for the second statement by observing that
the set of all consistent sequences, let us call it $\Consistent$, is a
regular language. So instead of starting with $K$ in the above
argument we start with $K\cap\Consistent$.
\qed


\end{proof}

In the next two corollaries, $\Ss$ is such that its associated
automaton $\Aa_\Ss$ belongs to an admissible class.

\begin{corollary}\label{cor:base}
  The sets $\core(\Ext_0)$ and $\core(\Ext_0\cap\Consistent)$ are
  computable.
\end{corollary}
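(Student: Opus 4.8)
The plan is to apply Proposition~\ref{p:single-step} with the degenerate hypothesis $L = L' = \emptyset$. First I would check that this choice satisfies the hypotheses of the proposition: trivially $\emptyset \incl \emptyset \incl \lift(\emptyset) = \emptyset$, and $L = \emptyset$ is finite. With $L' = \emptyset$ we have $\pref(L') = \emptyset$, so in the transition system $\Ss_{L'}$ no \textsf{spawn} transition can fire (the side condition $\spawn(p) \in \pref(L')$ always fails), and likewise none of the $\abr,\abw$ transitions referring to child sub-processes can fire. Hence the only configuration component $B$ that ever occurs is the empty set, and $(p,\l_\init,\es) \lpact{\a} r'$ exactly captures the runs of a process that performs no spawns at all and only manipulates its own local variables, global variables, and parent locals directly --- which is precisely the behaviour of s-trees of depth $0$.

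Next I would identify the set $K$ of Proposition~\ref{p:single-step} with $\Ext_0$ (up to the usual $\ext$-renaming built into both definitions). Concretely, by definition
\[
  \Ext_0 = \set{\spawn(p)\,\ext(\a) : (p,\l_\init,\es) \sact{\a}_0 s \text{ for some } s,\ \spawn(p) \in \Ssp}\,,
\]
and a run $(p,\l_\init,\es)\sact{\a}_0 s$ in the set semantics restricted to depth $0$ uses no spawns and no child actions, so it coincides with a run $(p,\l_\init,\es)\lzact{\a} r'$ of $\Ss_\emptyset$; this is the depth-$0$ base case of the correspondence already proved in Lemma~\ref{l:hypo} (or can be read off directly from the transition rules, since with an empty hypothesis both semantics collapse to the same thing). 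Therefore the set $K$ produced by Proposition~\ref{p:single-step} from $\Ss$ and $L' = \emptyset$ equals $\Ext_0$, and the proposition tells us that $\core(K) = \core(\Ext_0)$ and $\core(\set{\a \in K : \a \text{ consistent}}) = \core(\Ext_0 \cap \Consistent)$ are both computable.

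I expect the only real point requiring care is the bookkeeping in the second paragraph: verifying that "runs of $\Ss_\emptyset$ from $(p,\l_\init,\es)$" and "depth-$0$ runs in the set semantics from $(p,\l_\init,\es)$" generate the same external sequences, including the handling of the $\ext$ renaming of barred global actions. This is genuinely routine --- it is the $k=0$ instance of Lemma~\ref{l:hypo}, whose statement gives $(p,\l_\init,\es)\lzact{\a}(q,\l,B)$ for some $B$ iff $(p,\l_\init,\es)\sact{\a}_{0} (q,\l,S)$ for some $S$ once we note $\Ext_{-1} = \emptyset$ in spirit, or simply because with $L=\emptyset$ Lemma~\ref{l:hypo}'s proof specializes trivially --- so no new argument is needed. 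Everything else is a direct invocation of Proposition~\ref{p:single-step}.
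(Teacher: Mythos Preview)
Your proposal is correct and follows essentially the same approach as the paper: invoke Proposition~\ref{p:single-step} with the trivial choice $L=L'=\emptyset$, after noting that runs of $\Ss_\emptyset$ from $(p,\l_\init,\es)$ coincide with depth-$0$ runs in the set semantics (both forbid spawns and child actions, forcing $B=\emptyset$ resp.\ $S=\emptyset$ throughout). The paper's proof is simply a terser version of yours; it states the correspondence $\spawn(p)\a\in\Ext_0$ iff $(p,\l_\init,\es)\lzact{\a}_\es(q,\l,\es)$ directly rather than appealing to (a degenerate instance of) Lemma~\ref{l:hypo}.
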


\begin{proof}
As we are concerned with  s-trees of depth 0, all occurring 
configurations are of the form $(q,\l,\es)$. 
This means that $\spawn(p)\a\in \Ext_0$ iff
$(p,\l_\init,\es)\lzact{\a}_{\es} (q,\l,\es)$ for some $q$ and $\l$.
We can then use Proposition~\ref{p:single-step} with $L=L'=\es$.
\qed
\end{proof}

\begin{corollary}\label{c:core}
  Under the hypothesis of Proposition~\ref{p:single-step}: for every $k\geq 0$, we can compute $\core(\Ext_k)$ and
  $\core(\Ext_k\cap\Consistent)$. 
\end{corollary}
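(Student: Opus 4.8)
The plan is to prove Corollary~\ref{c:core} by induction on $k$, using Corollary~\ref{cor:base} as the base case and Proposition~\ref{p:single-step} together with Lemma~\ref{l:hypo} for the inductive step. For $k=0$ the claim is exactly Corollary~\ref{cor:base}. So suppose $k\geq 1$ and that $\core(\Ext_{k-1})$ has already been computed; I want to compute $\core(\Ext_k)$ and $\core(\Ext_k\cap\Consistent)$.

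First I would set $L = \Ext_{k-1}$ and $L' = \lift(\core(L))$. By Lemma~\ref{lemma:order} the set $\core(L)$ is finite, and it is computable by the induction hypothesis; hence $L$ plays the role of the ``finite set'' required by Proposition~\ref{p:single-step} once we replace it by $\core(L)$, and we have $\core(L)\incl \lift(\core(L)) = L'$ and also $L\incl\lift(\core(L))=L'$, while $L'\incl\lift(\core(L))$ trivially; so the hypothesis $\core(L)\incl L'\incl\lift(\core(L))$ of Proposition~\ref{p:single-step} holds with $\core(L)$ in place of the proposition's ``$L$''. (A small point to check: $L'$ itself is infinite, but the proposition only needs the lower set $L$ to be finite, so feeding it $\core(\Ext_{k-1})$ is legitimate.) Applying Proposition~\ref{p:single-step} then yields the computability of $\core(K)$ and $\core(\set{\a\in K:\a\text{ consistent}})$ for
\[
  K=\set{\spawn(p)\ext(\a) : \spawn(p)\in\Ssp,\ (p,\l_\init,\es)\lpact{\a} r'\text{ for some }r'}.
\]

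The remaining work is to identify $\core(K)$ with $\core(\Ext_k)$. By Corollary~\ref{cor:lift-core} applied to $L=\Ext_{k-1}$ (so that $L'=\core(L)$ there), and by Lemma~\ref{l:lift} applied to the pair $\core(\Ext_{k-1})$ and $\lift(\core(\Ext_{k-1}))$, the reachability relations $\lzact{\a}$ over the hypotheses $\Ext_{k-1}$, $\core(\Ext_{k-1})$ and $\lift(\core(\Ext_{k-1}))$ all produce the same set of labels $\ext(\a)$ starting from $(p,\l_\init,\es)$. Now Lemma~\ref{l:hypo} (with its $k$ equal to our $k-1$) says that, under the hypothesis $\Ext_{k-1}$, the configurations reachable from $(p,\l_\init,\es)$ by a sequence $\a$ coincide with those reachable by $\sact{\a}_k$ in the set semantics; hence $\set{\spawn(p)\ext(\a): (p,\l_\init,\es)\lact{\a}r,\ L=\Ext_{k-1}}$ is precisely $\Ext_k$. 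Combining these, $K = \Ext_k$, so $\core(K) = \core(\Ext_k)$ and likewise $\core(\set{\a\in K:\a\text{ consistent}}) = \core(\Ext_k\cap\Consistent)$, which completes the induction.

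The main obstacle I anticipate is bookkeeping the index shift and the three-way equivalence of hypotheses cleanly: one must be careful that Proposition~\ref{p:single-step} is being invoked with the \emph{finite} set $\core(\Ext_{k-1})$ as its ``$L$'' and with $\lift(\core(\Ext_{k-1}))$ (or any intermediate set) as its ``$L'$'', and that Lemma~\ref{l:hypo}'s statement about $\sact{}_{k}$ matches the height convention ($\Ext_k$ refers to s-trees of depth \emph{at most} $k$). Once the indices and the chain $\core(\Ext_{k-1})\incl L'\incl\lift(\core(\Ext_{k-1}))$ are lined up, the rest is a direct concatenation of Corollary~\ref{cor:lift-core}, Lemma~\ref{l:lift}, Lemma~\ref{l:hypo} and Proposition~\ref{p:single-step}, with no further calculation required.
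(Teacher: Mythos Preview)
Your proposal is correct and follows essentially the same inductive scheme as the paper: base case by Corollary~\ref{cor:base}, inductive step by feeding the finite set $\core(\Ext_{k-1})$ into Proposition~\ref{p:single-step} and identifying the resulting $K$ with $\Ext_k$ via Lemma~\ref{l:hypo}. The only cosmetic difference is the choice of $L'$: you take $L'=\lift(\core(\Ext_{k-1}))$ while the paper (implicitly) takes $L'=\Ext_{k-1}$; both satisfy $\core(\Ext_{k-1})\incl L'\incl\lift(\core(\Ext_{k-1}))$, and your extra paragraph spelling out the three-way equivalence of hypotheses via Corollary~\ref{cor:lift-core} and Lemma~\ref{l:lift} simply makes explicit what the paper leaves to the reader.
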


\begin{proof}
  We start with $L_0=\core(\Ext_0)$ that we can compute by
  Corollary~\ref{cor:base}. 
  Now assume that $L_i=\core(\Ext_i)$ has already been computed.
  By Lemma~\ref{l:hypo}, $L_{i+1}$ equals
  the core of
  $\set{\spawn(p)\ext(\a):(p,\l_\init,\es)\lIact{\a}r,\;\text{some $r$}}$
  which, by Proposition~\ref{p:single-step}, is effectively computable.
\qed
\end{proof}

\noindent
Now we have all ingredients to prove Theorem~\ref{thm:main}.

\begin{proof}[of Theorem~\ref{thm:main}]
  Take a process $\Ss$ as in the statement of the theorem.
  The external behaviors of $\Ss$ are described by the language
  \begin{equation*}
    L=\bigcup_{k\in\Nat}\set{\spawn(p)\ext(\a) : (p,\l_\init,\es)\lzact{\a}_{\Ext_k} r,\text{ for some } r}   
  \end{equation*}
  If we denote $L_k=\core(\Ext_k)$ then by
  Corollary~\ref{cor:lift-core}, the language $L$ is equal to
  \begin{equation*}
    L'=\bigcup_{k\in\Nat}\set{\spawn(p)\ext(\a) : (p,\l_\init,\es)\lzact{\a}_{L_k} r,\text{ for some } r}   
  \end{equation*}
  By definition, $\Ext_0\incl \Ext_1\incl\cdots$ is an increasing
  sequence of sets.
  By Lemma~\ref{lemma:order}, this means that  there is some $m$ so that
  $\core(\Ext_m)=\core(\Ext_{m+i})$, for all $i$. Therefore, $L'$ is
  equal to
  \begin{equation*}
    \set{\spawn(p)\ext(\a) : (p,\l_\init,\es)\lzact{\a}_{L_m} r,\text{ for some } r}   
  \end{equation*}
  By Corollary~\ref{c:core}, the set $L_m=\core(\Ext_m)$ is computable and so is
  $\core(L'\cap \Consistent)$. 
  Finally, we check if in this latter set there is a sequence starting with
  $\spawn(q_\init)$  and an external write or an output of
  $\#$.
\qed
\end{proof}


\section{Processes with generalized features}

In this section we consider \emph{pushdown} dynamic parametric  processes
where a sub-process cannot write to its
own variables, but only to the variables of its parent.
This corresponds to the situation when after a parent has created sub-processes,
the latter may communicate computed results to the parent
and to their siblings, but the parent cannot
communicate to child sub-processes. 
We call such a model a \emph{pushdown dynamic parametric process with
  generalized futures}.
We have seen an example of such a system in Figure~\ref{f:example}.
Technically, processes with generalized futures are obtained by disallowing $\aw(x,v)$ actions
in our general definition.
Additionally, we rule out global variables, i.e., $G=\es$.
Accordingly, we may no longer define reachability via reachability of a write action to some 
global variable, but as reachability of an output action $\ao(x,\#)$
of some special value $\#$ to some variable $x$ of the root process.

For processes with generalized futures, reachability can be decided by a somewhat simpler approach. 
In particular, we present an \EXPTIME\ algorithm to decide reachability.

In this section we need an additional assumption concerning the
initial value of variables.  Since this initial value causes problems as it
is the one that cannot be reproduced once overwritten, we require:\\

\noindent\textbf{Proviso:}
We consider systems where the initial value $v_\init$ of a variable can
  be neither read nor written. \igw{Important new thing}

\medskip

Since in the case that we consider in this section
there are no global variables, the external alphabet simplifies to:
\[
\Sext= \set{\ai(x,v),\ao(x,v) : x\in X, v\in V}\,.
\]
From the definition of transitions $r_1\lact{a} r_2$ of sub-processes
modulo hypothesis we can see that the label $a$ can be either an
external action, or internal action of the form:
$\t,\spawn(p),\ar(x,v),\abw(x,v),\abr(x,v)$.

Disallowing $\aw(x,v)$ operations has an important impact on the
$\lact{\a}$ semantics.  
By inspecting the rules, we notice that there is only one remaining rule, 
namely $\abw(x,v)$, that changes the value of the component $\l$, and this
rule does not change the state component. 
%


\noindent
The next lemma is the main technical step in this section.  It says
that for processes with generalized futures, $\sig(\Ext_k)$ as
hypothesis yields the same behaviors as $\Ext_k$.

\begin{lemma}\label{lem:sig-equal}
  For a dynamic parametric process $\Ss$ with generalized futures, let
  $L=\sig(\Ext_k)$ and $L' = \Ext_k$. We have:
$(p,\l_\init,\es)\lact{\a}(q,\l,B)$ for some $\l$, $B\incl L$ iff
$(p,\l_\init,\es)\lpact{\a'}(q,\l',B')$ for some $\l'$, $B'\incl L'$, and some
$\a'$ with $\ext(\a)=\ext(\a')$. 
\end{lemma}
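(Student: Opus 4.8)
The plan is to prove the two directions separately, noting first that the left‑to‑right direction is easy: since $\sig(\Ext_k)\incl\lift(\sig(\Ext_k))$ and in fact $\Ext_k\incl\lift(\sig(\Ext_k))$ (because $\a\in\lift(\sig(\a))$ for every $\a$), Lemma~\ref{l:lift} together with monotonicity (Lemma~\ref{l:mono}) already gives, from $(p,\l_\init,\es)\lact{\a}(q,\l,B)$ with $L=\sig(\Ext_k)$, a run $(p,\l_\init,\es)\lpact{\a}(q,\l,B')$ with $L'=\Ext_k$ and $B'\incl\pref(\Ext_k)$; here $\a'=\a$, so $\ext(\a)=\ext(\a')$ trivially. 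So the content is in the right‑to‑left direction, where we are given a run of $\Ss_{\Ext_k}$ and must reconstruct a run of $\Ss_{\sig(\Ext_k)}$ with the \emph{same external projection} $\ext$, but possibly a different local valuation $\l$ and a different witnessing word $\a$.

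For the hard direction I would proceed by induction on the length of $\a'=a_1\cdots a_n$, maintaining configurations $(p_i,\l'_i,B'_i)$ along the given run under $L'=\Ext_k$ and building configurations $(p_i,\l_i,B_i)$ under $L=\sig(\Ext_k)$ together with an auxiliary word $\a$ built incrementally, so that $\ext(\a)=\ext(a_1\cdots a_n)$, the state component agrees ($p_i$ is the same — this is where the Proviso and the observation just before the lemma are used: without $\aw(x,v)$, the only rule touching $\l$ is $\abw(x,v)$, which never changes the state), and $B_i$ relates to $B'_i$ via the signature map, e.g.\ $B_i=\{\sig(\b):\b\in B'_i\}$ or $B_i=\{\b\in\pref(\sig(\Ext_k)):\lift(\b)\cap B'_i\neq\es\}$, mirroring the invariant used in the proof of Lemma~\ref{l:lift}. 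The case analysis on $a_i$ is routine for $\t$, $\spawn(p)$, $\ar(x,v)$ and the genuinely external actions $\ai,\ao$ (they do not interact with $B$ or are copied verbatim, contributing the corresponding letter to $\a$), and the interesting cases are $\abr(x,v)$ and $\abw(x,v)$: here some nonempty $B''\incl B'_i$ with $B''\cdot\{\dist(a_i)\}\incl\pref(\Ext_k)$ is used, and I would replace it by the set of signatures of its members, checking that $\{\sig(\b):\b\in B''\}\cdot\{\dist(a_i)\}\incl\pref(\sig(\Ext_k))$. This last containment is the crux: I need that appending a letter to a signature of a prefix of an $\Ext_k$‑word stays inside $\pref(\sig(\Ext_k))$; this follows because $\Ext_k=\Ext_k$ here is closed under the relevant operations and because the letter $\dist(a_i)$ in question, being produced in the $L'$‑run, is a first occurrence relative to the chosen $B''$‑element exactly when we pass to signatures — and when it is not a first occurrence, it is already present, so we can simply omit that $\abr/\abw$ step from $\a$ entirely without changing $\ext(\a)$ or the reached configuration, using that under the set/hypothesis semantics re‑performing an action already recorded in $B$ has no effect (analogous to Lemma~\ref{label:insert-out}).

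The main obstacle I anticipate is bookkeeping the correspondence between $B'_i$ and $B_i$ precisely enough that the two subtle sub‑cases — (a) $\dist(a_i)$ is a genuinely new action for the chosen witness, so it survives into the signature and must be matched by a step in the $L$‑run; (b) $\dist(a_i)$ already occurred, so the corresponding step must be \emph{dropped} from $\a$ — are handled without breaking the invariant $\ext(\a)=\ext(\a')$ and without losing elements of $B$ that are needed later. Since dropping a step changes the length of $\a$ relative to $\a'$, the induction must be organized on $\a'$ (the given run) with $\a$ as an output, not as a synchronous lockstep; I would state the inductive hypothesis carefully to allow $\a$ to be a ``compressed'' version of $a_1\cdots a_i$ with the same $\ext$‑image. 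Everything else reduces to the closure properties of $\Ext_k$ already implicit in its definition and to the signature/lift machinery of Definitions~\ref{def:signatures} and~\ref{df:lift}.
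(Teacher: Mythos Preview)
You have the two directions reversed. The \emph{right-to-left} implication (from $L'=\Ext_k$ to $L=\sig(\Ext_k)$) is the easy one: since $\Ext_k\incl\lift(\sig(\Ext_k))$, monotonicity lifts a run under $\Ext_k$ to one under $\lift(\sig(\Ext_k))$, and then Lemma~\ref{l:lift} brings it down to $\sig(\Ext_k)$. Your attempted use of Lemma~\ref{l:lift} and monotonicity for the \emph{left-to-right} direction does not work: from a run under $\sig(\Ext_k)$ Lemma~\ref{l:lift} gives a run under $\lift(\sig(\Ext_k))$, but $\Ext_k$ is a \emph{subset} of $\lift(\sig(\Ext_k))$, so monotonicity points the wrong way. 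The underlying obstruction, which the paper states explicitly, is that $\sig(\Ext_k)$ may contain sequences that are \emph{not} in $\Ext_k$ (taking the signature of a real behavior need not yield a real behavior), so there is no inclusion $\sig(\Ext_k)\incl\Ext_k$ to exploit.

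Consequently, the detailed plan you give (taking $B_i=\{\sig(\b):\b\in B'_i\}$, dropping $\abr/\abw$ steps whose $\dist$-letter already occurs, etc.) is aimed at the direction that is already handled for free by Lemma~\ref{l:lift}. The actual work lies in the opposite direction: given a run that uses only signatures as hypothesis, you must manufacture a run that uses genuine prefixes of $\Ext_k$. Since a signature $\b=\spawn(p_\b)b_1\cdots b_n$ corresponds to a real behavior $\b'=\spawn(p_\b)b_1\g_1\cdots b_n\g_n$ with extra segments $\g_j$ in between, the simulated $B'$-component has to advance through these $\g_j$'s before it can offer the next $b_{j+1}$. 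This forces you to \emph{insert} internal actions $\d_i$ into $\a$ (not drop them), and in particular to re-establish, on demand, values of local variables that those extra inputs in the $\g_j$'s want to read. This is precisely where the generalized-futures restriction and the Proviso are used: since the parent never writes its own locals, any value $v\neq v_\init$ currently in $x$ must have arrived via some earlier $\abw(x,v)$, so an $\ao(x,v)$-ending word already sits in the $B'$-component and can be replayed to restore $\l(x)=v$. Your proposal does not contain this idea, and without it the left-to-right direction does not go through.
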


\begin{proof}
 For the right-to-left implication observe that
  $\Ext_k\incl\lift(\sig(\Ext_k))$.
  So, if $(p,\l_\init,\es)\lpact{\a}(q,\l,B)$ then 
  $(p,\l_\init,\es)\lppact{\a}(q,\l,B)$ for 
  $L''=\lift(\sig(\Ext_k))$.
  But then Lemma~\ref{lemma:lift} gives us
  $(p,\l_\init,\es)\lact{\a}(q,\l,B')$ for some $B'$.

  For the left-to-right implication note first that it does not follow from
  monotonicity, since $\sig(\Ext_k)$ may contain sequences that are
  not in $\Ext_k$.  So assume that $\alpha = a_1\cdots a_n$, and
  $(p_i,\l_i,B_i)\lact{a_i}(p_{i+1},\l_{i+1},B_{i+1})$ for
  $i=1,\ldots,n$ where $(p_1,\l_1,B_1)=(p,\l_\init,\es)$ and
  $(p_{n+1},\l_{n+1},B_{n+1})=(q,\l,B)$.  We look for subsets $B'_i$
  of $\Ext_k$, for $i=1,\dots,n+1$, such that
  $(p_i,\l'_i,B'_i)\lpact{\d_ia_i} (p_{i+1},\l'_{i+1},B'_{i+1})$ for
  some sequence $\d_i$ of internal actions.  The additional $\d$'s are
  needed since the elements of $L$ are subsequences of those  from $L'$.

  For every \emph{maximal} signature $\b\in B_{n+1}\incl\pref(L)$, we
  fix a sequence $\b'\in \pref(L')$
  such that $\sig(\b')=\b$: for
  $\b=\spawn(p_\b)b_{\b,1}\cdots b_{\b,n_\b}$ we will write
  $\b'=\spawn(p_\b)b_{\b,1}\g_{\b,1}\cdots
  b_{\b,n_\b}\g_{\b,n_\b}$, for some $\g_{\b,j}\in \Sext^*$.
  
  We define $B'_{i}$ as the smallest prefix closed set that satisfies
  the following property for
  every maximal sequence $\b\in  B_{n+1}$:  if the prefix $\spawn(\b)b_{\b,1}\cdots
  b_{\b,j}$ is in $B_i$ (for
  some $j\leq n_\b$) then    $\spawn(p_\b)b_{\b,1}\g_{\b,2}\cdots
  b_{\b,j-1}\g_{\b,j-1}b_{\b,j}$ 
  is in $B'_i$.

  It remains to prove that
  $(p_i,\l'_i,B'_i)\lpact{\d_ia_i} (p_{i+1},\l'_{i+1},B'_{i+1})$ for some
  sequence $\d_i$ of internal actions.
  For that we perform a case distinction on the action $a_i$.
    If $a_i$ is $\tau$, $\ai(x,v)$ or $\ao(x,v)$, then only the state changes,
  so the assertion holds by induction.
  If $a_i$ is $\spawn(p')$, then $B_{i+1} = B_i\cup\set{\spawn(p')}$.
  Likewise, $B'_{i+1} = B'_i\cup\set{\spawn(p')}$ according to our
  claim.

The case $a_i=\ar(x,v)$ is a bit more tricky. If $\l'_i(x)=v$ then we
can do $(p_i,\l'_i,B'_i)\lpact{a_i} (p_{i+1},\l'_{i+1},B'_{i+1})$ immediately. If
not, we need to re-establish the value $v$ for $x$. Since by our
assumption the initial value is neither read nor written, we know that there must be
some $j<i$ with $a_j=\abw(x,v)$. So $B'_i$ contains some sequence
ending with $\ao(x,v)$. But then we can ``replay'' this output and
do $a_i$: 
$(p_i,\l'_i,B'_i)\lpact{\abw(x,v)} (p_i,\l''_i,B'_i) \lpact{\ar(x,v)}
(p_{i+1},\l''_i,B'_i)$. 
  
  The last, more involved case is when  $a_i \in
  \set{\abr(x,v),\abw(x,v): x\in X, v\in V}$.
  Then $B_{i+1}=B_i\cup B\cdot\set{b_i}$ for $b_i=\dist(a_i)$ and some
  non-empty $B\subseteq B_i$ with 
  $B\cdot\set{b_i}\incl B_{n+1}$. 
  Consider the set
  \begin{equation*}
    B'=\set{\spawn(p_\b)b_{\b,1}\g_{\b,2}\cdots b_{\b,j} :
      \spawn(p_\b)b_{\b,1}\cdots b_{\b,j}\in B}\,.
  \end{equation*}
  By definition, $B'\incl B'_i$, and $B'$ is not empty since $B$ is
  not empty.  Take some element of $B'$, say
  $\b'=\spawn(p_\b)b_{\b,1}\g_{\b,2}\cdots b_{\b,j}$.  For
  $\g=\g_{\b,j+1}$ we have $\b' 
  \g\in \pref(L')$ and $\sig(\b'\, \g)=\sig(\b')$.  We claim that we can construct a run
  $(p_i,\l'_i,\bar{B}_1)\lpact{\d}(p_i,\l''_i,\bar{B}_2)$, for every
  $\bar{B}_1$ containing $\b'$, and $\bar{B}_2$ consisting of
  $\bar{B}_1$ and all prefixes of $\b'\g$. The sequence $\d$ consists
  of internal actions.

  Assuming this claim, that we prove in the next paragraph,
  we proceed as above for each  sequence in $B'$, one after the other. 
  Since $B'$ is finite, say with $k$ elements, at the end we get a computation 
  $(p_i,\l'_i,B'_i)\lpact{\d_1\cdots\d_k}(p_i,\l''_i,B''_i)$ with
  $B''_i$ consisting of $B'_i$ and all prefixes of the set
  \begin{equation*}
    B''=\set{\spawn(p_\b)b_{\b,1}\g_{\b,2}\cdots b_{\b,j-1}\g_{\b,j} :
      \spawn(p_\b)b_{\b,1}\cdots b_{\b,j-1}\in B}\,.
  \end{equation*}
  Observe that $B''_i\cup B''\cdot\set{b_i}=B'_{i+1}$.
  Then we can do
  $(p_i,\l'_i,B''_i)\lpact{a_i}(p_{i+1},\l'_{i+1},B'_{i+1})$ as
  claimed at the beginning.

  It remains to prove the claim from the above paragraph. 
  Consider a configuration $(q,\l',\bar{B})$ with $\bar{B} \incl \pref(L')$ a prefix
  closed set, 
  and a sequence $\b'\in \bar{B}$. We want to show that for every sequence
  $\g$ of external
  actions such that $\b'\g\in \pref(L')$ and 
  $\sig(\b')=\sig(\b'\g)$, 
  there is a sequence $\d$ of internal actions such that 
  $(q,\l',\bar{B})\lact{\d}(q,\l'',\bar{B}_1)$, with $\bar{B}_1$ consisting of $\bar{B}$ and all
  prefixes of $\b'\g$. 
  
  Let $\g=c_1\cdots c_k$.
  Since $\sig(\b')=\sig(\b'\g)$, for every $c_i$ there is a prefix of
  $\b'$ ending in $c_i$, say  $\b'_ic_i$.
  If $c_i$ is an input action, say $\ai(x,v)$, then we must have that
  at the moment when $\b'_ic_i$ was added into the $\bar{B}$ component, the
  valuation $\l$ was such that $\l(x)=v$. 
  Thanks to our assumption that the initial value cannot be read,
  $v$ is not an initial value.
  So the only way to have $\l(x)=v$ was to execute an
 output $\ao(x,v)$ before. 
  This action gives a sequence $\b''_i\, \ao(x,v)\in \bar{B}$, for some
  $\b''_i$; in particular $\b''_i\,\ao(x,v)\in \pref(L')$.

  After these preparations we show how to execute the sequence $\g$.
  If $c_i$ is an output action then we just execute it since this is always
  possible as $\b'_ic_i\in \bar{B}$, and so $\b'_i\in \bar{B}$, by prefix closure.
  If $c_i$ is an input action, say $\ai(x,v)$, then we first execute
  $\ao(x,v)$, that si possible since $\b'_i\ao(x,v)\in \bar{B}$.
  Then we execute $\ai(x,v)$.
\qed
\end{proof}

Next we turn the question how to compute the set of signatures of
executions efficiently.

\begin{lemma}\label{lem:one-step}
  Let $\Ss$ be a pushdown dynamic parametric process with generalized futures.  For any
  state $p$ of $\Ss$, and any prefix closed set $L \subseteq \Ssp \Sext^*$ of
  signatures, $L=\sig(L)$, consider the language
  \begin{equation*}
    K_p=\set{\ext(\alpha) :
    (p,\l_\init,\es)\lact{\a}(q,\l,B) \text{ for some } B}\ .     
  \end{equation*}
  We  can compute in \EXPTIME\ the sets
  \begin{equation*}
    \set{\sig(\a) : \a\in K_p}\; ,\qquad \set{\sig(\a) : \a\in K_p\cap\Consistent}\,.
  \end{equation*}
    The computation time is exponential in $|V|$, $|X|$,
   and polynomial in the size of $p$ and the pushdown automaton defining $\Ss$.
\end{lemma}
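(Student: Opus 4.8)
The plan is to turn the computation into an exponential (in $|V|,|X|$) number of polynomial‑size pushdown nonemptiness tests. The key observation is that, in the generalized‑futures setting, the component $B$ of a configuration $(q,\l,B)$ of $\Ss_L$ (i) only grows along a run, (ii) is a prefix‑closed subset of $\pref(L)=L$, and (iii) influences a transition $\lact{a}$ only through the monotone guards ``some $\b\in B$ satisfies $\b\,\ao(x,v)\in L$'' (for $\abw(x,v)$) and ``some $\b\in B$ satisfies $\b\,\ai(x,v)\in L$'' (for $\abr(x,v)$); recall that with $G=\es$ all of $\abw,\abr,\ar$ are internal and the only external actions of $\Ss_L$ are the free actions $\ai(x,v),\ao(x,v)$ of the process itself. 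So the plan is: first precompute, from $L$ alone, a compact description of which reads/writes on the process' local variables can ever be performed by the collective children; then replace $\Ss_L$ by a pushdown system $\Ss'_L$ that consults this description instead of $B$; and finally read off the signatures of $\ext$‑projections of runs of $\Ss'_L$.

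For the precomputation I would view the local variables $X$ as registers written only by children (via $\ao$) and read only by children (via $\ai$) — the parent's own $\tau,\ar,\ai,\ao$ steps are irrelevant to the children. Using the replay mechanism from the proof of Lemma~\ref{lem:sig-equal} (from any reachable configuration of $\Ss_L$ one may re‑establish, by internal moves, any value a child has ever produced on a local variable), one shows that the combined effect of $L$ is captured by the least fixpoint $(W_x)_{x\in X}$ of the operator that adds $v$ to $W_x$ whenever $L$ contains a child signature with a prefix of the form $\b\,\ao(x,v)$ all of whose input actions $\ai(x',v')$ have $v'\in W_{x'}$; alongside one computes the ``readable'' profile $(R_x)_{x\in X}$, where $v\in R_x$ iff $L$ contains some prefix $\b\,\ai(x,v)$ whose earlier inputs are already satisfiable. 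Each fixpoint lives in the lattice $(2^V)^X$, stabilizes after at most $|X|\,|V|$ rounds, and each round scans $L$; since $L$ consists of signatures over $\Sext$, i.e.\ words with pairwise distinct letters, $|L|\le 2^{O(|X|\,|V|\log(|X|\,|V|))}$, so the precomputation runs in time exponential in $|V|,|X|$ and polynomial in the pushdown automaton defining $\Ss$.

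Next I would define $\Ss'_L$: its control states are pairs (control state of the pushdown defining $\Ss$, a valuation $\l\colon X\to V$) — only exponentially many — with the same stack as $\Ss$, and its rules mirror those of $\Ss$ except that there are no $\aw$ rules (by assumption), a $\spawn(p)$ rule is kept only if $\spawn(p)\in\pref(L)$, $\abw(x,v)$ is enabled whenever $v\in W_x$ (with effect $\l:=\l[v/x]$), $\abr(x,v)$ whenever $v\in R_x$ and $\l(x)=v$, $\ar(x,v)$ whenever $\l(x)=v$, and $\ai(x,v),\ao(x,v)$ are unconstrained. Using Lemma~\ref{lem:sig-equal} and the replay argument one checks $K_p=\set{\ext(\a):\a\text{ labels a run of }\Ss'_L\text{ from }(p,\l_\init)}$: a run of $\Ss_L$ directly yields one of $\Ss'_L$ (its guards imply the static ones, as every reachable $B$ stays below the fixpoints), and conversely each $\abw/\abr$ step of $\Ss'_L$ can be prepared in $\Ss_L$ by internal replay moves, which leave $\ext(\cdot)$ unchanged. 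Since $G=\es$, every word is consistent, so $K_p\cap\Consistent=K_p$ and the two required sets coincide. Finally, to obtain $\set{\sig(\a):\a\in K_p}$ I would enumerate all candidate signatures $\s$ over $\Sext$ (words with distinct letters, at most $2^{O(|X|\,|V|\log(|X|\,|V|))}$ of them), build for each the $O(|X|\,|V|)$‑state DFA $D_\s$ recognizing $\set{w\in\Sext^*:\sig(w)=\s}$, form the product of $D_\s$ with the $\ext$‑labelled version of $\Ss'_L$ — a pushdown automaton of size exponential in $|V|,|X|$ and polynomial in $|p|$ and in the pushdown for $\Ss$ — and test nonemptiness by the standard polynomial pushdown algorithm, keeping $\s$ iff the test succeeds. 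The dominant cost is an exponential‑in‑$|V|,|X|$ number of polynomial‑size emptiness checks, matching the stated bound. The main obstacle is the second paragraph: proving that $B$ may be soundly replaced by the fixpoints $(W_x),(R_x)$, i.e.\ that the replay machinery of Lemma~\ref{lem:sig-equal} supplies on demand every child capability a run of $\Ss'_L$ relies on, so that the unboundedly many possible values of $B$ never need to be recorded.
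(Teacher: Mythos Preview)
Your abstraction of $B$ by static fixpoints $(W_x),(R_x)$ is unsound: it ignores that a child's capabilities are available only \emph{after} the parent has executed the corresponding $\spawn$. Concretely, take a parent with transitions
\[
q_0\;\act{\ar(x,1)}\;q_1\;\act{\spawn(p_1)}\;q_2\;\act{\ao(y,\#)}\;q_3
\]
and let $L=\{\spawn(p_1),\;\spawn(p_1)\,\ao(x,1)\}$. Your fixpoint yields $1\in W_x$, so in $\Ss'_L$ the parent can immediately do $\abw(x,1)$, then $\ar(x,1)$, then $\spawn(p_1)$, then $\ao(y,\#)$. In the real $\Ss_L$, however, the initial configuration is $(q_0,\l_\init,\emptyset)$: no $\abw$ is possible (the side condition requires a nonempty $B'\subseteq B$), and $\ar(x,1)$ fails since $\l_\init(x)=v_\init$. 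The parent is stuck and $\ao(y,\#)$ is unreachable. The replay mechanism of Lemma~\ref{lem:sig-equal} lets you \emph{re}\/produce a value that some already-spawned child has already output; it does not conjure outputs of children that have not been spawned yet. A patch that tracks the set $P\subseteq\Ssp$ of spawns performed and uses $W_x^P$ instead would be sound, but costs a factor $2^{|\Ssp|}$ and thus breaks the claimed polynomial dependence on the pushdown for~$\Ss$.

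The paper takes a different route: rather than eliminating $B$, it shows that $B$ can be kept small. Along any run there are at most $2|X|\,|V|$ \emph{first occurrences} of actions of the form $\abw(x,v),\abr(x,v)$; for each such first occurrence one fixes a single witness $\b_j\in B$ enabling it, and restricts each $B_i$ to the prefixes of these finitely many $\b_j$ that already lie in $B_i$. This restricted $B$ has polynomially many elements (polynomial in $|X|,|V|$), so the product of the pushdown for $\Ss$ with the finite automaton tracking $(\l,B)$ is a pushdown of size exponential in $|X|,|V|$ and polynomial in $\Ss$. One then enumerates signatures by repeatedly finding a word, recording its signature, and intersecting with the (regular) complement of that signature class --- each step is a pushdown-emptiness test on an exponential-size pushdown, and there are at most exponentially many signatures to discover.
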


\begin{proof} 
  Assume that $\alpha = a_1\cdots a_n$, and
  $(p_i,\l_i,B_i)\lact{a_i}(p_{i+1},\l_{i+1},B_{i+1})$ for
  $i=1,\ldots,n$ where $(p_1,\l_1,B_1)=(p,\l_\init,\es)$ and
  $(p_{n+1},\l_{n+1},B_{n+1})=(q,\l,B)$.
  Recall that the configurations of $\lact{}$ are $(q,\l,B)$ where $q$ is the
state of the system, $\l:X\to V$ is a valuation of the internal
variables, and $B \subseteq \pref(L)$ is a set of words in $\Ssp
\Sext^*$. 
Since $L$ is a set of signatures its size is at most exponential in
$|V|$ and $|X|$.
So there are at most doubly exponentially many $B$'s. 
This is too much, as we are after a single
exponential complexity.
We will show that the same sequence can be executed with $B$'s of
polynomial size. 

Let $b_1\cdots b_k$ be the subsequence of first occurrences of actions
of the form $\abr(x,v)$, $\abw(x,v)$ in $a_1\cdots a_n$.
For every $b_j$ we choose a sequence $\b_j \in B$ that is used to perform
$b_j$.
In other words, if the first occurrence of $b_j$ is $a_l$ then $\b_j\in
B_l$ and $\b_j b_j\in B_{l+1}$. 
We define $B'_i$ as the subset of $B_i$ consisting
all the prefixes of words $\b_j$ that are in $B_i$; more precisely for
every $i=1,\dots,n$ we set:
\begin{equation*}
  B'_i=\set{\b \in B : \b  \text{ prefix of some } \b_j, \, j=1,\dots,k}
\end{equation*}
By induction on $i$ we show that
$(p_i,\l_i,B'_i)\lact{a_i}(p_{i+1},\l_{i+1},B'_{i+1})$, for $i=1,\dots,n$.
So every sequence of actions can be executed using configurations
where the $B$-component is of polynomial size. 

For every
$\spawn(p)\in\Ssp$, from the pushdown automaton defining $\Ss$ we 
can construct a pushdown automaton $\Pp^p$ for the language
\begin{equation*}
K^p=\set{\alpha :   (p,\l_\init,\es)\lact{\a}(q,\l,B)}) .  
\end{equation*}
For this we take the product of the automaton for $\Ss$ with a
finite-state automaton taking care of the $\l$-component and the
$B$-component. 
Note that thanks to the previous paragraph, this finite state
automaton is of exponential size
(the $\l$-component is exponential in $N=\max(|X|,|V|)$,
and the sets $B$ consist of at most $N^2$ sequences and their prefixes). 

The statement of the lemma requires us to compute the set
$\sig(\ext(K^p))$. This can be done by the following general algorithm 
starting with $i=0$ and $K_0=K^p$:
\begin{enumerate}
\item Find a word $\alpha_i\in K_i$. Let $\b_i=\sig(\ext(\a_i))$.
\item Consider $K_{i+1}=K_i\cap N_i$ where $N_i$ is the set of
  words $\a$ such that $\sig(\ext(\a))$ is not $\b_i$ ($N_i$ is a
  regular language). 
\item if $K_{i+1}$ not empty, go to the first step.
\end{enumerate} 
This iteration terminates in exponential number of iterations, since
there  are exponentially many signatures, and after every iteration
we find one new signature. 
Each iteration of the above algorithm takes exponential time: this is
because the pushdown automaton for $\Ss$ is of  exponential-size, and so is
every pushdown automaton obtained by taking consecutive intersections with the
finite automata for $N_i$.
Accordingly, the set $\sig(\exp(K^p))$ can be 
computed in \EXPTIME\ for $\Ss$ given by a pushdown system.
 
The statement concerning the consistent sequences follows by the same
argument using the fact that the set $\Consistent$ of consistent
sequences is regular. So it is enough to start the iteration from
$K^p\cap\Consistent$ instead of $K^p$.
\qed
\end{proof}

\begin{lemma}\label{lem:sigk} 
  Let $\Ss$ be a pushdown dynamic parametric process with generalized futures.
  Then for every $k\geq 0$,
  the sets $\sig(\Ext_k)$ as well as $\sig(\Ext_k\cap\Consistent)$ are
  computable in \EXPTIME\ (exponential in $|V|$, $|X|$, 
polynomial in the pushdown automaton defining
  $\Ss$, and independent of $k$).
\end{lemma}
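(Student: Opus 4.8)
The plan is to prove this by induction on $k$, using Lemma~\ref{lem:one-step} for the inductive step and the preceding structural lemmas (especially Lemma~\ref{lem:sig-equal}) to justify that at each level we may pass to signatures without losing behaviors. For the base case $k=0$, the s-trees have depth $0$, so every configuration is of the form $(q,\l,\es)$ and the hypothesis set used is empty; thus $\Ext_0 = \set{\spawn(p)\ext(\a) : (p,\l_\init,\es)\lzact{\a}_\es (q,\l,\es)}$, and I would invoke Lemma~\ref{lem:one-step} with $L=\es$ (which trivially satisfies $L=\sig(L)$ and is prefix closed) to compute $\sig(\Ext_0)$ and $\sig(\Ext_0\cap\Consistent)$ in \EXPTIME. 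This matches the pattern of Corollary~\ref{cor:base}, but now tracking signatures rather than cores.

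For the inductive step, suppose $\sig(\Ext_k)$ has been computed in \EXPTIME. I would set $L=\sig(\Ext_k)$; this set is prefix closed (signatures of prefixes are prefixes of signatures, so $\pref(\sig(\Ext_k))\incl\sig(\pref(\Ext_k))\incl\sig(\Ext_k)$ because $\Ext_k$ is prefix closed by definition of $\Ext_k$ via $\sact{}_k$), and it satisfies $L=\sig(L)$. By Lemma~\ref{lem:sig-equal}, running the system under hypothesis $L=\sig(\Ext_k)$ produces, up to $\ext(\cdot)$-equivalence of the action sequence, exactly the same external behaviors as running it under $\Ext_k$; and by Lemma~\ref{l:hypo} (with $L=\Ext_k$), those are precisely the external behaviors of s-trees of depth at most $k+1$, i.e.\ $\Ext_{k+1}$. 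Hence
\begin{equation*}
  \Ext_{k+1}=\set{\spawn(p)\ext(\a) : (p,\l_\init,\es)\lact{\a}(q,\l,B)\text{ for some }q,\l,B}
\end{equation*}
where $\lact{}$ uses hypothesis $L=\sig(\Ext_k)$, and therefore $\sig(\Ext_{k+1})=\set{\spawn(p)\cdot\sig(\a):\a\in K_p,\;\spawn(p)\in\Ssp}$ with $K_p$ the language from Lemma~\ref{lem:one-step}. Applying Lemma~\ref{lem:one-step} to each of the finitely many $p$ with $\spawn(p)\in\Ssp$ yields $\sig(\Ext_{k+1})$, and likewise $\sig(\Ext_{k+1}\cap\Consistent)$ by the consistent variant of that lemma.

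The key complexity bookkeeping is that the bound must be \emph{independent of $k$}: a naive induction would blow up the alphabet or the pushdown size at each level. The point to stress is that Lemma~\ref{lem:one-step} takes as input a prefix-closed set of signatures $L$ with $L=\sig(L)$ and a state $p$ of the \emph{original} pushdown system $\Ss$, and its running time is exponential in $|V|,|X|$ and polynomial in the original pushdown automaton and in $|p|$ — it does \emph{not} depend on how $L$ was obtained. Since $\sig(\Ext_k)\incl\Ssp\cdot\Sext^*$ always and $\Sext$ has size polynomial in $|V|,|X|$, the set $\sig(\Ext_k)$ has size at most exponential in $|V|,|X|$ regardless of $k$; feeding it back in as the hypothesis for level $k+1$ keeps us within the same \EXPTIME\ bound. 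I also use that the iteration stabilizes: by Lemma~\ref{lem:sig-equal} together with the fact (cf.\ the proof of Theorem~\ref{thm:main}, via Lemma~\ref{lemma:order} applied to cores, and monotonicity $\Ext_k\incl\Ext_{k+1}$) that the chain $\Ext_0\incl\Ext_1\incl\cdots$ stabilizes after finitely many steps, so $\sig(\Ext_m)=\sig(\Ext_{m+i})$ for all $i$ once $\core(\Ext_m)=\core(\Ext_{m+1})$ — but since the lemma only asserts computability of $\sig(\Ext_k)$ for each fixed $k$ in $k$-independent time, the bare induction on $k$ already suffices and the stabilization remark is only needed downstream.

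The main obstacle I anticipate is the verification that $L=\sig(\Ext_k)$ genuinely meets the hypotheses of Lemma~\ref{lem:one-step} — in particular prefix-closedness and idempotence of $\sig$ on $L$ — and, more subtly, that Lemma~\ref{lem:sig-equal} is being applied in the right direction: it tells us that hypothesis $\sig(\Ext_k)$ yields the same $\ext$-behaviors as hypothesis $\Ext_k$, but one must check that when we then apply $\sig$ to the resulting behaviors we land exactly on $\sig(\Ext_{k+1})$ and not merely on something $\fleq$-equivalent to it. This is fine because $\ext(\a)=\ext(\a')$ implies $\sig(\ext(\a))=\sig(\ext(\a'))$, so the $\d_i$-padding introduced in Lemma~\ref{lem:sig-equal} disappears under $\sig$; I would spell this out explicitly. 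The rest is routine composition of the cited results.
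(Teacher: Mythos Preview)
Your approach is essentially the paper's: compute $\sig(\Ext_0)$ via Lemma~\ref{lem:one-step} with $L=\es$, then iterate using Lemmas~\ref{lemma:u-two-systems}, \ref{lem:sig-equal}, and \ref{lem:one-step} to pass from $\sig(\Ext_k)$ to $\sig(\Ext_{k+1})$.

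There is one genuine slip in your complexity argument. You write that ``the bare induction on $k$ already suffices and the stabilization remark is only needed downstream,'' but this is exactly backwards. A bare induction performs $k$ iterations, so the total time is $k$ times an \EXPTIME\ bound, which is \emph{not} independent of $k$. The $k$-independence comes precisely from stabilization: since $\Ext_0\incl\Ext_1\incl\cdots$, the chain $\sig(\Ext_0)\incl\sig(\Ext_1)\incl\cdots$ is increasing within a set of signatures of size at most exponential in $|X|,|V|$, hence it stabilizes after at most exponentially many steps. One therefore iterates until two consecutive sets coincide (detectable in \EXPTIME), and for all larger $k$ the value is the stable one. That is exactly what the paper does, and it is not optional here. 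Also, there is no need to route through $\core$ and Lemma~\ref{lemma:order}: monotonicity of $\sig(\Ext_k)$ follows directly from $\Ext_k\incl\Ext_{k+1}$, and the finiteness of the set of signatures gives the bound on the number of iterations.
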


\begin{proof}
  We start with $L_0=\sig(\Ext_0)$. We can compute it in \EXPTIME\
  thanks to Lemma~\ref{lem:one-step} by taking $L=\es$.

Assume now that $L_k=\sig(\Ext_k)$ is already computed, and set
$L'_k=\Ext_k$. By
Lemmas~\ref{lemma:u-two-systems} and~\ref{lem:sig-equal},
$\Ext_{k+1}$ is equal to  
  \begin{align*}
 & \set{\spawn(p) \ext(\a) :  (p,\l_\init,\es)\lzact{\a}_{L'_k}
  (q,\l,B) \text{ for some } B} \; =\\
& \set{\spawn(p) \ext(\a) :  (p,\l_\init,\es)\lzact{\a}_{L_k}
  (q,\l,B) \text{ for some } B}
  \end{align*}
We can now apply Lemma~\ref{lem:one-step} with $L=L_k$ to compute
$L_{k+1}=\sig(\Ext_{k+1})$. 

  Every step can be done in deterministic exponential time.
  The complexity bound follows since the number of steps is at most
  exponential:  the sets increase after each iteration and the number
  of signatures is exponential in $|X|,|V|$ (and polynomial in the
  number of states of the automaton defining $\Ss$).
\qed
\end{proof}

\begin{theorem}\label{th:sig}
  The reachability problem for pushdown dynamic parametric processes with generalized futures  is in \DEXPTIME\ (exponential in
  $|V|$, $|X|$, and polynomial in the size of the pushdown
  automaton defining $\Ss$.)
\end{theorem}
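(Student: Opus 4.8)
The plan is to follow the proof of Theorem~\ref{thm:main}, but to replace its abstract admissibility arguments by the \EXPTIME\ machinery developed in this section. Fix a pushdown dynamic parametric process $\Ss$ with generalized futures. Recall (Lemma~\ref{lemma:u-two-systems}) that the runs of $\sem{\Ss}_s$ over s-trees of height at most $k+1$ are exactly the runs of the root under hypothesis $\Ext_k$, and that $\Ext_0\incl\Ext_1\incl\cdots$. The first step is to observe that the increasing chain $\sig(\Ext_0)\incl\sig(\Ext_1)\incl\cdots$ stabilises: there are only finitely many signatures over $\Ssp\cdot\Sext$, their number being exponential in $|V|$, $|X|$ and polynomial in the pushdown system of $\Ss$, so there is some index $m$ with the same bound for which $\sig(\Ext_m)=\sig(\Ext_{m+i})$ for all $i$. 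By Lemma~\ref{lem:sigk} each $\sig(\Ext_k)$ is computable in deterministic exponential time by a bound \emph{independent of $k$}; hence iterating $k=0,1,2,\dots$ until the set no longer grows computes $L_m:=\sig(\Ext_m)$ in \EXPTIME\ (at most exponentially many iterations, each in \EXPTIME).

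Next I would show that using $L_m$ as the hypothesis at the root captures exactly the reachable behaviours of $\Ss$, with the reachability target recorded as ``some transition labelled $\abw(x,\#)$ is taken'', $x\in X$ --- i.e.\ some child of the root outputs $\#$ to a local of the root. One direction: a run of $\sem{\Ss}_s$ witnessing the target is, by Lemma~\ref{lemma:u-two-systems} applied at the root and by finiteness of runs, a run of the root under hypothesis $\Ext_k$ for some $k\ge m$; since $\Ext_k\incl\lift(\sig(\Ext_k))=\lift(L_m)$, monotonicity (Lemma~\ref{l:mono}) together with Lemma~\ref{lemma:lift} makes the \emph{same} action sequence a run of the root under hypothesis $L_m$, still containing $\abw(x,\#)$. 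Conversely, given a run $\a$ of the root under hypothesis $L_m=\sig(\Ext_m)$ containing $\abw(x,\#)$, the left-to-right direction of Lemma~\ref{lem:sig-equal} (with $L=\sig(\Ext_m)$, $L'=\Ext_m$) produces a run $\a'$ of the root under hypothesis $\Ext_m$ which contains $\a$ as a scattered subsequence --- interspersing only internal actions --- hence also contains $\abw(x,\#)$; by Lemma~\ref{lemma:u-two-systems} this $\a'$ is a genuine run of $\sem{\Ss}_s$. As $G=\es$, consistency is vacuous throughout, so it plays no role here.

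To decide whether the root under hypothesis $L_m$ has a run performing some $\abw(x,\#)$, I would invoke the construction from the proof of Lemma~\ref{lem:one-step} with the state $q_\init$ and the prefix-closed set of signatures $L=L_m$: it builds a pushdown automaton $\Pp^{q_\init}$ accepting the action sequences $\a$ with $(q_\init,\l_\init,\es)\lzact{\a}_{L_m}(q,\l,B)$, whose finite control is exponential in $|V|$, $|X|$ and polynomial in the pushdown system of $\Ss$. The essential point, established in Lemma~\ref{lem:one-step}, is that hypothesis-subsets $B\subseteq\pref(L_m)$ of polynomial size already suffice to replay any run, so the a~priori doubly-exponential number of subsets of $\pref(L_m)$ is never needed. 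I then take the product of $\Pp^{q_\init}$ with the small finite automaton recognising the regular language of sequences containing at least one letter $\abw(x,\#)$, $x\in X$ (and, for uniformity with the general case, with the regular language $\Consistent$, which is trivial here); this keeps the automaton of the same exponential order, and the answer to the reachability problem is ``yes'' exactly when the resulting pushdown automaton has non-empty language. Since pushdown emptiness is decidable in time polynomial in the automaton's size, this last test runs in \EXPTIME, and together with the \EXPTIME\ computation of $L_m$ the whole procedure is in \DEXPTIME\ --- exponential in $|V|$ and $|X|$, polynomial in the size of the pushdown automaton defining $\Ss$.

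I expect the main obstacle to be keeping every intermediate object single-exponential --- in particular the polynomial bound on the hypothesis-subsets $B$ used inside $\Pp^{q_\init}$ (Lemma~\ref{lem:one-step}), without which the finite control would blow up to doubly exponential size. A secondary point requiring care is the correctness argument of the second paragraph: the target action $\abw(x,\#)$ is erased by $\ext$, so it must be tracked in the full action sequences accepted by $\Pp^{q_\init}$ rather than via the external behaviours $\Ext_k$ themselves, and one has to check --- using Lemma~\ref{lem:sig-equal} and Lemma~\ref{lemma:lift} exactly as above --- that replacing the true hypothesis by $\sig(\Ext_m)$ neither creates nor destroys a run that performs this action.
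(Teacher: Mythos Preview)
Your proof follows essentially the same strategy as the paper: compute $L_m=\sig(\Ext_m)$ at the stabilisation point via Lemma~\ref{lem:sigk}, argue via Lemma~\ref{lem:sig-equal} that the root under hypothesis $L_m$ captures exactly the real behaviours, and then test for the target in \EXPTIME. The paper's final step is slightly different---it applies Lemma~\ref{lem:one-step} once more to compute $\sig(L'')$ and inspects that finite set---while you build the pushdown automaton $\Pp^{q_\init}$ and check emptiness directly; both are fine.

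One point needs correction. The reachability target in this section is an \emph{output} action $\ao(x,\#)$ performed by the root itself (this is how the paper redefines reachability when $G=\es$), not a $\abw(x,\#)$ transition recording a child's output. Because $\ao(x,\#)\in\Sext$, it survives $\ext$, so Lemma~\ref{lem:sig-equal} as stated already transfers it between runs under $L_m$ and runs under $\Ext_m$; you do not need the stronger ``$\a$ is a scattered subsequence of $\a'$'' property extracted from the lemma's proof. With the correct target your second paragraph simplifies, and the final emptiness test can equally well be phrased as ``does the external language $K_{q_\init}$ contain a word with letter $\ao(x,\#)$'', which is exactly what the paper checks by looking at $\sig(K_{q_\init})$.
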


\begin{proof}
  Let $\Ss$ be a pushdown dynamic parametric process with generalized futures. 
  The external behaviors of $\Ss$ are described by the language
  \begin{equation*}
    L=\bigcup_{k\in\Nat}\set{\spawn(p)\ext(\a) :
      (p,\l_\init,\es)\lzact{\a}_{\Ext_k} r,\text{ for some } r}\ .
  \end{equation*}
  If we denote $L_k=\sig(\Ext_k)$ then by
  Lemma~\ref{lem:sig-equal} the language $L$ is equal to
  \begin{equation*}
    L'=\bigcup_{k\in\Nat}\set{\spawn(p)\ext(\a) :
      (p,\l_\init,\es)\lzact{\a}_{L_k} r,\text{ for some } r}\ .   
  \end{equation*}
  By definition, $\Ext_0\incl \Ext_1\incl\cdots$ is an increasing
  sequence of sets.
  As there are finitely many signatures, this means that  there is
  some $m$ so that 
  $\sig(\Ext_m)=\sig(\Ext_{m+i})$, for all $i$. Therefore, $L'$ is
  equal to
  \begin{equation*}
    L''=\set{\spawn(p)\ext(\a) : (p,\l_\init,\es)\lzact{\a}_{L_m} r,\text{ for some } r}   
  \end{equation*}
  %
  By Lemma~\ref{lem:sigk}, the set $L_m=\sig(\Ext_m)$ can be computed in
  \EXPTIME, and so can be $\sig(L''\cap \Consistent)$.
  Finally we check if in the latter language there is a sequence starting with
  $\spawn(q_\init)$  and containing $\ao(x,\#)$.
\qed
  \end{proof}

\subsection{Simple futures}

Here we consider pushdown processes with \emph{simple futures},
where every sub-process communicates only with its parent, by
writing values into the registers shared with the parent. So there is
no communication between siblings, therefore  we work with
$\Sext= \set{\ao(x,v) : x\in X, v\in V}$. Internal transitions are
with actions of the form $\t,\spawn(p),\ar(x,v),\abw(x,v)$.

For $\b \in \Sext^*$ let $\out(\b)=\set{\ao(x,v) : 
 \b=\spawn(p') \b_1 \ao(x,v) \b_2}$, and let $\out(\spawn(p)
\a)=\set{\spawn(p)} \cup \spawn(p) \out(\a)$. For a set $L
\subseteq \Ssp \cdot \Sext^*$ let $\out(L)=\set{\out(\b)
  : \b \in L}$.

\begin{lemma}
  Let $L \subseteq \Sext^*$ be a prefix-closed set of signatures and
  $L'=\out(L)$. Then $(p,\l_\init,\es)\lact{\a}(q,\l,B)$ for some
  $\l$ and $B\incl L$ iff $(p,\l_\init,\es)\lpact{\a'}(q,\l',B')$ for some
  $\l'$ and $B'\incl L'$, with $\ext(\a)=\ext(\a')$.
\end{lemma}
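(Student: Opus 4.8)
The statement is the "simple futures" analogue of Lemma~\ref{lem:sig-equal}, so the plan is to follow the same two-directional scheme, with the $\out$ operation playing the role that $\sig$ played there. For the right-to-left implication, I would observe that $L \subseteq \lift(\out(L))$: every signature $\b=\spawn(p)\,\ao(x_1,v_1)\cdots\ao(x_m,v_m)$ in $L$ is a lift of the word $\out(\b)$, which lists the same set of output actions in order of first occurrence. Hence $L'=\out(L)$ satisfies $L\incl\lift(L')$, and Lemma~\ref{l:lift} (monotonicity in the other direction) gives the claim: any run under hypothesis $L'$ can be mimicked under hypothesis $L$, with the same $\ext$-projection.

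For the left-to-right implication, which is the substantive direction, I would argue exactly as in the proof of Lemma~\ref{lem:sig-equal}. Write a run $(p_i,\l_i,B_i)\lact{a_i}(p_{i+1},\l_{i+1},B_{i+1})$ under $L$, and for each maximal sequence $\b=\spawn(p_\b)\ao(x_{\b,1},v_{\b,1})\cdots\ao(x_{\b,n_\b},v_{\b,n_\b})$ in $B_{n+1}$ fix a witness $\b'\in\pref(L')$ with $\out(\b')=\b$; since the actions of $L'$ are only of the form $\ao(x,v)$, the word $\b'$ differs from $\b$ by repeating already-seen output actions, i.e.\ $\b' = \spawn(p_\b)\,\ao(x_{\b,1},v_{\b,1})\g_{\b,1}\cdots\ao(x_{\b,n_\b},v_{\b,n_\b})\g_{\b,n_\b}$ with each $\g_{\b,j}$ a word over $\set{\ao(x_{\b,l},v_{\b,l}): l\le j}$. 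I would then define $B'_i$ as the smallest prefix-closed set such that, for each maximal $\b\in B_{n+1}$, if the prefix $\spawn(p_\b)\ao(x_{\b,1},v_{\b,1})\cdots\ao(x_{\b,j},v_{\b,j})$ lies in $B_i$ then the corresponding fattened prefix of $\b'$ lies in $B'_i$, and prove by induction on $i$ that $(p_i,\l'_i,B'_i)\lpact{\d_i a_i}(p_{i+1},\l'_{i+1},B'_{i+1})$ for a suitable sequence $\d_i$ of internal actions.

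The case analysis on $a_i$ is where the real work sits, and it is \emph{easier} than in Lemma~\ref{lem:sig-equal} because there are no input actions $\ai(x,v)$ at all here: the only external actions are outputs $\ao(x,v)$ and the corresponding exposed $\abw(x,v)$. The $\tau$, $\spawn(p')$, $\ar(x,v)$, and $\ao(x,v)$ cases go through verbatim (for $\ar(x,v)$ one still uses the Proviso to "replay" an earlier $\abw(x,v)$ if the local value was overwritten). The main obstacle is the $\abw(x,v)$ case: here $B_{i+1}=B_i\cup B\cdot\set{b_i}$ with $b_i=\ao(x,v)$ and $\emptyset\neq B\subseteq B_i$, and I must show the image set $B'$ can be extended in $\lpact{}$ to append $b_i$ and all the intervening repeated outputs $\g_{\b,j+1}$. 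This requires an auxiliary claim, analogous to the one in Lemma~\ref{lem:sig-equal}: from any configuration $(q,\l',\bar B)$ with $\b'\in\bar B$ and $\b'\g\in\pref(L')$, $\sig(\b')=\sig(\b'\g)$, one can reach $(q,\l'',\bar B_1)$ by internal actions, where $\bar B_1$ adds all prefixes of $\b'\g$. Its proof is now trivial, since every letter of $\g$ is an output action $\ao(x_{\b,l},v_{\b,l})$ with $l\le j$, so a prefix of $\b'$ ending in it already sits in $\bar B$ and the $\abw$-rule fires immediately — no need to first re-establish any local value, because outputs carry no precondition on $\l$. Assembling these, $B''_i\cup B''\cdot\set{b_i}=B'_{i+1}$ and $(p_i,\l'_i,B''_i)\lpact{a_i}(p_{i+1},\l'_{i+1},B'_{i+1})$ as required.
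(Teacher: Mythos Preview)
Your proposal rests on a misreading of $\out$. You treat $\out(\b)$ as a single word---``the word $\out(\b)$, which lists the same set of output actions in order of first occurrence''---i.e., essentially as $\sig(\b)$. In the paper, however, $\out(\spawn(p)\a)=\set{\spawn(p)}\cup\set{\spawn(p)\,\ao(x,v):\ao(x,v)\text{ occurs in }\a}$ is a \emph{set} of words of length at most two, so that $L'=\out(L)\incl\Ssp\cup\Ssp\Sext$. Under your reading, since $L$ already consists of signatures, you would get $L'=L$ and the lemma would be vacuous; its actual content is that the hypothesis can be coarsened to length-$\leq 2$ words, which is precisely what makes the $B'$-component polynomial-size and yields the \NP\ bound of Lemma~\ref{lem:one-step-out}.

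Concretely, your ``easy'' right-to-left direction breaks: you claim $L\incl\lift(L')$, but with the correct $L'$ this is false---a signature $\spawn(p)\,\ao(x_1,v_1)\ao(x_2,v_2)$ with two distinct outputs is not a lift of any word in $\Ssp\cup\Ssp\Sext$, since lifts preserve signatures. The paper's argument is therefore structured oppositely to Lemma~\ref{lem:sig-equal}: the \emph{left-to-right} direction is the straightforward one, done by showing that one step $(p_1,\l_1,B_1)\lact{a}(p_2,\l_2,B_2)$ induces $(p_1,\l_1,\out(B_1))\lpact{a}(p_2,\l_2,\out(B_2))$. The \emph{right-to-left} direction is the substantive one: for an $\abw(x,v)$ step under $L'$, which uses some $\spawn(p')\in B'_i$ with $\spawn(p')\ao(x,v)\in L'$, one must pick a full signature $\spawn(p')\b\,\ao(x,v)\in L$ witnessing that pair and first execute the entire prefix $\b$ as a sequence $\d_i$ of internal $\abw$-actions before firing $\abw(x,v)$. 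So the ``replay'' idea you sketch is relevant, but it is used to bridge length-two words back up to full signatures, not to pad repeated actions into a word that is already a signature.
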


\begin{proof}
For the left-to-right direction we show by induction that
$(p,\l_\init,\es)\lact{\a}(q,\l,B)$ implies
$(p,\l_\init,\es)\lpact{\a}(q,\l,B')$ for $B'=\out(B)$. Let us assume
that $(p_1,\l_1,B_1) \lact{a} (p_2,\l_2,B_2)$. We show that
$(p_1,\l_1,\out(B_1)) \lpact{a} (p_2,\l_2,\out(B_2))$. If $a$ is
either $\tau$, $\ar(x,v)$ or $\ao(x,v)$, then the transition changes only the
state. If $a$ is $\spawn(p')$ then $B_2=B_1 \cup
\set{\spawn(p')}$ and also $\out(B_2)=\out(B_1) \cup
\set{\spawn(p')}$. If $a=\abw(x,v)$, then $B_2=B_1 \cup B
\set{\ao(x,v)}$ for some $B \incl B_1$ with $B \set{\ao(x,v)} \incl
L$. Let $B'=\set{\spawn(p') : \spawn(p')\b \in B, \text{ for some }
  \b}$. We have $B' \incl \out(B_1)$  and
$\out(B_2)=\out(B_1) \cup B' \set{\ao(x,v)}$, because $B_1$ is prefix closed.

\medskip

For the right-to-left direction we show by induction on $|\a'|$ that
$(p,\l_\init,\es)\lpact{\a'}(q,\l',B')$ for some $\l'$ and $B'\incl L'$, then
$(p,\l_\init,\es)\lact{\a}(q,\l,B)$ for some $\l$ and $B\incl L$, such that
$B' \incl \out(B)$, and $\a'$ with $\ext(\a')=\ext(\a)$.

Assume $(p,\l_\init,\es)=(p_1,\l_1,B'_1) \lpact{a_1} (p_2,\l'_2,B'_2)
\lpact{a_2} \cdots \lpact{a_{n-1}} (p_n,\l'_n,B'_n)=(q,\l',B')$. We
look for some $\l_i$ and $B_i \incl L$ such that $B'_i \incl
\out(B_i)$ and $(p_i,\l_i,B_i) \lact{\d_i a_i}
(p_{i+1},\l_{i+1},B_{i+1})$, for every $i$, for some sequences of
internal actions $\d_i$. The case where $a_i$ is $\tau$ or $\ao(x,v)$
is immediate, since only the state changes. Same holds for
$a_i=\spawn(p')$, since then $B'_{i+1} =B'_i \cup \set{\spawn(p')}$ and
$B_{i+1} =B_i \cup \set{\spawn(p')}$. In all three cases $\d_i$ is empty.

 Let us assume that
$a_i=\ar(x,v)$, so $\l'_i(x)=v$. If $\l_i(x)=v$ then we can do $a_i$
immediately, and $\d_i$ is empty. If not, since the initial value
$v_\init$ can be neither read nor written, there must be some $j<i$
such that $a_j=\abw(x,v)$. So $B'_i$ contains $\spawn(p')$ and
$\spawn(p') \ao(x,v)$, for some $p'$. By inductive assumption we have
$B'_i \incl \out(B_i)$, thus there is some sequence $\spawn(p')
\b\ao(x,v) \in B_i$, for some $\b\in\Sext^*$. So we can ``replay''
this output and do $(p_i,\l_i,B_i) \lact{\abw(x,v)}
(p_i,\l_{i+1},B_{i+1}) \lact{\ar(x,v)} (p_{i+1},\l_{i+1},B_{i+1})$
(here we have $B_i=B_{i+1}$).

The last case is when $a_i=\abw(x,v)$, so $B'_{i+1}=B'_i \cup B'
\set{\ao(x,v)}$, with $B'\incl B'_i$, $B' \set{\ao(x,v)} \incl L'$. In
particular, $B' \incl \Ssp$. Since $B' \incl \out(B_i)$ we also have
$B' \incl B_i$. Since $B' \set{\ao(x,v)} \incl \out(L)$ we can choose
some subset $B \subseteq L$ with spawn's from $B'$, and such that
$B
\set{\ao(x,v)} \incl L$. So every sequence in $B \set{\ao(x,v)}$ is of
the form $\spawn(q_j) \b_j \ao(x,v)$, with 
$\b_j$ consisting only of outputs $\ao(x',v')$, $j=1,\ldots,k$. In
addition, $B'=\set{\spawn(q_j) : j}$. Let
$\d^j$ be obtained from $\b_j$ by renaming $\ao(x',v')$ into
$\abw(x',v')$, and $\d_i=\d^1 \cdots \d^k$. We have $(p_i,\l_i,B_i)
\lact{\d_i} (p_i,\hat{\l_i},\hat{B_i}) \lact{\abw(x,v)}
(p_{i+1},\l_{i+1},B_{i+1})$, where $B'
\set{\ao(x,v)} \incl \out(B_{i+1})$ (here we have  $p_i=p_{i+1}$).
\qed
\end{proof}

Together with Lemma~\ref{lem:sig-equal} we obtain:

\begin{corollary}\label{cor:simple-out}
  Let $L =\Ext_k$ and $L'=\out(L)$. Then
  $(p,\l_\init,\es)\lact{\a}(q,\l,B)$ for some $\l$ and $B\incl L$ iff
  $(p,\l_\init,\es)\lpact{\a'}(q,\l',B')$ for some $\l'$ and $B'\incl
  L'$, with $\ext(\a)=\ext(\a')$.
\end{corollary}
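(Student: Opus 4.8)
The plan is to derive Corollary~\ref{cor:simple-out} by composing the lemma immediately above with Lemma~\ref{lem:sig-equal}, routing through the intermediate hypothesis $\sig(\Ext_k)$. Writing $\lzact{}_{M}$ for the ``system under hypothesis $M$'' relation, Lemma~\ref{lem:sig-equal} (applied as stated, with $L=\sig(\Ext_k)$ and $L'=\Ext_k$) already gives that $\lzact{}_{\Ext_k}$ and $\lzact{}_{\sig(\Ext_k)}$ reach the same configurations up to the $\l$- and $B$-components and up to $\ext$-equality of the executed label sequences. The preceding lemma, applied with $L=\sig(\Ext_k)$, likewise gives the corresponding equivalence between $\lzact{}_{\sig(\Ext_k)}$ and $\lzact{}_{\out(\sig(\Ext_k))}$. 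Since both are biconditionals and the relation $\ext(\a)=\ext(\a')$ is transitive, chaining them yields the statement of the corollary as soon as one identifies $\out(\sig(\Ext_k))$ with $\out(\Ext_k)=L'$.

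Two auxiliary facts have to be verified for the middle step. First, $\sig(\Ext_k)$ must be a prefix-closed set of signatures so that the hypothesis of the preceding lemma is met. That it is a set of signatures is immediate from $\sig\circ\sig=\sig$. Prefix-closedness is inherited from that of $\Ext_k$: a prefix of $\sig(\spawn(p)\,\ext(\a))$ has the form $\spawn(p)\,b_1\cdots b_j$ where $b_1\cdots b_m=\sig(\ext(\a))$, and truncating the run $\a$ just after the first occurrence of $b_j$ produces a run of depth $\le k$ whose $\ext$-signature is exactly $b_1\cdots b_j$; hence that prefix again lies in $\sig(\Ext_k)$. Second, one needs $\out(\Ext_k)=\out(\sig(\Ext_k))$. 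This holds pointwise: $\out(\b)$ depends only on the set of distinct letters occurring in $\b$, and $\sig(\b)$ keeps exactly one occurrence of each letter of $\b$; together with $\sig(\spawn(p)\a)=\spawn(p)\,\sig(\a)$ and the defining clause $\out(\spawn(p)\a)=\set{\spawn(p)}\cup\spawn(p)\,\out(\a)$ this yields $\out(\b)=\out(\sig(\b))$ for every $\b\in\Ssp\cdot\Sext^*$, and taking unions over $\b\in\Ext_k$ gives the identity.

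The only genuinely delicate point is this middle step, namely confirming that $\sig(\Ext_k)$ really is a prefix-closed set of signatures so that the preceding lemma applies on the nose (and minding the harmless notational slip that that lemma is phrased for $L\subseteq\Sext^*$ while here the relevant hypotheses live in $\Ssp\cdot\Sext^*$). Everything else is routine bookkeeping of the two $\ext$-equalities and of the prefix-closure invariants already carried through the proofs being invoked.
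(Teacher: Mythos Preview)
Your proposal is correct and follows exactly the route the paper intends: the paper's entire proof is the one-line remark ``Together with Lemma~\ref{lem:sig-equal} we obtain'', and you have spelled out the composition through the intermediate hypothesis $\sig(\Ext_k)$, including the two auxiliary checks (prefix-closure of $\sig(\Ext_k)$ and $\out(\sig(\Ext_k))=\out(\Ext_k)$) that the paper leaves implicit. Your observation about the $\Sext^*$ versus $\Ssp\cdot\Sext^*$ slip in the preceding lemma's statement is also accurate and indeed harmless.
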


The next lemma computes the set of outputs at level 0:

\begin{lemma}\label{lem:one-step-out}
  Let $\Ss$ be a pushdown dynamic parametric process with simple futures.  For any
  state $p$ of $\Ss$, and any prefix-closed set $L \subseteq \Ssp \cup
  \Ssp \Sext$ of
  outputs, consider the language
  \begin{equation*}
    K_p=\set{\ext(\alpha) :
    (p,\l_\init,\es)\lact{\a}(q,\l,B) \text{ for some } B}\ .     
  \end{equation*}
  If the number of variables is fixed, then we can determine whether
  $\ao(x,v) \in \out(K_p)$ in \NP\ (resp., \PTIME\ if $|V|$ is fixed).
\end{lemma}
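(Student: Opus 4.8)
The plan is to exploit that, under simple futures, the $B$-component of a configuration of $\Ss_L$ carries almost no usable information. Since $L\incl\Ssp\cup\Ssp\Sext$ is prefix closed, every element of $B\incl\pref(L)$ is either a bare $\spawn(p)$ or a pair $\spawn(p)\ao(x,v)$; and to fire $\abw(x,v)$ one needs $\es\neq B'\incl B$ with $B'\cdot\set{\ao(x,v)}\incl\pref(L)$, which forces every element of $B'$ to be a bare $\spawn(p)$ with $\spawn(p)\ao(x,v)\in L$. Hence the pairs in $B$ are dead weight, and the only relevant quantity is the set $T=\set{p:\spawn(p)\in B}$, which grows monotonically along any run. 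Writing $E(T)=\set{(x,v):\spawn(p)\ao(x,v)\in L\text{ for some }p\in T}$ for the currently enabled set of $\abw$-actions, a run of $\Ss_L$ from $(p,\l_\init,\es)$ is faithfully described by the sequence of triples $(q,\l,E)$ it traverses, where $q$ is the $\Ss$-configuration, $\l\colon X\to V$, and $E\incl X\times V$ is increasing: $\ar(x,v)$ is gated by $\l(x)=v$; $\abw(x,v)$ leaves $q$ untouched, is gated by $(x,v)\in E$, and sets $\l(x):=v$; $\spawn(p)$ with $\spawn(p)\in L$ changes $q$ (and the stack) and adds $U_p:=\set{(x,v):\spawn(p)\ao(x,v)\in L}$ to $E$. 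Finally $\ao(x,v)\in\out(K_p)$ iff some such run takes an $\Ss$-transition labelled $\ao(x,v)$.

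The next step is a phase decomposition. As $E$ is a monotone subset of $X\times V$, along any run it takes at most $|X|\cdot|V|+1$ distinct values; call a $\spawn$-step \emph{useful} if it strictly enlarges $E$, so a run has at most $k\le|X|\cdot|V|$ useful steps, spawning some $p_1,\dots,p_k$ in this order, and $E$ is constant between consecutive useful steps. The algorithm guesses the sequence $p_1,\dots,p_k$ (each among the at most $|L|$ processes with $\spawn(p_i)\in L$), which determines $E_0=\es$ and $E_i=E_{i-1}\cup U_{p_i}$, and then verifies by ordinary pushdown reachability that some run visits phases $0,1,\dots,k$ in turn, where the enabled $\abw$-actions in phase $i$ are exactly $E_i$, where the step from phase $i$ to phase $i{+}1$ is the $\spawn(p_{i+1})$-step, and where in phase $k$ an $\ao(x,v)$-transition is taken. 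Concretely one builds one pushdown system $\Pp$ with control states $(q,\l,i)$, $i\in\set{0,\dots,k}$, that: lifts the $\tau$, $\ar$ (gated by $\l$), $\ao$, and $\spawn(p')$ (with $\spawn(p')\in L$) transitions of $\Ss$ within a phase; has self-loops $\abw(x,v)$ for $(x,v)\in E_i$ updating $\l$; for every pushdown rule of $\Ss$ labelled $\spawn(p_{i+1})$ has a corresponding rule of $\Pp$ moving from phase $i$ to phase $i{+}1$; and accepts upon reaching a phase-$k$ control state from which $\Ss$ has an $\ao(x,v)$-transition. Then $\ao(x,v)\in\out(K_p)$ holds for this guess iff $\Pp$ reaches an accepting configuration from $(p,\l_\init,0)$.

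Soundness of a guess is immediate: a run of $\Pp$ becomes a run of $\Ss_L$ by actually performing $\spawn(p_{i+1})$ at each phase change (legal, since $\spawn(p_{i+1})\in L$) and the $\abw$-self-loops, all available once phase $i$ is entered because $E_i=E(\set{p_1,\dots,p_i})$. For completeness, a genuine run of $\Ss_L$ witnessing $\ao(x,v)\in\out(K_p)$ has useful spawns that, read in order, form a legal guess $p_1,\dots,p_k$; within phase $i$ this run uses only $\tau,\ar,\ao$, $\abw(x,v)$ with $(x,v)\in E_i$, and non-useful $\spawn$-steps (a spawn enlarging $E$ beyond $E_i$ would be the next useful spawn, and non-useful spawns add nothing beyond $E_i$), so it projects to an accepting run of $\Pp$. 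Since $\Pp$ has $O(|Q|\cdot|V|^{|X|}\cdot(k{+}1))$ control states, which is polynomial once $|X|$ is fixed (recall $k\le|X|\cdot|V|$), the verification is polynomial, giving the \NP\ bound. If moreover $|V|$ is fixed, then $k$ is bounded by a constant and there are only $O(|L|^{k})$ candidate sequences $p_1,\dots,p_k$, so one enumerates all of them and runs the verification for each, yielding \PTIME.

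The main subtlety — and the reason a naive ``least fixpoint of spawnable processes'' fails — is that the root is a single sequential pushdown process: even if $p$ and $p'$ are each individually spawnable, one run need not pass through both a $\spawn(p)$- and a $\spawn(p')$-transition (they may lie on different branches of the pushdown's computation tree), so ``what can eventually be unlocked'' cannot be decoupled from ``the order in which it is unlocked along one run''. This is exactly why one must guess an \emph{ordered} sequence of useful spawns and thread all phases through a single pushdown, so that the stack is carried correctly across phase boundaries; it is also the source of the nondeterminism, since for unbounded $|V|$ there are exponentially many such sequences. The routine part is the final pushdown-reachability call; the delicate point is the bookkeeping for the phase decomposition, in particular that within phase $i$ exactly the actions enabled by $E_i$ — and no more — are available and needed.
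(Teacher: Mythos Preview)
Your proposal is correct and follows the same core idea as the paper: the $B$-component grows monotonically along any run of $\Ss_L$, so one can guess the (polynomially many) phases in advance and reduce to reachability in a polynomial-size pushdown. The paper's proof states exactly this in two sentences and does not elaborate further.

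Your argument is more detailed and contains a genuine refinement. The paper tracks the full set $B\incl\pref(L)$ and notes only that it changes at most $|\pref(L)|$ times; you observe that the two-letter words in $B$ are dead weight and that only the enabled set $E\incl X\times V$ matters, bounding the number of phases by $|X|\cdot|V|$ rather than $|\pref(L)|=O(|\Ssp|\cdot|X|\cdot|V|)$. For the \NP\ bound this makes no difference, but it is what makes your \PTIME\ argument go through cleanly: with $|X|$ and $|V|$ fixed the number of phases is a constant, so the number of guessed sequences $p_1,\dots,p_k$ is $|\Ssp|^{O(1)}$, polynomial. The paper's terse proof does not spell out the \PTIME\ case, and a naive enumeration of all monotone chains of subsets $B_0\subsetneq\cdots\subsetneq B_m$ in $\pref(L)$ would not be polynomial even for fixed $|X|,|V|$ (since $|\pref(L)|$ still grows with $|\Ssp|$); your abstraction to $E$ is precisely the missing ingredient. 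Your discussion of why a fixpoint of ``eventually spawnable processes'' fails---because the stack must be threaded through all phases---is a helpful sanity check but not a different proof strategy.

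One cosmetic point: when you write ``$O(|Q|\cdot|V|^{|X|}\cdot(k{+}1))$ control states'', $|Q|$ should be read as the size of the finite control of the underlying pushdown automaton, not the (infinite) configuration space $Q=Q_l\cdot\Gamma^*$ of $\Ss$; this is clearly what you intend, but it would be worth making explicit.
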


\begin{proof}
  From the pushdown system of $\Ss$ we can construct a pushdown system
  $\Pp^p$ for $K_p$, by adding the $\l$ and $B$ component to the
  finite control. Doing this yields a pushdown automaton of
  exponential size, because there are exponentially many $B$. However,
  since the $B$ component grows monotonically, we can guess beforehand
  the polynomially many changes and check reachability on the new
  pushdown of polynomial size. 
\qed
\end{proof}

As a consequence we obtain with similar arguments as for signatures:

\begin{lemma}\label{lem:outk} 
  Let $\Ss$ be a pushdown dynamic parametric process with simple  futures and
  a fixed number of variables.
  Then for every $k\geq 0$, we can check membership in 
 $\out(\Ext_k)$ in \NP\ (resp., \PTIME\ if $|V|$ is fixed).
\end{lemma}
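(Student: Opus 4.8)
The plan is to mimic the proof of Lemma~\ref{lem:sigk}, but with the coarser abstraction $\out(\cdot)$ in place of $\sig(\cdot)$, and to be careful that the complexity stays inside \NP\ (resp.\ \PTIME). I would argue by induction on $k$. For the base case $k=0$, every reachable configuration has the shape $(q,\l,\es)$, so $\spawn(p)\,\gamma\in\out(\Ext_0)$ exactly when $\gamma\in\out(K_p)$ for the language $K_p$ of Lemma~\ref{lem:one-step-out} taken with the empty hypothesis $L=\es$; that lemma gives an \NP\ test (resp.\ \PTIME\ if $|V|$ is fixed). For the inductive step, Lemma~\ref{l:hypo} together with Corollary~\ref{cor:simple-out} tells us that $\Ext_{k+1}$, hence also $\out(\Ext_{k+1})$, is obtained by running the system under hypothesis with $\out(\Ext_k)$ substituted for $\Ext_k$: concretely $\spawn(p)\,\gamma\in\out(\Ext_{k+1})$ iff $\gamma\in\out(K_p)$ where $K_p$ is formed using the prefix-closed set of outputs $L=\out(\Ext_k)$. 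Since such an $L$ has exactly the shape required by Lemma~\ref{lem:one-step-out}, each query is again an \NP\ (resp.\ \PTIME) test.

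The delicate point is that the inductive step seems to require the \emph{exact} set $\out(\Ext_k)$ as a hypothesis, and a priori certifying that an element is \emph{not} in it is a co-\NP\ task. What rescues the \NP\ bound is that non-membership is never needed: the universe $U=\Ssp\cup\Ssp\cdot\Sext$ has polynomial size (here $|X|$ is fixed, so $|U|=O(|\Ss|\cdot|V|)$), and $\out(\Ext_0)\incl\out(\Ext_1)\incl\cdots$ is an increasing chain inside $U$, hence stabilizes after at most $|U|$ steps; so for deciding $\gamma\in\out(\Ext_k)$ we may cap $k$ at $|U|$. To decide $\gamma\in\out(\Ext_k)$ in \NP\ I would guess a chain $D_0\incl D_1\incl\cdots\incl D_k$ of subsets of $U$ (polynomially many, each of polynomial size) with $\gamma\in D_k$, an \NP-witness that each $\delta\in D_0$ is produced at level $0$, and an \NP-witness (as furnished by the proof of Lemma~\ref{lem:one-step-out}) that each $\delta\in D_{i+1}$ is produced by the corresponding $K_p$ under hypothesis $D_i$. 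Verifying all these witnesses shows, by induction using monotonicity (Lemma~\ref{l:mono}) and Corollary~\ref{cor:simple-out}, that $D_i\incl\out(\Ext_i)$ for every $i$, whence $\gamma\in\out(\Ext_k)$; conversely the genuine chain $D_i=\out(\Ext_i)$ together with the witnesses of Lemma~\ref{lem:one-step-out} is an accepting certificate. The whole certificate is of polynomial size, so the procedure runs in \NP.

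When $|V|$ is fixed as well, Lemma~\ref{lem:one-step-out} gives a deterministic polynomial-time test, so nothing needs to be guessed: I would simply compute the chain $\out(\Ext_0),\out(\Ext_1),\dots$ explicitly, each set from the previous one by polynomially many \PTIME\ membership queries, stopping once it stabilizes (after at most $|U|$ steps) or level $k$ is reached, and read off whether $\gamma$ lies in the resulting set. The main obstacle is exactly the \NP\ bookkeeping of the previous paragraph: one must check that guessing an \emph{under}-approximating chain with only positive certificates is both sound --- by monotonicity no over-approximating hypothesis can creep in --- and complete --- the true chain is a valid guess --- so that no co-\NP\ ingredient is ever invoked; everything else is a routine transcription of the signature argument.
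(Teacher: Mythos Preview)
The proposal is correct and follows the approach the paper intends (the paper gives no proof beyond pointing to Lemma~\ref{lem:sigk}, and you correctly transcribe that argument with $\out$ in place of $\sig$, Corollary~\ref{cor:simple-out} in place of Lemma~\ref{lem:sig-equal}, and Lemma~\ref{lem:one-step-out} in place of Lemma~\ref{lem:one-step}). Your explicit treatment of the \NP\ case---guessing a polynomial-length increasing chain of under-approximations inside the polynomial universe $\Ssp\cup\Ssp\cdot\Sext$ and certifying each layer with the witnesses of Lemma~\ref{lem:one-step-out}, with soundness via monotonicity and completeness via the true chain---is precisely what is needed to turn the deterministic level-by-level computation into an \NP\ procedure, and is in fact more careful than anything the paper spells out.
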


Finally we obtain:

\begin{theorem}\label{th:out}
  The reachability problem for pushdown dynamic parametric processes with
  simple  futures  and a fixed number of variables is 
  \NP-complete (resp., \PTIME\ if $|V|$ is fixed). 
\end{theorem}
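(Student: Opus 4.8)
The plan is to combine the machinery developed for simple futures with a level-wise fixpoint computation, exactly as in the proofs of Theorem~\ref{thm:main} and Theorem~\ref{th:sig}, but tracking the cheaper invariant $\out(\Ext_k)$ instead of $\core(\Ext_k)$ or $\sig(\Ext_k)$. First I would record that, by Corollary~\ref{cor:simple-out}, using $\out(\Ext_k)$ as hypothesis yields the same external behaviors (up to $\ext(\cdot)$) as using $\Ext_k$ itself. Since the output sets $\out(\Ext_k)$ are subsets of $\Ssp \cup \Ssp\Sext$, they live in a finite lattice (whose size is fixed once the number of variables is fixed), so the increasing chain $\out(\Ext_0) \subseteq \out(\Ext_1) \subseteq \cdots$ stabilizes after at most $|\Ssp|\cdot(1+|X|\cdot|V|)$ steps at some $m$. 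By Corollary~\ref{cor:simple-out} together with Lemma~\ref{lemma:u-two-systems}, the full external behavior language $L$ of $\Ss$ coincides with the behavior language of the system run under hypothesis $\out(\Ext_m)$; then reachability reduces to checking whether there is a run from $(q_\init,\l_\init,\es)$ under that hypothesis producing $\ao(x,\#)$ for some $x$.

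For the complexity bound I would proceed as follows. For the upper bound: by Lemma~\ref{lem:outk} we can decide membership in $\out(\Ext_k)$ in $\NP$ (resp.\ $\PTIME$ when $|V|$ is fixed), and the fixpoint stabilizes after a constant number of rounds (constant because the number of variables is fixed), so iterating Lemma~\ref{lem:one-step-out} a constant number of times to compute $\out(\Ext_m)$ stays in $\NP$ (resp.\ $\PTIME$); the final reachability check is one more application of Lemma~\ref{lem:one-step-out} with $L = \out(\Ext_m)$ and target output $\ao(x,\#)$, restricting to consistent runs via intersection with the regular set $\Consistent$ as in the previous proofs. Since $G=\es$ in this setting the consistency constraint is vacuous, which slightly simplifies matters. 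For the lower bound: $\NP$-hardness should follow by a straightforward reduction from a suitable $\NP$-complete problem — I would encode SAT (or reachability in acyclic systems where guessing a polynomial-length witness is both necessary and sufficient) into a pushdown simple-futures process with a non-fixed value domain $V$, using the domain to encode truth assignments; the $\spawn$/output mechanism lets a parent collect, across several children, a consistent witness, and the pushdown lets one child verify a clause. The Proviso (initial value neither read nor written) does not obstruct this, since one can reserve a fresh value as $v_\init$.

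I expect the main obstacle to be the lower bound, specifically pinning down a clean reduction that genuinely needs all three features (recursion, parametric spawn, non-fixed $V$) to force $\NP$-hardness rather than $\PTIME$: one must ensure that the only way to drive the root to output $\#$ is to ``guess'' an object of exponential size (a satisfying assignment), realized as the interplay of the register valuations written by spawned children; care is needed because the set semantics collapses multiplicities, so the reduction must not rely on counting. On the upper-bound side the one subtlety is verifying that the polynomial-size-$B$ argument of Lemma~\ref{lem:one-step-out} — guessing the polynomially many monotone changes to the $B$-component up front — composes correctly through the constantly many fixpoint rounds; but since each round only queries membership of a single output $\ao(x,v)$ and the number of rounds is constant, the overall guess stays polynomial and the whole procedure remains in $\NP$ (resp.\ $\PTIME$ for fixed $|V|$).
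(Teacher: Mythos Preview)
Your upper-bound argument matches the scheme the paper sets up through Lemmas~\ref{lem:one-step-out} and~\ref{lem:outk}; the paper itself does not spell out the upper bound in the proof of Theorem~\ref{th:out} beyond pointing to those lemmas. One correction: you first (correctly) bound the number of fixpoint rounds by $|\Ssp|\cdot(1+|X|\cdot|V|)$, but then call this ``a constant number of rounds (constant because the number of variables is fixed)''. It is not constant --- both $|\Ssp|$ and $|V|$ vary with the input even when $|X|$ is fixed. The bound is polynomial, which still suffices for \NP\ because the witnesses for the polynomially many membership tests of Lemma~\ref{lem:one-step-out} can all be guessed together in a single nondeterministic pass; but the wording must be fixed, and your argument that the rounds ``compose correctly'' should be stated for polynomially many rounds, not constantly many.

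For the lower bound the paper also reduces from SAT, but gives a concrete construction that already works for \emph{finite-state} systems: the root guesses an assignment by writing $(1,b_1),\ldots,(n,b_n)$ into its single register; the child with initial state $i$ reads $(i,b_i)$ and writes back the indices of the clauses satisfied by setting $x_i=b_i$; the root outputs $\#$ after reading $C_1,\ldots,C_m$. So your expectation that the reduction ``genuinely needs all three features (recursion, parametric spawn, non-fixed $V$)'' is off --- the pushdown plays no role in hardness, and your suggestion to have ``the pushdown let one child verify a clause'' is an unnecessary detour. The genuinely delicate point, which you correctly flag, is that the set semantics must not permit an inconsistent assignment; the paper handles this by having the single root commit to the assignment and the children merely report consequences.
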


\begin{proof}
  We only sketch the \NP\ lower bound, that already holds for finite-state
  systems, by a reduction from SAT. Let $C_1 \wedge \cdots \wedge C_m$ be a CNF
  formula with clauses $C_j$ over variables $x_1,\ldots,x_n$. The root
  process first spawns sub-processes with initial state $1$, then $2$, and so on
  up to $n$. Then it
  guesses a valuation $(b_1,\ldots,b_n) \in\set{0,1}^n$ by writing into
  the (unique) register $(1,b_1),\ldots,(n,b_n)$. A sub-process with
  initial state $i$ will read the value $(i,b_i)$ and then write into
  the register $C_{j_1},\ldots,C_{j_k}$, where $j_1 < \cdots < j_k$
  are the indices of those clauses that become true if $x_i$ is set to
  $b_i$. The root process needs to read $C_1,\ldots,C_n$ from the
  register in order to output $\#$.

\end{proof}


\section{Dynamic parametric processes without local variables}\label{s:nolocals}

In this section we consider another restriction of  dynamic parametric
pushdown processes, that is incomparable to the previous ones: we
allow only communication over global variables.
We show that in this case the power of the \textsf{spawn} operation is quite
limited, and that the hierarchical structure can be \emph{flattened}. 
More precisely, we give a reduction to the reachability problem of
parametric pushdown processes without local variables, where
additionally only the root  can do the \textsf{spawn} and
and will spawn just once, as its first action. 
A system with this property can be simulated by a 
\CDsystem\ as in~\cite{DBLP:conf/fsttcs/Hague11}.
We would actually need a slight extension of \CDsystems\ since the
literature considers only the variant with a single variable. 
Yet the same methods give decision algorithms for any fixed finite number of variables.
In consequence, this reduction implies that the reachability problem when each sub-process is 
realized as a pushdown system, is \PSPACE-complete.

Given a dynamic parametric process
$\Ss=(Q,G,\es,V,q_\init,v_\init,\Delta)$ without local variables
we construct a new process $\platt(\Ss)$ with one more
global variable $g_\ssp$, that we call \emph{spawn variable}.
This variable will store \textsf{spawn} demands, that is the state $p$ when
some process has performed $\spawn(p)$:
\begin{equation*}
V_{\ssp}=\set{p\in Q :
\, \spawn(p) \text{ occurs in } \D}\,.
\end{equation*}
Recall that $V_\ssp$ is finite 
since the number of \textsf{spawn} transitions in $\D$ is finite. 
The main property of $\platt(\Ss)$ will be its restricted use of $\spawn$.
 
The process $\platt(\Ss)$ is given by
$(Q',G\cup\set{g_\ssp},\es,V\cup V_\ssp,q'_\init,v_\init,\Delta')$ where 
$Q'=Q\cup\set{q'_\init,q''_\init}$
and
$\Delta'$ is obtained from $\D$ as follows.
We replace every \textsf{spawn} transition $q
\act{\spawn(p)} q'$ in $\D$ by a write $q \act{\aw(g_\ssp,p)} q'$ into the
spawn variable. In addition, from $q'_\init$ we add to $\D'$ the transition
$q'_\init \act{\spawn(q''_\init)} q_\init$, and from $q''_\init$
we add $q''_\init \act{\ar(g_\ssp,p)} p$ for every state $p\in Q$.
The result is that in $\platt(\Ss)$ the only \textsf{spawn}
operation happens from the new initial state $q'_\init$, this operation creates a
number of sub-processes all starting in state $q''_\init$. 
Subsequently, the sub-process started in $q'_\init$ goes to the state
$q_\init$ and proceeds as in $\Ss$, but instead of doing
$\spawn(p)$ it just writes $p$ into the spawn variable $g_\ssp$.
The sub-processes  that have been spawned are all in state $q''_\init$ from
which they may proceed by reading $p$ from the spawn variable.
As we can see, every reachable configuration of $\platt(\Ss)$ is a
tree of height 1,
consisting only of a root and immediate children of the root created
by the single \textsf{spawn} operation.
In order to proceed, the children of the root need to
read some $p$ value from the spawn variable.



\begin{lemma}
  A dynamic parametric process $\Ss$ without local variables  has a consistent run
  containing some external write of $\#$ if and
  only if  $\platt(\Ss)$ has one.
\end{lemma}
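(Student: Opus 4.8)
The plan is to prove the two directions of the equivalence by exhibiting explicit run-translations between $\Ss$ and $\platt(\Ss)$, exploiting the fact that in $\Ss$ there are no local variables — so the only state a sub-process can influence externally is the tuple of global variables — and that the \textsf{spawn} operation in $\Ss$ is parametric, i.e.\ it may create \emph{zero} copies. The key observation is that a \textsf{spawn}$(p)$ action in $\Ss$, after being replaced by a write $\aw(g_\ssp,p)$ in $\platt(\Ss)$, can be ``consumed'' by an arbitrary number of the copies sitting in state $q''_\init$, each of which then reads $p$ and continues exactly like a sub-process of $\Ss$ started in $p$. Because the initial \textsf{spawn} in $\platt(\Ss)$ creates an arbitrary (possibly large) number of children in $q''_\init$, we always have enough of them on hand to simulate every \textsf{spawn} issued anywhere in an $\Ss$-run.

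First I would argue the forward direction ($\Ss \Rightarrow \platt(\Ss)$). Take a consistent run $\rho$ of $\Ss$ witnessing an external write of $\#$. Since $\Ss$ has no local variables, one can first appeal to the set semantics (Corollary~\ref{c:equivalence}) and, more importantly, to the structural fact that in the absence of local variables the whole process tree behaves like a flat pool: a sub-process deep in the tree affects its ancestors only through $G$, and every \textsf{spawn} just adds another participant to that pool. Concretely, I would define a simulating run of $\platt(\Ss)$ that (i) starts by firing the single $q'_\init \act{\spawn(q''_\init)} q_\init$ step, creating as many $q''_\init$-children as there are \textsf{spawn} actions occurring in $\rho$ (a finite number, since each node's run is finite and a run uses finitely many steps); (ii) lets the root, now in $q_\init$, mimic the root of $\Ss$, turning each $\spawn(p)$ into $\aw(g_\ssp,p)$; and (iii) whenever $\rho$ has a sub-process that was created by some $\spawn(p)$ and then performed external actions $\gamma$ on $G$, schedules one of the waiting $q''_\init$-children to read $p$ from $g_\ssp$ and then replay $\gamma$. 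The external write/read actions on the original globals $g\in G$ are performed verbatim, so consistency on each $g$ is preserved; the only new variable $g_\ssp$ is written and then immediately read, and by interleaving we can make each such read directly follow its write, so $g_\ssp$-consistency also holds. The distinguished write of $\#$ in $\rho$ is reproduced by whichever simulating sub-process corresponds to the $\Ss$-node that did it.

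Next I would argue the backward direction ($\platt(\Ss) \Rightarrow \Ss$). Here a run of $\platt(\Ss)$ is, by construction, a height-$1$ tree: a root that after its first step behaves like $\Ss$ (with writes to $g_\ssp$ in place of \textsf{spawn}s), plus children each of which reads one value $p$ from $g_\ssp$ and then behaves like an $\Ss$-sub-process started in $p$. To turn this into an $\Ss$-run I would undo the flattening: each write $\aw(g_\ssp,p)$ performed by a process corresponds to a $\spawn(p)$ in $\Ss$; each child that read $p$ becomes (after deleting its initial $q''_\init\act{\ar(g_\ssp,p)}p$ step) the sub-process spawned by that $\spawn(p)$. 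One must attach each such child to \emph{some} node that issued $\spawn(p)$ — any matching node will do, since in $\Ss$ (no local variables) the parent of a sub-process is irrelevant to its external behaviour, which only touches $G$. If a value $p$ is read more than once, all those children are attached to (copies / repeated spawns from) a node that wrote $p$; if some written $p$ is never read, that just corresponds to spawning zero copies — which the parametric \textsf{spawn} allows. Projecting the run onto actions on $G$ yields the same sequence as in $\platt(\Ss)$ modulo the deletion of the $g_\ssp$-operations and of the now-removed initial \textsf{spawn}/read steps, so consistency on every $g\in G$ is inherited, and the external write of $\#$ survives.

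I expect the main obstacle to be the bookkeeping in the backward direction: showing that the attachment of children to spawning nodes can always be done so that the resulting object is a genuine m-tree (or s-tree) of \emph{finite} depth and that the global interleaving is a legal run of $\sem{\Ss}$. Since every reachable configuration of $\platt(\Ss)$ has height $1$, the reconstructed $\Ss$-run in fact only needs depth $1$ as well (all spawned children are leaves in this simulation), which sidesteps the depth issue — but one still has to verify carefully that the relabelling $\aw(g_\ssp,p)\mapsto \spawn(p)$, $\ar(g_\ssp,p)\mapsto(\text{nothing, consumed by the spawn})$ is compatible with the semantics of Figure~\ref{f:semantics}, in particular that a child's external actions on $G$ get correctly promoted to $\abr/\abw$ at the root and that this matches what $\platt(\Ss)$ did with the same actions. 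The forward direction's only subtlety is making sure enough $q''_\init$-children are created up front and that $g_\ssp$-reads can be scheduled right after the corresponding writes without disturbing $G$-consistency; both are routine once the interleaving is set up. I would therefore structure the proof as two lemmas (one per direction), each proceeding by induction on the length of the given run and maintaining the invariant that the two runs agree on the projection to actions over $G\cup\{\#\text{-write}\}$.
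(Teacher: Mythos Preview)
Your bidirectional simulation is the same strategy as the paper's. The forward direction is essentially right; note though that once you invoke the set semantics, a single $\spawn(q''_\init)$ already gives an inexhaustible supply of waiting children, since $(q''_\init,\es)$ never leaves the set---so there is no need to count spawns in $\rho$ up front.

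The backward direction contains a self-contradiction you should fix. You first (correctly) say that each flat child that read $p$ from $g_\ssp$ is to be attached to ``some node that issued $\spawn(p)$'', but you then claim the reconstructed $\Ss$-run ``only needs depth $1$''. That is false: a depth-$1$ child in $\platt(\Ss)$ may itself perform $\aw(g_\ssp,p')$; in $\Ss$ this must become a genuine $\spawn(p')$ executed \emph{by that child}, producing a grandchild and hence depth $\geq 2$. The paper's proof handles exactly this case (its ``Sub-process \textsf{spawn}'' item) by allowing the simulated $\Ss$-tree to grow deeper and maintaining only the invariant that the set of \emph{states occurring anywhere} in the $\Ss$-configuration coincides with $S'\setminus\{q''_\init\}$ together with the last value written to $g_\ssp$. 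If you drop the depth-$1$ claim and stick with your attach-to-spawning-node rule, your argument goes through (finite depth is automatic because the run is finite); as written, that paragraph undermines itself.
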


\begin{proof}
Since we consider processes without local variables the valuation of the
local variables is the empty function. 
In general, a configuration is a triple $(q,\l,S)$, but now we can omit $\l$. 
So in this proof s-trees will have the form
$(q,S)$, where $S$ is a finite set of s-trees. 
The configurations of $\platt(\Ss)$ have one more property as only 
the root  can do a \textsf{spawn}: the occurring s-trees all are of the form $(q,S)$
where $S$ is a set of pairs $(q',\es)$ for some state $q'$ --- implying that $S$ is
essentially a set of states.

Assume that the process $\Ss$ has a consistent run in the set semantics
containing some external write of $\#$.
To simulate this run, let $\platt(\Ss)$ start with a transition
$q'_\init\act{\spawn(q''_\init)} q_\init$ spawning sub-processes in state
$q''_\init$.
A \textsf{spawn} operation $\spawn(p)$ by some sub-process in $\Ss$ is replaced in
$\platt(\Ss)$ by a write $\aw(g_\ssp,p)$ or $\abw(g_\ssp,p)$ into the spawn variable.
This allows a child sub-process to wake up by moving from state $q''_\init$
to $p$.
So when simulating the run of $\Ss$, we keep the invariant that the
root processes in $\Ss$ and $\platt(\Ss)$ are in the same state, and
for every state $q$ but $q''_\init$:  state $q$ appears in a
configuration of $\Ss$ iff it appears in the corresponding
configuration of $\platt(\Ss)$. 

Conversely, consider a consistent run of $\platt(\Ss)$ reaching an
external write of $\#$. 
It must start with
$(q'_\init,\es)\sact{\spawn(q'_\init)}(q_\init,\set{q''_\init})$.
We show that for every consistent run $(q_\init,\set{q''_\init})\sact{\a'}
(q,S')$ of $\platt(\Ss)$ we can find a consistent sequence $\a$ such that
$(q_\init,\es)\sact{\a'} (q,S)$ where
$S$ consists of states from $S'$ apart $q''_\init$ plus additionally
the state $p$ from the last write $\aw(g_\ssp,p)$ to the spawn
variable in $\a'$; if there is such a write.

The proof of this claim is by induction on the length of $\a'$
considering possible transitions from $(q,S')$ one by one:

\begin{itemize}
\item Root process read or write: if $(q,S') \sact{\aw(g,v)}
  (q_1,S')$ with $g\not=g_\ssp$ then $(q,S)\sact{\aw(g,v)}
  (q_1,S)$. Same for $\ar(g,v)$.
\item Root process \textsf{spawn}: if $(q,S') \sact{\aw(g_\ssp,p)}
  (q_1,S')$ then $(q,S)\sact{\spawn(p)} (q_1,S \cup\set{(p,\emptyset)})$.
\item Sub-process read or write: let $(q,S') \sact{\abw(g,v)}
  (q,S'\cup\set{(q_2,\es)})$ for $g\not=g_\ssp$, and $(q_1,\es)
  \sact{\aw(g,v)} (q_2,\es)$ for some  $(q_1,\es) \in S'$.
  Since $q_1$ occurs in $S$ by our induction hypothesis we get $(q,S)
  \sact{\abw(g,v)}  (q,S_2)$ for some suitable $S_2$.
  Same for $\abr(g,v)$.
\item Sub-process \textsf{spawn}: let $(q,S') \sact{\aw(g_\ssp,p)}
  (q,S'\cup \set{(q_2,\es)})$ for some $q_1 \in S'$ with $(q_1,\es)
  \sact{\aw(g_\ssp,p)} (q_2,\es)$. Since $q_1$ occurs in $S$, by induction
  we can pick an s-tree $(q_1,S_1)$ within $S$ and add a sibling
  $(q_1,S_1)\sact{\spawn((p,\es))}(q_2,S_1 \cup \set{(p,\emptyset)})$. This gives us 
  $(q,S)\sact{\t}(q,S')$ with $S'$ containing $(q_2,S_1 \cup \set{(p,\es)})$.

\item Sub-process ``wake-up'': let $(q,S') \sact{\ar(g_\ssp,p))}
(q,S'\cup\set{p})$. This transition is not
simulated by any transition in $\Ss$ (note that $p$ already occurs in
$S$ by our induction hypothesis and the fact that $\a$ is consistent).
\end{itemize}
\qed
\end{proof}

\begin{theorem}
  Let the number of variables be fixed.
  The reachability problem for pushdown dynamic parametric processes
  without local variables is \PSPACE-complete.
\end{theorem}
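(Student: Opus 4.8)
The plan is to prove the two bounds separately: for membership I would reuse the flattening lemma just established, and for hardness I would embed Hague's parametric pushdown model back into ours.

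\emph{Membership.} First I would invoke the preceding lemma, reducing reachability for a pushdown dynamic parametric process $\Ss$ without local variables to reachability for $\platt(\Ss)$. The gain is that in $\platt(\Ss)$ spawning is trivial: it happens exactly once, at the very start, from $q'_\init$, so every reachable configuration is a flat tree --- one \emph{root} process (a pushdown system that spawns once and then runs $\Ss$ from $q_\init$) together with arbitrarily many identical \emph{child} processes (pushdown systems that first read the spawn variable $g_\ssp$ and then run $\Ss$ from the state read off $g_\ssp$). The root and the children communicate only through the finitely many bounded global variables in $G\cup\set{g_\ssp}$, and since the number of variables is fixed this shared store ranges over a set whose size is polynomial in $|V|$ and in the size of $\Ss$ (here one uses that $V_\ssp$ is finite and bounded by the size of $\Ss$). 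I would then observe that this is exactly a \CDsystem\ in the sense of~\cite{DBLP:conf/fsttcs/Hague11}, save that finitely many bounded registers are used in place of the single one considered in the literature, and appeal to the fact that reachability for such systems is decidable~\cite{DBLP:conf/fsttcs/Hague11} and in \PSPACE~\cite{DBLP:journals/jacm/EsparzaGM16}, the latter argument going through unchanged for the several-register variant. The only cosmetic adjustment is that our target --- a run containing an external write of $\#$ --- is recast as a control-state reachability query by letting the process that performs this write move afterwards into a fresh marker state.

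\emph{Hardness.} For the lower bound I would go the other way and view an arbitrary \CDsystem\ --- even with a finite-state leader, finite-state contributors and a single bounded register --- as a dynamic parametric process without local variables: the leader becomes the root, whose first action spawns arbitrarily many children all starting in the contributor's initial state; the register becomes the single global variable; leader and contributor moves are copied verbatim; and the leader's target state is made to trigger a dedicated write of $\#$. A consistent run of the resulting process that contains such a write exists iff the leader of the original system can reach its target, so \PSPACE-hardness of \CDsystem\ reachability~\cite{DBLP:journals/jacm/EsparzaGM16} transfers to our problem.

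\emph{Where the real work sits.} The delicate point is the \PSPACE\ upper bound for the flattened system, for two reasons: the children carry unbounded pushdown stacks, and the shared memory is a tuple of registers rather than the single register treated in the literature --- and, as noted earlier in the paper, one cannot simply encode several registers by one, since a write to one component must leave the others untouched while a process does not record their current values. The way around both is the monotonicity phenomenon used throughout this paper: a configuration is reachable iff it is reachable by a run in which the set of register tuples ever written only grows; this set has polynomially many elements, so one may guess the polynomially long sequence of tuples that ever appear and, within each phase delimited by two successive such tuples, run every process --- root or child --- in isolation, which is plain pushdown reachability and hence in \PTIME. Verifying that the algorithm of~\cite{DBLP:journals/jacm/EsparzaGM16} still fits in \PSPACE\ once the single register is replaced by finitely many is the one routine-but-unavoidable check I would still have to carry out in detail.
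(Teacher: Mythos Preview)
Your approach is essentially the paper's: flatten via the preceding lemma, cast $\platt(\Ss)$ as a \CDsystem\ (leader from $q_\init$, contributors from $q''_\init$), invoke the \PSPACE\ bound of~\cite{DBLP:journals/jacm/EsparzaGM16} after noting that the fixed-number-of-registers variant is handled by the same argument, and get hardness by the reverse embedding. Your extra paragraph sketching why multiple registers do not break \PSPACE\ goes beyond what the paper spells out (it simply asserts this), which is fine.

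One concrete slip in the hardness half: you write that already a \CDsystem\ ``with a finite-state leader, finite-state contributors and a single bounded register'' suffices. It does not --- for finite-state \CDsystems\ reachability is \NP-complete, not \PSPACE-hard; the \PSPACE\ lower bound in~\cite{DBLP:journals/jacm/EsparzaGM16} requires pushdown contributors (or a pushdown leader). The fix is immediate: carry out the same embedding with \emph{pushdown} \CDsystems, which still sit inside pushdown dynamic parametric processes without local variables, exactly as the paper does.
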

\begin{proof}
  If $\Ss$ is given by a pushdown automaton, then so is
  $\platt(\Ss)$. 
  Now $\platt(\Ss)$ has only one $\spawn$  at the very beginning of
  its execution. 
  Such a system can be simulated by a
  \CDsystem~\cite{DBLP:conf/fsttcs/Hague11,DBLP:conf/cav/Durand-Gasselin15,DBLP:journals/jacm/EsparzaGM16} of
  the same size: the leader system $D$ is $\platt(\Ss)$ with the initial
  state $q_\init$, and the contributor system $C$ is $\platt(\Ss)$
  with the initial state $q''_\init$. 
  A small obstacle is that in the literature \CDsystems\ were defined
  with only one variable, while we need at least two.
  It can be checked that the reachability problem for \CDsystems\
  given by pushdown automata is
  \PSPACE-complete~\cite{DBLP:conf/cav/Durand-Gasselin15,DBLP:journals/jacm/EsparzaGM16}
  even for systems with more than one variable as long as the number
  of variables is fixed.
  Our reduction then shows that our reachability problem is in \PSPACE.
  Since \CDsystems\ can be directly simulated by our model we also get  the
  \PSPACE\ lower bound.
\qed
\end{proof}





\section{Conclusions}\label{s:conclusion}

We have studied systems with parametric process creation where sub-processes may communicate
via both global and local shared variables.
We have shown that under mild conditions, reachability 
for this model is decidable. 
The algorithm relies on the abstraction of the behavior of
the created child sub-processes by means of finitely many minimal behaviors. 
This set of minimal behaviors is obtained by a fixpoint computation
whose termination relies on well-quasi-orderings.
\igw{changed}This bottom-up approach is different from the ones used
before~\cite{DBLP:conf/fsttcs/Hague11,DBLP:journals/jacm/EsparzaGM16,DBLP:conf/concur/TorreMW15}. 
In particular, it avoids computing a downward closure, thus showing that
computability of the downward 
closure is not needed in the general decidability results
on flat systems from~\cite{DBLP:conf/concur/TorreMW15}.

We have also considered special cases for pushdown dynamic parametric
processes where we 
obtained solutions of a relatively low complexity.   
In absence of local variables we have shown that reachability 
can be reduced to reachability for systems without dynamic sub-process
creation,
implying that reachability is \PSPACE-complete.
For the (incomparable) case where communication is restricted to child
sub-processes reporting their results to siblings and
their parents, we have also provided a dedicated method with \DEXPTIME\
complexity. 
We conjecture that this bound is tight. 
\igw{added}Finally, when sub-processes can report results only to their parents,
the problem becomes just \NP-complete.

An interesting problem for further research is to study
the reachability of a particular \emph{set} of configurations 
as considered, e.g., for dynamic pushdown networks
\cite{BouajjaniMOT05}.
One such set could, e.g., specify that all children of a given sub-process have terminated.
For dynamic pushdown networks with nested or contextual locking, such kinds of barriers
have been considered in \cite{DBLP:conf/vmcai/GawlitzaLMSW11,DBLP:conf/sas/LammichMSW13}.
It remains as an intriguing question whether or not similar concepts
can be handled also for 
dynamic parametric processes.

\bibliographystyle{abbrv}
\bibliography{../bib}

\end{document}